\newcommand{\myqctmp}[2][0.25]{\Qcircuit @C=#2em @R=#1em @!R}
\newtheorem{theorem}{Theorem}[section]
\newtheorem{proposition}[theorem]{Proposition}
\newtheorem{lemma}[theorem]{Lemma}
\newtheorem{corollary}[theorem]{Corollary}
\newtheorem{conjecture}[theorem]{Conjecture}
\theoremstyle{definition}
\newtheorem{remark}[theorem]{Remark}
\newtheorem{assumption}[theorem]{Assumption}
\begin{document}
	
\title{Coordinate-energy transformation and the one-point function for the Heisenberg-Ising XXZ spin-$\tfrac{1}{2}$ chain on the ring}
\author{Eric I.~Corwin, Nikolaus Elsaesser, Axel Saenz}

\date{}
\maketitle

\begin{abstract}
We provide explicit formulas to diagonalize the Hamiltonian for the Heisenberg-Ising XXZ spin-1/2 chain on a discrete ring. Two distinguished bases for the Hilbert space include the basis labeled by the coordinates of the particle configurations and the basis obtained from the eigenvectors of the Hamiltonian. We diagonalize the Hamiltonian by providing an explicit transformation between these two distinguished bases. The transformation from the coordinate basis to the eigenbasis is given by the well-known coordinate Bethe Ansatz. Our contribution is the transformation from the eigenbasis to the coordinate basis, which we call the inverse coordinate Bethe Ansatz transformation/formula. We prove that the inverse coordinate Bethe Ansatz transformation is indeed the inverse of the transformation obtained from the Bethe Ansatz for the case of $N=2$ particles and a ring of odd length $L$ with $|\Delta|<\frac{L-1}{2L}$ and $\Delta$ outside some exceptional finite set. The case of $N>2$ particles and a ring of odd length $L$ is numerically confirmed for different arbitrary choices of parameters and is left as a conjecture. Additionally, assuming that the conjecture is true, we derive an exact formula for the one-point function of the system through special identities for the Izergin-Korepin determinant. Moreover, if the conjecture is true, this implies that the Bethe Anstaz is complete. 

\end{abstract}

\tableofcontents

\section{Introduction}

\emph{Spin} is a discrete conserved quantity for particles in quantum mechanics arising from rotational symmetry. In classical mechanics, rotational symmetry leads to conservation of angular momentum. In particular, angular momentum arises from an infinite dimensional representation of $SO(3)$, the group of spatial rotations. In quantum mechanics, spin arises from finite representations of $SU(2)$, a double cover of $SO(3)$\footnote{Due to Noether's theorem (see \cite[Sec.~4.5]{schwichtenberg_physics_2018}), a rotational symmetry of spacetime along a given axis leads to conserved total angular momentum along that axis. For a particle like an electron, orbital angular momentum and spin are what comprise its total angular momentum.}. Spin interacts with magnetic fields as if the particle were rotating about its center of mass, creating an analogy between bar magnets and spin particles. In 1921, the Stern-Gerlach experiment verified that spin is quantized, i.e.~discrete. For a more detailed description of quantum spin, see \cite[Sec.~3.7, 4.5, 8.5]{schwichtenberg_physics_2018}.

The \emph{Heisenberg-Ising XXZ spin-1/2 chain} is a nearest neighbour spin-spin interaction model with an anisotropy for the spin in the z-direction. It is a well-known and well-studied model in the math and physics literature due to its rich mathematical structure. Each particle has spin $1/2$ or $-1/2$, and we refer to these particles, respectively, as \emph{up-spin} and \emph{down-spin} particles. The XXZ spin-1/2 chain is a specific example of a general class of quantum spin chain models. See \cite{parkinson_introduction_2010} for a general introduction of quantum spin chains. 

In 1931, Hans Bethe developed an Ansatz -- a method -- to obtain a collection of eigenvectors for the Hamiltonian of the XXX spin-1/2 chain, i.e.~the isotropic limit of the XXZ spin-1/2 chain, on a one-dimensional lattice \cite{bethe_zur_1931}. This method is called the \emph{Bethe Ansatz}, and it generalizes to the XXZ spin-1/2 chain and many other models with similar symmetries. Intense study on the XXZ and related models such as the six-vertex model followed Hans Bethe's original ansatz, with much effort put towards algebraic techniques developed in the late 20th century (\cite{Kulish:1981bi}, \cite{Jimbo1994-mc}, \cite{Jimbo1992-zh}, \cite{Davies1993-su}, \cite{Baxter1972-xm}, \cite{Korepin1993-yu}). See \cite{Faddeev1995-io} for a history on one of the most widely used and studied algebraic approaches, called the Quantum Inverse Scattering Method or Algebraic Bethe Ansatz\footnote{Given the vast amount of research on the XXZ spin chain and Bethe Ansatz, we cannot hope to capture all of the work done. Interested readers should look at references within those cited here as well.}. See \cite{baxter_exactly_1982, caux_bethe_2014, Korepin1993-zf,Arutyunov_2026} for a thorough introduction of the Bethe Ansatz.

It is not clear, a priori, that the Bethe Ansatz produces all of the eigenvectors for the Hamiltonian of the XXZ spin-1/2 chain. The \emph{Bethe eigenvectors}, i.e.~eigenvectors produced by the Bethe Ansatz, are determined by solutions of a system of coupled algebraic equations, called \emph{Bethe equations}, where the degree of the equations depends on the length of the spin chain and the size of the system of equations depends on the number of up-spin particles. Some \emph{Bethe roots}, i.e.~solutions of the Bethe equations, may lead to trivial Bethe vectors, Bethe eigenvectors may be linearly dependent, or the Bethe vectors may not generate the entire vector space. 

There are precise arguments to show completeness of the Bethe Ansatz in the literature. In \cite{tarasov_completeness_1995}, the authors give an existence proof for completeness by showing completeness for special parameters and showing that deformations are algebraic and non-degenerate. In \cite{borodin_spectral_2015, borodin_spectral_2015-1}, the authors provide a Plancherel-type isomorphism between the canonical coordinate basis and the Bethe vectors for interacting particle systems on the infinite integer lattice, including the asymmetric simple exclusion process and the XXZ spin-1/2 chain. In this context, our transformations can be viewed as Plancherel bijection formulas, an invertible correspondence between functions and their spectral data. In \cite{GORBOUNOV2017282}, completeness and solutions for the Bethe Ansatz equations are studied in relation to quantum cohomology for a degeneration of the asymmetric six vertex model. Under the \emph{string hypothesis}, the assumption that solutions to the Bethe Ansatz Equations (BAE) have fixed real parts, there are proofs for completeness in the thermodynamic limit ($N,L\to\infty$ with $N/L=constant$) \cite{kirillov_completeness_1997}, \cite{Takahashi1971-uq}. See \cite{PhysRevE.88.052113} and \cite{Jiang2018-vm} and references therein for further context on completeness of the Bethe ansatz. Lastly, we would like to mention the very recent work of Shinsuke Iwao and Kohei Motegi \cite{Iwao2025-vv} where completeness of the Bethe ansatz for the Totally Asymmetric Simple Exclusion Process (TASEP) is proven rigorously via a counting argument with the BAE realized as algebraic equations on a Riemann surface. This approach is similar to the ideas from Sylvain Prolhac (\cite{Prolhac2024-fr}, \cite{Prolhac2020-nn}, \cite{Prolhac2022-xq}).

We consider a constructive approach to the completeness of the Bethe Ansatz. The coordinate Bethe Ansatz provides a basis transformation, assuming the Bethe Ansatz is complete, from the coordinate basis to the energy (i.e.~eigenvector) basis. We provide a closed formula for the inverse transformation, from the energy basis to the coordinate basis, in \Cref{c:main_v3}. We prove the case of $N=2$ up-spin particles in \Cref{s: N=2 proof} for $\Delta < (2L)/(L-1)$ and some additional conditions given in \Cref{a:N=2 proof assumptions} that exclude a finite number of values of $\Delta$. For several cases with $N>2$, we have numerical evidence that our conjecture is true. As a direct consequence, if \Cref{c:main_v3} is true, the following conjecture follows.

\begin{conjecture}\label{c:main_v2}
    The Bethe Ansatz completely determines the spectrum of the Heisenberg-Ising XXZ spin-$\tfrac{1}{2}$ chain on a 1D periodic lattice of odd length $L$  with $N$ up-spins and anisotropy parameter $\Delta \in \mathbb{R}$ for all but a finite number of values of $\Delta$ such that $|\Delta| < \varepsilon$ for some $\varepsilon(N,L) = \varepsilon>0$.
\end{conjecture}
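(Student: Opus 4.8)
The plan is to deduce Conjecture \ref{c:main_v2} from the main inversion formula in \Cref{c:main_v3}. The key observation is that \Cref{c:main_v3} supplies an explicit closed form for the inverse of the coordinate Bethe Ansatz transformation, i.e.~a map from the energy (eigenvector) basis back to the coordinate basis. If this inverse exists as a genuine two-sided inverse, then the transformation between the two bases is invertible, which is precisely the statement that the Bethe vectors span the relevant weight subspace (the $N$-particle sector). Completeness of the spectrum is then immediate, since an invertible change of basis carries the standard coordinate basis of the sector onto the Bethe eigenvectors, so these eigenvectors form a basis and exhaust the spectrum on that sector. Summing over all sectors $N = 0, 1, \dots, L$ then yields completeness on the full space $(\mathbb{C}^2)^{\otimes L}$.

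First I would fix the weight sector with $N$ up-spins, whose dimension is $\binom{L}{N}$, and regard the coordinate Bethe Ansatz as a matrix $M(\Delta)$ whose rows are indexed by coordinate basis vectors and whose columns are indexed by the Bethe roots (solutions of the Bethe equations). Assuming \Cref{c:main_v3}, the explicit inverse formula furnishes a matrix $M^{-1}(\Delta)$ with $M(\Delta) M^{-1}(\Delta) = \mathrm{Id}$, at least as a formal identity in the entries; the content of \Cref{c:main_v3} is exactly this product identity. The second step is to promote this to a statement about genuine linear independence and spanning: an explicit left or right inverse of a square matrix forces the matrix to be invertible, hence of full rank, hence its columns (the Bethe vectors) form a basis of the sector. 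The counting that the number of Bethe roots equals $\binom{L}{N}$ is what makes $M(\Delta)$ square, and this is the combinatorial completeness input underlying the whole picture.

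The main obstacle, and the reason the statement is phrased with the exceptional-set caveat ``for all but finitely many $\Delta$ with $|\Delta| < \varepsilon$,'' is analytic rather than algebraic. The entries of $M(\Delta)$ and of the inverse formula depend on the Bethe roots, which are themselves algebraic functions of $\Delta$ that can collide, coincide, or escape to infinity as $\Delta$ varies; at such degenerate values some Bethe vectors become trivial or linearly dependent, and the inversion formula may develop spurious poles or vanishing denominators. I would therefore treat the inversion identity of \Cref{c:main_v3} as an identity of rational (or algebraic) functions of $\Delta$, valid on a Zariski-dense set, and argue that the determinant of $M(\Delta)$ — equivalently the resultant-type quantity controlling non-degeneracy — is a nonzero analytic function of $\Delta$ on a punctured neighborhood of $0$. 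Its isolated zeros constitute the finite exceptional set, and $\varepsilon(N,L)$ is chosen so that no such zero other than possibly $\Delta = 0$ itself lies in $\{ 0 < |\Delta| < \varepsilon \}$; the small-$\Delta$ restriction reflects that the cleanest control over the Bethe roots (for instance via a perturbative or free-fermion-adjacent expansion) is available near the anisotropy-free regime.

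Finally, to make the deduction rigorous I would verify that the exceptional set is indeed finite: since the relevant determinant is a nonzero entire-or-algebraic function of $\Delta$, its zero set is discrete, and intersecting with the bounded region $|\Delta| < \varepsilon$ leaves only finitely many points. Assembling the sectors, I obtain that for all $\Delta$ in $\{|\Delta| < \varepsilon\}$ outside this finite set the Bethe Ansatz produces a complete basis of eigenvectors in every sector, which is exactly the claim of \Cref{c:main_v2}. The hard part will be establishing the nonvanishing and the precise form of the exceptional set from the inversion formula; the passage from ``invertible transformation'' to ``complete spectrum'' is then formal linear algebra.
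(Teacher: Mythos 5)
Your proposal is correct and follows essentially the same route as the paper: combine the explicit inversion formula of \Cref{c:main_v3} (which exhibits each coordinate basis vector as a linear combination of Bethe vectors, i.e.\ a one-sided inverse of the square change-of-basis matrix) with the counting statement $|\Xi| = \binom{L}{N}$ of \Cref{p:cardinality} to conclude that the Bethe vectors span, are linearly independent, and hence form an eigenbasis via \Cref{p:eigenvector}. Your additional discussion of the finite exceptional set of $\Delta$-values as zeros of an algebraic non-degeneracy condition matches the paper's genericity framing (cf.\ \Cref{a:generic} and \Cref{l:BE_limit}).
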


We determine a new closed formula for the one-point function, i.e.~the probability of measuring an up-spin particle at a specific point in space-time, assuming the validity of \Cref{c:main_v3}. Special identities and simplification lead to the following theorem for the one-point function.

\begin{theorem}\label{t:one-point function main}
    Assume the Bethe Ansatz is complete. Then, the one-point function, given by \eqref{e:one-point}, has the following formula 
\begin{equation}
    \rho(x,t) = \sum_{[\boldsymbol{\xi}], [\boldsymbol{\zeta}] \in \Xi} \ell(\mathbf{y}, \boldsymbol{\xi}) \mathcal{F}(x; \boldsymbol{\xi}, \boldsymbol{\zeta})\ell(\mathbf{y}, \boldsymbol{\zeta})
\end{equation}
where $[\mathbf{\boldsymbol{\xi}}]$ and $[\boldsymbol{\zeta}]$ are energy basis vectors,  $\ell(\mathbf{y},\mathbf{\boldsymbol{\xi}})$ is the inverse transformation, and $\mathcal{F}$ involves a double sum over power sets of $\{1,2,\ldots, N\}$.
\end{theorem}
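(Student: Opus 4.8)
The plan is to compute $\rho(x,t)$ directly from its definition \eqref{e:one-point} by evolving a fixed initial coordinate configuration $\mathbf{y}$ through the spectral decomposition supplied by the (assumed complete) Bethe Ansatz, and then to read off the bilinear form. First I would use the inverse transformation of \Cref{c:main_v3} to expand the initial state in the energy basis as $\sum_{[\boldsymbol{\xi}] \in \Xi} \ell(\mathbf{y}, \boldsymbol{\xi})\,[\boldsymbol{\xi}]$, with coefficients given by \eqref{e:bethe_sols}. Since the Hamiltonian is diagonal on the energy basis, propagation to time $t$ simply attaches the eigenvalue factor to each $[\boldsymbol{\xi}]$. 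Substituting this expansion into the quadratic expression defining the one-point function and grouping the two resulting copies of $\ell$ yields exactly the structure $\sum_{[\boldsymbol{\xi}],[\boldsymbol{\zeta}]} \ell(\mathbf{y},\boldsymbol{\xi})\,\mathcal{F}(x;\boldsymbol{\xi},\boldsymbol{\zeta})\,\ell(\mathbf{y},\boldsymbol{\zeta})$, where the kernel $\mathcal{F}$ collects the time-evolution factors together with the matrix element of the projector onto configurations having an up-spin at site $x$.

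The content of the theorem is the explicit evaluation of that matrix element as the power-set double sum recorded in \eqref{e:f-kernel}. To obtain it I would write each Bethe eigenvector in its coordinate form as a signed sum over $S_N$ of monomials in the rapidities, and then act with the projector, which restricts the coordinate sum to ordered configurations in which some particle sits precisely at site $x$. Fixing which particle occupies $x$ partitions the remaining coordinates into a block strictly to the left of $x$ and a block strictly to the right on the ring; each block then contributes a nested geometric sum in the rapidities. Crucially, assigning which rapidities feed the left block versus the right block is a choice of subset of $\{1,\ldots,N\}$, and this is precisely the origin of the double sum over power sets in \eqref{e:f-kernel}.

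After the geometric summations are carried out, the residual sums over the two rapidity permutations reorganize into determinantal objects. The decisive step is to identify these with specializations of the Izergin-Korepin determinant and to apply its special identities, which collapse the nested permutation sums into the compact closed form of the kernel. I expect this identification-and-collapse to be the main obstacle: the raw double sum over $S_N \times S_N$ weighted by scattering amplitudes does not visibly factor, and matching it to the Izergin-Korepin structure requires tracking how the left/right (domain-wall-type) geometry produces exactly the rational prefactors and argument specializations that the determinant identity requires. Once this identity is in place, reassembling it with the propagation factors furnishes $\mathcal{F}(x;\boldsymbol{\xi},\boldsymbol{\zeta})$ in the form \eqref{e:f-kernel} and completes the proof.
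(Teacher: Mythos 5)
Your opening reduction is exactly the paper's: expand $|\mathbf{y}\rangle$ in the energy basis via $\ell$, attach the factors $e^{-itE(\boldsymbol{\xi})}$, and reduce the whole problem to evaluating $\sum_{\mathbf{x}\in X(x)} u(\boldsymbol{\xi},\mathbf{x})\,u(\boldsymbol{\zeta},\mathbf{x})$, which (after rotating to $x=0$) is the entire content of the kernel $\mathcal{F}$. The gap is in your account of how that sum is evaluated. On the ring there is no ``block strictly to the left of $x$ and block strictly to the right'': once the marked particle is rotated to site $0$, a configuration in $X(0)$ is just an ordered tuple $0=x_1<x_2<\cdots<x_N<L$, and the sum over such tuples is handled by the Liu/Baik--Liu identity, which decomposes $\sum_{x_1=0<\cdots<x_N<L}\prod_j\xi_j^{x_j-j}$ into an alternating sum over break points $1<s_1<\cdots<s_k<N+1$ of nested geometric factors. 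The double sum over subsets $I_1,I_2\subset[N]$ in \eqref{e:f-kernel} does not arise from assigning rapidities to a left block versus a right block; it arises from decomposing each of the two independent permutation sums (over $\sigma$ for $\boldsymbol{\xi}$ and over $\mu$ for $\boldsymbol{\zeta}$) according to the image of $\{1,\dots,s_1-1\}$ --- one subset per rapidity family. Your positional picture would produce a single subset, not two, so it cannot be the origin of the power-set structure you are asked to derive.

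Beyond that misidentification, your sketch omits three ingredients without which the computation does not close: (i) the symmetrization identity $\sum_{\sigma,\mu}B_\sigma(\boldsymbol{\xi})B_\mu(\boldsymbol{\zeta})/\prod_j(1-F_{j,N})=F_{1,N}\,\Gamma(\boldsymbol{\xi},\boldsymbol{\zeta})$, which you do gesture at and which is indeed the Izergin--Korepin step; (ii) the partition-collapse identity \Cref{c:LSW20L6.3}, which resums the alternating sum over all partitions of $I_1^c,I_2^c$ into the single factor $F_{1,n}^{2n-3}\,\Gamma(\boldsymbol{\xi}^{-1},\boldsymbol{\zeta}^{-1};I_1^c,I_2^c)$ --- this is the genuinely new identity of the paper, proved by a polynomial-factorization, inverse-symmetry, and residue argument rather than by direct manipulation, and without it you are left with an unresummed alternating sum over compositions; and (iii) the use of the Bethe equations themselves (\Cref{l:GGF^L to GG}) to eliminate the $L$-th powers $F_{1,s}^{L}$ that the ring geometry forces into the answer --- without invoking \eqref{e:bethe_equation} at this stage the result retains explicit $L$-dependent exponents and does not reduce to the form \eqref{e:f-kernel}. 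As written, the proposal would stall before reaching the stated kernel.
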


This formula generalizes the gap probability function in \cite{saenz_domain_2022} for the case of the Heisenberg-Ising spin-1/2 XXZ chain on the integer line. Moreover, for the domain wall (i.e.~step) initial conditions and zero anisotropy (i.e.~$\Delta = 0$), asymptotic analysis of the gap probability formula led to Tracy-Widom fluctuations for the location of the edge of the up-spin profile. The result in \Cref{t:one-point function main} may provide an avenue for a precise asymptotic analysis of the spin-up profile of the periodic XXZ for the initial out-of-equilibrium conditions.

\subsection{Overview}

We give a precise definition of the \emph{Heisenberg-Ising (XXZ) spin-1/2 chain} in \Cref{s:model}. We define the Hilbert space, the Hamiltonian and the dynamics of the model. In the following section, \Cref{s:bethe}, we introduce the Bethe Ansatz and provide our main results and conjectures. 

We show \Cref{c:main_v2} to hold true for $\Delta =0$ in \Cref{s:complete}, and provide numerical evidence for $\Delta \neq 0$. We introduce a novel formula to decompose the \emph{configuration basis}, the natural basis of the Hilbert space given by the different configurations of the system, as a linear combination of the Bethe vectors; see \Cref{c:main_v3}. Along with \Cref{p:cardinality}, \Cref{c:main_v3} leads to \Cref{c:main_v2}. We provide a proof of \Cref{c:main_v3} for $\Delta = 0$ in \Cref{s:linear_independence} right after its statement. Thus, we establish \Cref{c:main_v2} for $\Delta = 0$. In \Cref{s: N=2 proof} we give a proof of \Cref{c:main_v2} for $N=2$ under the hypotheses stated in \Cref{a:N=2 proof assumptions}.

The rest of the paper is dedicated to numerical verification and applications of \Cref{c:main_v3}. In \Cref{s:functions}, we give a formula for the wave function that solves the Schr\"odinger equation for the XXZ spin-1/2 chain with deterministic initial conditions as a linear combination of Bethe vectors. Then, we give formulas for the probability functions for measuring a specific configuration, i.e.~the joint probability function, and for measuring spin-up particle at a given site, i.e.~the one-point function. We also provide a method to numerically solve the Bethe equations for small $\Delta$. In \Cref{s:numerical}, we present numerical verification of \Cref{c:main_v3} and plot the one-point function introduced in the previous section. In \Cref{s:one-point function} we re-state and prove \Cref{t:one-point function main} with new and old special identities for the Izergin-Korepin determinant.

\section{The Heisenberg-Ising XXZ spin-1/2 chain}\label{s:model}

We consider the \emph{Heisenberg-Ising XXZ spin-$\tfrac{1}{2}$} chain on a periodic lattice. The length of the lattice is given by a positive integer $L \in \mathbb{Z}_{>0}$. Each site of the lattice is occupied by a \emph{spin-up} particle, $\mid\uparrow\rangle$, or a \emph{spin-down} particle, $\mid \downarrow \rangle$. As the system evolves, the labels of the sites are updated due to local, pairwise interactions. Additionally, due to the dynamics of the model, the number of up-spin labels are conserved as time evolves. We let $N$ denote the number of up-spin particles in the system. We give the details of the model in the following sections.

\subsection{Notation}

We use the following notation throughout the paper:

\begin{itemize}
    \item Given $L \in \mathbb{Z}_{>0}$, we denote the possible locations on the ring as follows
    \begin{equation}
    [L] := \{0, 1, 2 , \dots, L-1 \}
    \end{equation}

    \item Given $N, L \in \mathbb{Z}_{>0}$, $L\geq N$, we denote and index the solutions to the equation $\eta^L = (-1)^{N-1}$ as follows
    \begin{equation}\label{e:roots_special}
        \eta(k) = \begin{cases}
            e^{\pi i (2 k +1)/L}, &\quad N \text{ is even}\\
            e^{2 \pi i k/L}, &\quad N \text{ is odd}
        \end{cases}
    \end{equation}
    for $k \in [L]$.

    \item Given $N\in \mathbb{Z}_{>0}$, we denote the group of permutations on the set $\{1, 2, \dots, N\}$ by $S_N$ so that $\sigma \in S_N$ is a bijection on $\{1, 2, \dots, N\}$ and composition is the group action.
    
\end{itemize}

\subsection{Configuration, states, and probability}\label{s:configuration}

We consider the evolution of $N$ spin-up particles on a lattice ring of length $L$. A configuration is given by the location of the $N$ spin-up particles $\mathbf{x} = (x_1, x_2, \dots, x_N)$ so that $x_i \in [L]$ and $x_i < x_{i+1}$. We denote the space of configurations by $\mathcal{X}$, which is given by
\begin{equation}\label{e:coordinate_config}
    \mathcal{X} = \{(x_1 < x_2 < \cdots < x_N) \mid x_i \in [L], \, i=1, 2, \dots, N \}.
\end{equation}
The evolution of the particle is given by the Schr\"odinger equation from quantum mechanics. Thus, we take the set of configurations $\mathcal{X}$ to be an orthonormal basis set. For each $\mathbf{x} \in \mathcal{X}$, we have a basis vector $| \mathbf{x} \rangle$ and we introduce the vector space generated by these vectors
\begin{equation}\label{e:config_space}
    \mathbb{H}:= \mathrm{Span}\{|\mathbf{x}\rangle \mid \mathbf{x} \in \mathcal{X}\}.
\end{equation}

We introduce an inner product structure for vector space $\mathbb{H}$,
\begin{equation}
    \langle \cdot | \cdot \rangle: \mathbb{H}^2 \rightarrow \mathbb{C},
\end{equation}
by taking the coordinate basis to be orthonormal. We set 
\begin{equation}\label{e:lp_rel1}
    \langle \mathbf{y} | \mathbf{x} \rangle : = \delta_{\mathbf{y}, \mathbf{x}}
\end{equation}
for any $\mathbf{x}, \mathbf{y} \in \mathcal{X}$. We extend the inner product linearly on the second argument to follow the typical bra-ket convention,
\begin{equation}\label{e:lp_rel2}
    \langle \mathbf{y} | \alpha \,\mathbf{x}_1 + \beta \, \mathbf{x}_2 \rangle = \alpha\langle \mathbf{y} | \mathbf{x}_1 \rangle +\beta \langle \mathbf{y} | \mathbf{x}_2 \rangle,
\end{equation}
for any $\mathbf{x}_1, \mathbf{x}_2, \mathbf{y} \in \mathcal{X}$ and $\alpha, \beta \in \mathbb{C}$ where the bar denotes the complex conjugate of a complex number. Lastly, we make the inner product conjugate symmetric,
\begin{equation}\label{e:lp_rel3}
    \langle \mathbf{y} | \mathbf{x} \rangle = \overline{\langle \mathbf{x} | \mathbf{y} \rangle},
\end{equation}
for any $|\mathbf{x}\rangle , |\mathbf{y}\rangle \in \mathbb{H}$. The properties \eqref{e:lp_rel1} -- \eqref{e:lp_rel3} completely determine the inner product $\langle \cdot | \cdot \rangle$. 

The vector space $\mathbb{H}$ with the inner product $\langle \cdot | \cdot \rangle$ is a Hilbert space. A vector $ | \Psi \rangle \in \mathbb{H}$ in the Hilbert space is called a \emph{state (of the configuration space)} if it is normalized $\langle \Psi | \Psi \rangle =1$. Otherwise, a non-zero vector is called an unnormalized state. States will remain normalized under the dynamics described below, and we will only consider normalized states. Given a state $|\Psi \rangle$, the probability that the state is \emph{measured} in configuration $\mathbf{x} \in \mathcal{X}$ is given by the \emph{Born rule} from the postulates of quantum mechanics
\begin{equation}\label{e:prob}
    \mathbb{P}(|\Psi \rangle = | \mathbf{x} \rangle) = \langle \Psi | \mathbf{x} \rangle \langle \mathbf{x} | \Psi \rangle = \vert \langle \mathbf{x} | \Psi \rangle \vert^2.
\end{equation}

\subsection{Spin configuration and local operators}

The evolution of the system is generated by the action of local Hamiltonian operators. We define the local Hamiltonian operators in an auxiliary vector space, given by the spin configurations. We identify the vector space $\mathbb{H}$, given by the particle configurations, as a subspace of the auxiliary vector space. Then, we define the local Hamiltonian operators on $\mathbb{H}$ by restricting the local Hamiltonian operators on the auxiliary vector space.

Consider a two dimensional complex vector space with a basis given by spin-up and spin-down labels,
\begin{equation}
    \mathbb{V} = \mathrm{Span}\{ \mid \uparrow \rangle, \mid \downarrow \rangle \}.
\end{equation}
Let $\mathbb{V}_{L} = \mathbb{V}^{\otimes L}$ be the $L$-fold tensor product of $\mathbb{V}$. A basis for $\mathbb{V}_{L}$ is given by taking $L$ tensors of basis elements of $\mathbb{V}$. For instance, the basis of $\mathbb{V}_2$ is given and denoted as follows
\begin{equation}
    \mid \uparrow \rangle \otimes \mid \uparrow \rangle = \mid \uparrow \uparrow \rangle,\, \mid \uparrow \rangle \otimes \mid \downarrow \rangle = \mid \uparrow \downarrow \rangle,\, \mid \downarrow \rangle \otimes \mid \uparrow \rangle = \mid \downarrow \uparrow \rangle,\, \mid \downarrow \rangle \otimes \mid \downarrow \rangle = \mid \downarrow \downarrow \rangle.
\end{equation}
We identify $\mathbb{H}$, given by \eqref{e:config_space}, as a subspace of $\mathbb{V}_L$. Each basis element $|X \rangle \in \mathbb{H}$ is identified with a basis element of $\mathbb{V}_L$ where the position of the up-spins are given by $\mathbf{x}= (x_1, \dots, x_N) \in \mathcal{X}$. For instance, we have the identification
\begin{equation}
    \mid \uparrow \downarrow \rangle = | 1 \rangle,\,  \mid \downarrow \uparrow \rangle = | 2 \rangle
\end{equation}
when $N=1$ and $L=2$. It is clear that this identification is unique and well-defined. Thus, we have $\mathbb{H} \subseteq \mathbb{V}_L$.

The local Hamiltonian operators are given by tensor products of the \emph{Pauli matrices} on $\mathbb{V}$,
\begin{equation}
    \sigma^x = \left(\begin{array}{cc}
         0& 1 \\
         1& 0
    \end{array}\right),\quad
    \sigma^y = \left(\begin{array}{cc}
         0& -i \\
         i& 0
    \end{array}\right),\quad
    \sigma^z = \left(\begin{array}{cc}
         1& 0 \\
         0& -1
    \end{array}\right),
\end{equation}
with respect to the basis given by $\mid \uparrow \rangle$ and $\mid \downarrow \rangle$. Then, we introduce the following operators on $\mathbb{V}_L$
\begin{equation}
    S^{\alpha}_iS^{\alpha}_j = \frac{1}{4}\left(\mathrm{Id} \otimes\cdots \otimes \mathrm{Id} \otimes \sigma^{\alpha} \otimes \mathrm{Id} \otimes\cdots \otimes \mathrm{Id} \otimes \sigma^{\alpha} \otimes \mathrm{Id} \otimes\cdots \otimes \mathrm{Id}\right)
\end{equation}
where $\mathrm{Id}$ is the identity operator on $\mathbb{V}$ and the location of the operators $\sigma^{\alpha}$ is given by $i,j \in [L]$, respectively, with $i \neq j$. In particular, the operator $\frac12\left(\sigma^x \otimes \sigma^x+ \sigma^y \otimes \sigma^y +\Delta( \sigma^z\otimes \sigma^z - \mathrm{Id})\right)$, with $\Delta \in \mathbb{R}$, has the following representation
\begin{equation} \label{e: local hamiltonian matrix}
    \left( \begin{array}{cccc}
         0 & 0 & 0 &0  \\
         0 & -\Delta & 1 &0  \\
         0 & 1 & -\Delta &0  \\
         0 & 0 & 0 &0  \\
    \end{array}\right)
\end{equation}
with respect to the basis $\mid \uparrow \uparrow \rangle, \mid \uparrow \downarrow \rangle, \mid \downarrow \uparrow \rangle, \mid \downarrow \downarrow \rangle$. Note that these operators conserve the number of up-spins. Moreover, it follows that we may restrict the following \emph{local Hamiltonian operators}
\begin{equation}\label{e:local_operator}
    h_{i,j}(\Delta) = 2S^x_{i}S^x_{j}+ 2S^y_{i}S^y_{j} +2\Delta (S^z_{i}S^z_{j}-1/4),
\end{equation}
for $i, j \in [L]$ with $i \neq j$, to the vector space $\mathbb{H}$ since it maps $\mathbb{H}$ to itself.

\subsection{Dynamics}

The evolution of the system is given by a Schr\"odinger equation. The Hamiltonian for the system is given by a sum of local operators
\begin{equation}\label{e:hamiltonian}
    H_{\mathrm{XXZ}} = \sum_{i=0}^{L-1} h_{i,i+1}(\Delta), \quad \Delta \in \mathbb{R}
\end{equation}
where $h_{i,i+1}(\Delta)$ is given by \eqref{e:local_operator} and the indices are taken modulo $L$ so that the indexes $i =L$ and $i=0$ are identified. The \emph{anisotropy} parameter $\Delta$ is required to be real so that the operator is Hermitian, a sufficient condition to have real eigenvalues. Then, we consider the evolution of the system given by the following Schr\"odinger equation with deterministic initial conditions
\begin{equation}\label{e:schrodinger}
    \begin{cases}
    i \frac{d}{dt} |\Psi(t) \rangle = H_{\mathrm{XXZ}} | \Psi(t) \rangle,\\
    | \Psi(0) \rangle = | \mathbf{y} \rangle
    \end{cases}
\end{equation}
with $|\Psi(t)\rangle \in \mathbb{H}$ and $\mathbf{y} \in \mathcal{X}$.

\section{Bethe Ansatz}\label{s:bethe}

The Bethe Ansatz is a method to construct a collection of eigenvectors, called the \emph{Bethe eigenvectors}, for certain generators of one-dimensional interacting particle systems with enough symmetries -- we only consider the XXZ spin-1/2 chain. In the periodic case, each Bethe eigenvector depends on a solution for a system of algebraic equations, called the \emph{Bethe equations}. A priori, it is not clear if the Bethe Ansatz produces all the eigenvectors for the XXZ spin-1/2 chain, i.e.~if the Bethe Ansatz is \emph{complete}. In the following sections, we will describe the Bethe Ansatz in more detail. Additionally, we provide \Cref{c:main_v3} that if true, along with some other existing results, implies that the Bethe Ansatz is complete.  

\subsection{Bethe vectors}
We introduce a collection of vectors, called \emph{Bethe eigenvectors}, given as follows
\begin{equation}\label{e:bethe_vector}
    | \mathbf{\boldsymbol{\xi}} \rangle = \sum_{\mathbf{x} \in \mathcal{X}} u(\mathbf{\boldsymbol{\xi}}, \mathbf{x}) | \mathbf{x} \rangle
\end{equation}
where $\mathbf{\boldsymbol{\xi}} = (\xi_1, \dots, \xi_N) \in \mathbb{C}^N$ is a solution to the system of equations, called the \emph{Bethe equations}
\begin{equation}\label{e:bethe_equation}
    \xi_i^L = (-1)^{N-1} \prod_{j=1}^N \frac{1 + \xi_i \xi_j - 2 \Delta \xi_i}{1 + \xi_i \xi_j - 2 \Delta \xi_j}, \quad i =1, 2, \dots, N
\end{equation}
and the coefficients are given as follows
\begin{equation}\label{e:coefficients}
    u(\mathbf{\boldsymbol{\xi}}, \mathbf{x}) = \sum_{\sigma \in S_N} A_{\sigma}(\mathbf{\boldsymbol{\xi}}) \prod_{i=1}^N \xi_{\sigma(i)}^{x_i}
\end{equation}
with the \emph{amplitudes}
\begin{equation}\label{e:amplitude}
    A_{\sigma}(\mathbf{\boldsymbol{\xi}}) = \prod_{\substack{i < j \\ \sigma(i) > \sigma(j)}} \left(- \frac{1 + \xi_{\sigma(i)} \xi_{\sigma(j)} - 2 \Delta \xi_{\sigma(i)}}{1 + \xi_{\sigma(i)} \xi_{\sigma(j)} -2  \Delta \xi_{\sigma(j)}} \right).
\end{equation}
In particular, a Bethe vector $| \mathbf{\boldsymbol{\xi}} \rangle$ is determined and labeled by a solution $\mathbf{\boldsymbol{\xi}} \in \mathbb{C}^N$ to the Bethe equations given by \eqref{e:bethe_equation}. We will consider solutions of the Bethe equations so that the Bethe vectors are well-defined.

\begin{assumption}\label{a:generic}
Take $\mathbf{\boldsymbol{\xi}} = (\xi_1, \dots, \xi_N) \in \mathbb{C}^N$ so that it is a solution of the Bethe equations \eqref{e:bethe_equation} and
\begin{equation}\label{e:generic}
    1 + \xi_i \xi_j - 2 \Delta \xi_i \neq 0
\end{equation}
for any $i, j \in \{1, 2,\dots, N\}$.
\end{assumption}

\begin{remark}
    \Cref{a:generic} is a generic assumption. Take the Bethe equations and at least one pair $(i,j)$ so that the left side of \eqref{e:generic} is equal to zero. This is an overdetermined system of equations. Then, for fixed $N$ and $L$, it suffices to find one value of $\Delta$ where \Cref{a:generic} is true to show that it is a generic condition. For $\Delta =0$, the Bethe equations and the negation of \Cref{a:generic} become
    \begin{equation}
        \begin{cases}
            \xi_k^L = (-1)^{N-1}, &\quad k=1, 2, \dots, N\\
            1 + \xi_i \xi_j = 0,
        \end{cases}
    \end{equation}
    for some $i, j \in \{1, \dots, N\}$. This system of equations doesn't have a solution for $L$ an odd number but it does have some solutions for $L$ an even number. Thus, it's not clear that \Cref{a:generic} is generically true for $L$ an even number but we do know that \Cref{a:generic} is generically true for $L$ an odd number.  
\end{remark}

\begin{proposition}\label{p:eigenvector}
If $\mathbf{\boldsymbol{\xi}} \in \mathbb{C}^N$ is a solution of the Bethe equations \eqref{e:bethe_equation} so that \Cref{a:generic} is true and $| \mathbf{\boldsymbol{\xi}} \rangle \neq0$, then $| \mathbf{\boldsymbol{\xi}} \rangle$ is an eigenvector of the Hamiltonian for the XXZ spin-$\tfrac{1}{2}$ chain with eigenvalue
\begin{equation}\label{e:eigenvalue}
    E(\mathbf{\boldsymbol{\xi}}) = \sum_{i=1}^N \left(\xi_i + \xi_i^{-1} - 2\Delta \right)
\end{equation}
\end{proposition}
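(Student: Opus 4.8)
The plan is to verify $H_{\mathrm{XXZ}}\lvert\boldsymbol{\xi}\rangle = E(\boldsymbol{\xi})\lvert\boldsymbol{\xi}\rangle$ coefficient-by-coefficient in the configuration basis, i.e.\ to show $\langle\mathbf{y}\lvert H_{\mathrm{XXZ}}\rvert\boldsymbol{\xi}\rangle = E(\boldsymbol{\xi})\,u(\boldsymbol{\xi},\mathbf{y})$ for every $\mathbf{y}\in\mathcal{X}$, where $u$ is the plane-wave coefficient \eqref{e:coefficients}. First I would record the action of one nearest-neighbour operator $h_{i,i+1}(\Delta)$ from \eqref{e:local_operator} on the two-site space: it annihilates the aligned pairs $|\uparrow\uparrow\rangle,|\downarrow\downarrow\rangle$, while on an anti-aligned pair it produces the hop $|\uparrow\downarrow\rangle\leftrightarrow|\downarrow\uparrow\rangle$ together with a diagonal weight $-\Delta$. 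Summing over the ring, the diagonal part of $H_{\mathrm{XXZ}}\lvert\mathbf{y}\rangle$ equals $-\Delta$ times the number of domain-wall bonds, which for $0<N<L$ is $-2\Delta N + 2\Delta P(\mathbf{y})$, where $P(\mathbf{y})$ counts the bonds carrying two adjacent up-spins; the off-diagonal part moves a single up-spin to an empty neighbouring site. This reduces the proposition to a difference equation for $u(\boldsymbol{\xi},\cdot)$.

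Next I would insert the coefficients \eqref{e:coefficients} and treat the hopping sum as if every move were allowed (a ``free'' sum). Since $\sum_k(\xi_{\sigma(k)}+\xi_{\sigma(k)}^{-1}) = \sum_i(\xi_i+\xi_i^{-1})$ is independent of $\sigma$, the free sum collapses to $\big(\sum_i(\xi_i+\xi_i^{-1})\big)u(\boldsymbol{\xi},\mathbf{y})$, which together with the $-2\Delta N$ diagonal term already reproduces the claimed eigenvalue \eqref{e:eigenvalue}. It then remains to show that the correction—the forbidden moves into occupied sites, a pair of them for each adjacent up-spin pair—exactly cancels the leftover $2\Delta P(\mathbf{y})$. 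Grouping the two forbidden terms attached to an adjacent pair $(k,k+1)$ and pairing each permutation $\sigma$ with $\sigma\circ(k,k+1)$, the cancellation reduces to the single algebraic identity $\tfrac{A_\sigma}{A_{\sigma\circ(k,k+1)}} = -\tfrac{1+\xi_{\sigma(k)}\xi_{\sigma(k+1)}-2\Delta\xi_{\sigma(k)}}{1+\xi_{\sigma(k)}\xi_{\sigma(k+1)}-2\Delta\xi_{\sigma(k+1)}}$. I would verify this by checking that right-multiplication by the adjacent transposition $(k,k+1)$ alters the inversion set of $\sigma$ only in the pair $(k,k+1)$: the factors attached to pairs involving a third position are invariant under swapping the two values, hence the ratio of amplitudes in \eqref{e:amplitude} is precisely the asserted two-body factor. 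Here \Cref{a:generic} is exactly what guarantees the denominators are non-zero, so the amplitudes and these ratios are well defined.

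The remaining and most delicate step is the periodic boundary on the ring. The free computation silently assumed that a hop merely shifts an exponent $x_k\mapsto x_k\pm1$, but a particle leaving site $0$ to the left (equivalently, one leaving site $L-1$ to the right) wraps around and changes its order among the $N$ coordinates, so the leftmost particle becomes the rightmost. I would handle this by comparing $u(\boldsymbol{\xi},\mathbf{y})$ with the wrapped coefficient built from $(y_1,\dots,y_N)\mapsto(y_2,\dots,y_N,y_1+L)$, reindex the permutation sum by the cyclic shift $c=(1\,2\,\cdots\,N)$, and collect the accumulated two-body factors. Consistency of the boundary terms then becomes, for each $i$, exactly the Bethe equation \eqref{e:bethe_equation}, with the sign $(-1)^{N-1}$ appearing as the signature of the $N$-cycle $c$ and the product of S-matrix factors arising from commuting $\xi_i$ past the remaining rapidities. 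I expect this wrap-around bookkeeping—tracking how the cyclic reordering redistributes the amplitudes $A_\sigma$ and matching the outcome to \eqref{e:bethe_equation}—to be the main obstacle, whereas the bulk and two-body steps are routine once the difference equation is set up.
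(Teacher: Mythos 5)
Your proposal is correct and is in substance the same argument as the paper's: the paper packages the direct verification through a lift to $\mathbb{Z}^N$ and a projection (\Cref{l:bethe_conditions}), but its two key relations \eqref{e:u_relations} are exactly your two steps --- the contact condition whose cancellation forces the amplitude ratio \eqref{e:amp_relation}, and the wrap-around/periodicity condition that produces the Bethe equations \eqref{e:bethe_equation}. Your version just carries out the coefficient-by-coefficient computation (including the $-2\Delta N+2\Delta P(\mathbf{y})$ diagonal count) that the paper leaves as a sketch, and correctly locates where \Cref{a:generic} is needed.
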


This is a well-known result dating back to the work of Hans Bethe in 1931 \cite{bethe_zur_1931}; see \cite{caux_bethe_2014} for a modern treatment and \cite{gwa_bethe_1992} in the context of a 1D interacting particle Markov process. We sketch the idea of the proof. We first extend the configuration space and the corresponding vector space. Let
\begin{equation}
    \tilde{\mathcal{X}} = \mathbb{Z}^{N}
\end{equation}
be the space of configurations where particles may occupy the same position in space and may be in any order. Then, we have a natural inclusion of the configuration spaces
\begin{equation}
    \mathcal{X} \hookrightarrow \tilde{\mathcal{X}}.
\end{equation}
We also introduce the corresponding Hilbert space
\begin{equation}
    \tilde{\mathbb{H}} = \mathrm{Span} \{|\mathbf{x} \rangle \mid \mathbf{x} \in \tilde{\mathcal{X}} \}
\end{equation}
with the same inner product structure as for $\mathbb{H}$, defined in \Cref{s:configuration}. Then, we have a natural projection map
\begin{equation}
    P: \tilde{\mathbb{H}} \rightarrow \mathbb{H}
\end{equation}
induced from the inclusion map $\mathcal{X} \hookrightarrow \tilde{\mathcal{X}}$. Lastly, we introduce a Hamiltonian $H: \tilde{\mathbb{H}} \rightarrow \tilde{\mathbb{H}}$ given as follows
\begin{equation}
    H: |\mathbf{x} \rangle \mapsto \sum_{i =1}^N \bigg(| x_i +1\rangle + | x_i -1 \rangle - 2 \Delta |\mathbf{x} \rangle\bigg)
\end{equation}
where $\mathbf{x}= (x_1, x_2, \dots, x_N)$ and $|x_i \pm 1 \rangle$ is the basis vector corresponding to the same coordinates as $\mathbf{x}$ except that the $x_i$-coordinate is replaced by $x_i \pm 1$. This Hamiltonian corresponds to the spin dynamics of the Hamiltonian $H_{XXZ}$ without interactions.

\begin{lemma}\label{l:bethe_conditions}
Take $|\tilde{V} \rangle \in \tilde{\mathbb{H}}$ and $|V \rangle \in \mathbb{H}$. Assume the following conditions
\begin{equation}\label{e:bethe_conditions}
    (i)\, H |\tilde{V} \rangle = E |\tilde{V} \rangle, \quad (ii)\, P |\tilde{V} \rangle = |V \rangle, \quad (iii)\, H_{XXZ} | V \rangle = P H | \tilde{V} \rangle , \quad (iv)\, | V\rangle \neq 0
\end{equation}
for some $E\in \mathbb{C}$. Then, $|V \rangle \in \mathbb{H}$ is an eigenvector of the XXZ Hamiltonian,
\begin{equation}
    H_{XXZ} |V\rangle = E | V \rangle
\end{equation}
where $E \in \mathbb{C}$ as in \eqref{e:bethe_conditions}.
\end{lemma}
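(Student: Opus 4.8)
The plan is to observe that the four hypotheses already form a chain that collapses to the desired eigenvalue equation, so the proof is a short substitution argument whose only nontrivial ingredient is the linearity of the projection $P$. I would not attempt to reprove that the Bethe vectors themselves satisfy (i)--(iv); that verification (the genuine computational heart, where the Bethe equations \eqref{e:bethe_equation} and the amplitudes \eqref{e:amplitude} enter) is logically separate and is what ultimately feeds into \Cref{p:eigenvector}.

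First I would start from hypothesis (iii), which reads $H_{XXZ}|V\rangle = P H |\tilde V\rangle$, and rewrite the right-hand side using the remaining two hypotheses. By (i), $H|\tilde V\rangle = E|\tilde V\rangle$, so
\[
P H |\tilde V\rangle = P\bigl(E\,|\tilde V\rangle\bigr).
\]
Next I would invoke the linearity of $P$ --- it is the linear projection induced by the inclusion $\mathcal{X}\hookrightarrow\tilde{\mathcal X}$ --- to pull the scalar $E$ outside, giving $P(E|\tilde V\rangle) = E\, P|\tilde V\rangle$. Finally, applying (ii) yields $P|\tilde V\rangle = |V\rangle$, so the chain terminates at $H_{XXZ}|V\rangle = E|V\rangle$. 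Hypothesis (iv), $|V\rangle\neq 0$, then guarantees that $|V\rangle$ is a genuine eigenvector rather than the trivial zero solution, completing the argument.

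Since every step is a direct substitution, there is no real obstacle internal to this lemma: its content lies entirely in arranging the hypotheses so that the three equations compose. The only point demanding any care is that $E$ must be treated as a scalar that commutes past the linear map $P$, which is immediate from the definition of $P$; no special structure of $H$, $H_{XXZ}$, or the particular form of the Bethe vector is used at this stage. I would therefore present this as a two- or three-line proof and defer all the analytic work to the separate verification that conditions (i)--(iii) hold for the Bethe Ansatz vectors.
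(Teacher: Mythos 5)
Your proposal is correct and follows exactly the same chain of substitutions as the paper's proof: $H_{XXZ}|V\rangle = PH|\tilde V\rangle = PE|\tilde V\rangle = EP|\tilde V\rangle = E|V\rangle$, with condition (iv) guaranteeing nontriviality. No differences worth noting.
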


\begin{proof}
This is a straightforward result. Taking the first three conditions, we have
\begin{equation}
    H_{XXZ} | V \rangle = P H | \tilde{V} \rangle = P E | \tilde{V} \rangle = E P | \tilde{V} \rangle = E | V \rangle.
\end{equation}
The last condition assures that the vector $|V \rangle$ is not trivial. Thus, the result follows.
\end{proof}

\Cref{l:bethe_conditions} is the key ingredient for the proof of \Cref{p:eigenvector}. Finding eigenvectors for the operator $H$ is simple since the particles don't interact. The other conditions in \eqref{e:bethe_conditions}, except condition (iii), are straightforward to check. Then, the problem of finding eigenvectors for the Hamiltonian $H_{XXZ}$ may be shifted to condition (iii) in \eqref{e:bethe_conditions}. Note that, for now, this is just a convenient method to find eigenvectors for the XXZ Hamiltonian. 

\begin{proof} [Proof of \Cref{p:eigenvector}]
We claim that the Bethe vector $|\mathbf{\boldsymbol{\xi}} \rangle$ satisfies the conditions in \Cref{l:bethe_conditions} which is sufficient. The last condition is satisfied by assumption. We check the rest of the conditions below.

Suppose  $|\tilde{V} \rangle \in \tilde{\mathbb{H}}$ has the form
\begin{equation}
    |\tilde{V} \rangle = \sum_{\mathbf{x} \in \tilde{\mathcal{X}}} u(\mathbf{\boldsymbol{\xi}}, \mathbf{x}) | \mathbf{x} \rangle
\end{equation}
so that $\mathbf{\boldsymbol{\xi}} = (\xi_1 , \xi_2, \dots, \xi_N) \in \mathbb{C}^N$ and
\begin{equation}
    u(\mathbf{\boldsymbol{\xi}}, \mathbf{x}) = \sum_{\sigma \in S_N} A_{\sigma}(\mathbf{\boldsymbol{\xi}}) \prod_{i=1}^N \xi_{\sigma(i)}^{x_i}
\end{equation}
for $\mathbf{x} = (x_1 , x_2, \dots, x_N)$ and some amplitudes $A_{\sigma} \in \mathbb{C}$ to be determined. One can determine the eigenvalue to be
\begin{equation}
    E = \sum_{i=1}^N \left(\xi_i + \xi_i^{-1} -2 \Delta \right)
\end{equation}
by noticing that the eigenvalue equation $H|\tilde{V}\rangle=E|\tilde{V}\rangle$ becomes a sum of difference equations in the $x_i$ variables. 
Note that $u(\mathbf{\boldsymbol{\xi}}, \cdot): \tilde{\mathcal{X}} =\mathbb{Z}^N \rightarrow \mathbb{C}$, for fixed $\boldsymbol{\xi} \in \mathbb{C}^N$, is a function mapping from the set of integer $N$-tuples to the complex numbers.

The third condition in \eqref{e:bethe_conditions} is now true if and only if the following relations\footnote{Obtaining these relations may not be immediately clear. It becomes clear by treating the cases for $N=2,3$ by hand. One also notes that the third condition leads to a large number of relations, but all of these relations reduce to those given in \eqref{e:u_relations}. This is a special symmetry of the current model that makes it \emph{exactly solvable}.} 
\begin{equation}\label{e:u_relations}
    \begin{split}
    &(i)\, 2 \Delta u(x_k =x, x_{k+1} = x+1)=\\
    &\hspace{20mm} u(x_k =x, x_{k+1} = x) + u(x_{k}=x+1, x_{k+1} = x+1) \\
    &(ii)\, u(x_1, \dots, x_{N-1}, x_N) = u(x_{N}-L, x_1, \dots, x_{N-1}).
    \end{split}
\end{equation}
for $\mathbf{x} = (x_1, \dots, x_N) \in \tilde{\mathcal{X}}$ and $k=1, 2, \dots, N-1$ hold. These two relations determine the amplitude coefficients \eqref{e:amplitude} and the Bethe equations \eqref{e:bethe_equation}. 

The first relation in \eqref{e:u_relations} is a condition on the $u$-function for configuration where consecutive coordinates are equal or adjacent. This is where the action of the operators $H$ and $H_{XXZ}$ differ, and the first relation is forcing the operators to agree on the candidate eigenvector $|V\rangle$. The first relation leads to the following relation for the amplitude coefficients\footnote{This is where we use \Cref{a:generic}.}
\begin{equation}\label{e:amp_relation}
    A_{\sigma \circ \tau} = A_{\sigma}\left( - \frac{1 + \xi_{\sigma(k)} \xi_{\sigma(k+1)} - 2 \Delta \xi_{\sigma(k+1)}}{1 + \xi_{\sigma(k)} \xi_{\sigma(k+1)} - 2 \Delta \xi_{\sigma(k)}}\right)
\end{equation}
for any $\sigma \in S_N$ and the transposition $\tau = (k\, k+1)$. Note that this relation determines all the values of the amplitudes, up to the value of $A_{\mathrm{id}}$, since the symmetric group is generated by transpositions. The amplitude $A_{\mathrm{id}} \neq 0$. Otherwise, $A_{\sigma} = 0$ for all $\sigma \in S_N$, by \eqref{e:amp_relation} and \Cref{a:generic}, meaning that $|V \rangle = 0$ is a trivial eigenvector. Then, we may take $A_{\mathrm{id}} =1$ since eigenvectors are determined up to a multiplicative constant. Moreover, we may directly check that the amplitude coefficients $A_{\sigma}$ in \eqref{e:amplitude} satisfy the relation \eqref{e:amp_relation} and $A_{\mathrm{id}} =1$, making them the unique choice for the amplitude coefficients.

The second relation in \eqref{e:u_relations} is a periodicity condition for the $u$-function. This relation directly leads to the Bethe equation \eqref{e:bethe_equation}.

We have now completely determined, through conditions (i) and (iii) in \eqref{e:bethe_conditions}, that
\begin{equation}
    |\tilde{V} \rangle = \sum_{\mathbf{x} \in \tilde{\mathcal{X}}} u(\boldsymbol{\xi}, \mathbf{x}) |\mathbf{x} \rangle
\end{equation}
with the $u$-function given by \eqref{e:coefficients} and $\boldsymbol{\xi} = (\xi_1, \dots, \xi_N) \in \mathbb{C}^N$ satisfying the Bethe equations \eqref{e:bethe_equation}. Lastly, we have
\begin{equation}
    |\boldsymbol{\xi} \rangle = |V \rangle = P |\tilde{V} \rangle = \sum_{\mathbf{x} \in \mathcal{X}} u(\boldsymbol{\xi}, \mathbf{x}) |\mathbf{x} \rangle.
\end{equation}
satisfying condition (ii) in \eqref{e:bethe_conditions}. This establishes the result. 
\end{proof}

\begin{remark}
Note the following subtleties with \Cref{p:eigenvector}: (1) it is not a priori guaranteed that all eigenvectors for the XXZ spin-$\tfrac{1}{2}$ chain are Bethe vectors, and (2) it is not a priori clear which solutions of the Bethe equations lead to non-trivial Bethe vectors (i.e.~non-zero vectors). In particular, the first point means that we don't know if the set of Bethe vectors span the entire Hilbert space $\mathbb{H}$. 
\end{remark}

\subsection{Bethe space}

We construct a vector space generated by the Bethe vectors \eqref{e:bethe_vector} corresponding to the solutions of the Bethe equations \eqref{e:bethe_equation}. We are only interested in solutions to the Bethe equations with pair-wise distinct entries, due to upcoming technical reasons. Note that any permutation of the entries of a solution to the Bethe equations is also a solution to the Bethe equations; i.e.~if $\boldsymbol{\xi} = (\xi_1, \dots, \xi_N)$ is a solution of the Bethe equations, then 
\begin{equation}\label{e:sym_action}
    \sigma \cdot \boldsymbol{\xi} = (\xi_{\sigma(1)}, \dots, \xi_{\sigma(N)}),
\end{equation}
is also a solution of the Bethe equations for any permutation $\sigma \in S_N$. Moreover, we may show that two solutions of the Bethe equations that are equal up to permutation of their components lead to the same Bethe vector.

\begin{lemma}\label{l:amps_rel}
Take any $\boldsymbol{\xi} = (\xi_1, \dots, \xi_N) \in \mathbb{C}^N$ so that \Cref{a:generic} is true. Let $\tau = (k \, k+1) \in S_N$ be a transposition. Then, the amplitudes given by \eqref{e:amplitude} satisfy the relation
\begin{equation}
    A_{\sigma} (\tau \cdot \boldsymbol{\xi}) =  \left(-\frac{1 + \xi_k \xi_{k+1} - 2 \Delta \xi_k}{1 + \xi_k \xi_{k+1} - 2 \Delta \xi_{k+1}}\right) A_{\tau \circ \sigma}(\boldsymbol{\xi}) 
\end{equation}
for any $\sigma \in S_N$ and $\tau \cdot \boldsymbol{\xi}$ given by \eqref{e:sym_action}.
\end{lemma}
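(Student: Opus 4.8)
The plan is to recognize $A_\sigma(\boldsymbol{\xi})$ as a product of two-body weights taken over the inversions of $\sigma$, and then to track how this product transforms under the two operations $\boldsymbol{\xi}\mapsto\tau\cdot\boldsymbol{\xi}$ and $\sigma\mapsto\tau\circ\sigma$. Write $w(\xi_a,\xi_b)=-\frac{1+\xi_a\xi_b-2\Delta\xi_a}{1+\xi_a\xi_b-2\Delta\xi_b}$ for the weight appearing in \eqref{e:amplitude}, so that $A_\sigma(\boldsymbol{\xi})=\prod_{(i,j)\in\mathrm{Inv}(\sigma)}w(\xi_{\sigma(i)},\xi_{\sigma(j)})$, where $\mathrm{Inv}(\sigma)=\{(i,j):i<j,\ \sigma(i)>\sigma(j)\}$. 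By \Cref{a:generic} every denominator $1+\xi_a\xi_b-2\Delta\xi_b$ is nonzero, since it has the form of the left side of \eqref{e:generic} with $i=b$ and $j=a$; the same applies to the relevant numerators, so all weights are well-defined and invertible.

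First I would absorb the relabeling into the indices. Since $(\tau\cdot\boldsymbol{\xi})_m=\xi_{\tau(m)}$, each factor of $A_\sigma(\tau\cdot\boldsymbol{\xi})$ reads $w\big((\tau\cdot\boldsymbol{\xi})_{\sigma(i)},(\tau\cdot\boldsymbol{\xi})_{\sigma(j)}\big)=w\big(\xi_{(\tau\circ\sigma)(i)},\xi_{(\tau\circ\sigma)(j)}\big)$. Hence, writing $\pi=\tau\circ\sigma$, both $A_\sigma(\tau\cdot\boldsymbol{\xi})$ and $A_{\pi}(\boldsymbol{\xi})$ are products of the \emph{same} weights $w(\xi_{\pi(i)},\xi_{\pi(j)})$; the only difference is the index set over which the product runs, namely $\mathrm{Inv}(\sigma)$ for the former and $\mathrm{Inv}(\pi)$ for the latter.

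The core combinatorial step is then to compare $\mathrm{Inv}(\sigma)$ and $\mathrm{Inv}(\pi)$. Because $\tau=(k\ k+1)$ acts on the left, $\pi$ is obtained from $\sigma$ by interchanging the \emph{values} $k$ and $k+1$ in one-line notation; as $k$ and $k+1$ are consecutive integers, no other value lies between them, so the relative order of every pair is preserved except for the single pair of positions carrying these two values. Thus $\mathrm{Inv}(\sigma)$ and $\mathrm{Inv}(\pi)$ differ by exactly one pair $(p,q)$, the positions with $\{\sigma(p),\sigma(q)\}=\{k,k+1\}$: if $k$ precedes $k+1$ in $\sigma$ then $\mathrm{Inv}(\pi)=\mathrm{Inv}(\sigma)\cup\{(p,q)\}$, and otherwise $\mathrm{Inv}(\pi)=\mathrm{Inv}(\sigma)\setminus\{(q,p)\}$. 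In either case all factors cancel in the quotient $A_\sigma(\tau\cdot\boldsymbol{\xi})/A_{\pi}(\boldsymbol{\xi})$ except the one indexed by this pair, leaving $w(\xi_k,\xi_{k+1})$ in the first case and $w(\xi_{k+1},\xi_k)^{-1}$ in the second. A one-line check gives the reciprocal identity $w(\xi_k,\xi_{k+1})\,w(\xi_{k+1},\xi_k)=1$, so both cases collapse to the single prefactor $w(\xi_k,\xi_{k+1})=-\frac{1+\xi_k\xi_{k+1}-2\Delta\xi_k}{1+\xi_k\xi_{k+1}-2\Delta\xi_{k+1}}$, which is exactly the claimed constant.

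I expect the main obstacle to be the bookkeeping in the combinatorial step: verifying that the two inversion sets differ in precisely one pair, correctly identifying that pair and the direction of the difference, and ensuring that the two position cases land on the \emph{same} prefactor rather than on mutual reciprocals. This last point is exactly where the identity $w(\xi_k,\xi_{k+1})\,w(\xi_{k+1},\xi_k)=1$ is needed, and it is the only genuinely non-mechanical ingredient; everything else is substitution and cancellation.
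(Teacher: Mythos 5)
Your proof is correct and follows essentially the same route as the paper: rewrite $A_\sigma(\tau\cdot\boldsymbol{\xi})$ as a product of the weights $w(\xi_{(\tau\circ\sigma)(i)},\xi_{(\tau\circ\sigma)(j)})$ and observe that it differs from $A_{\tau\circ\sigma}(\boldsymbol{\xi})$ in exactly one factor, which is the claimed prefactor; your version just makes explicit the inversion-set bookkeeping and the identity $w(\xi_k,\xi_{k+1})w(\xi_{k+1},\xi_k)=1$ that the paper leaves implicit. (The only quibble is that your two position cases have their resulting factors swapped -- the case where $k$ precedes $k+1$ in $\sigma$ yields $w(\xi_{k+1},\xi_k)^{-1}$ and the other yields $w(\xi_k,\xi_{k+1})$ -- but the reciprocal identity makes these equal, so the conclusion is unaffected.)
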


\begin{proof}
    The result follows from a direct computation:
    \begin{equation}
        \begin{split}
        A_{\sigma} (\tau \cdot \boldsymbol{\xi}) &= \prod_{\substack{i < j \\ \sigma(i)> \sigma(j)}}\left(- \frac{1 + \xi_{\tau \circ \sigma (i)}\xi_{\tau \circ \sigma (j)} - 2 \Delta \xi_{\tau \circ \sigma (i)}}{1 + \xi_{\tau \circ \sigma (i)}\xi_{\tau \circ \sigma (j)} - 2 \Delta \xi_{\tau \circ \sigma (j)}} \right)\\
        &= \left(-\frac{1 + \xi_k \xi_{k+1} - 2 \Delta \xi_k}{1 + \xi_k \xi_{k+1} - 2 \Delta \xi_{k+1}}\right) \prod_{\substack{i < j \\ \sigma(i)> \sigma(j)}}\left(- \frac{1 + \xi_{ \sigma (i)}\xi_{ \sigma (j)} - 2 \Delta \xi_{ \sigma (i)}}{1 + \xi_{ \sigma (i)}\xi_{ \sigma (j)} - 2 \Delta \xi_{ \sigma (j)}} \right)\\
        &= \left(-\frac{1 + \xi_k \xi_{k+1} - 2 \Delta \xi_k}{1 + \xi_k \xi_{k+1} - 2 \Delta \xi_{k+1}}\right) A_{\tau \circ \sigma}(\boldsymbol{\xi}).
        \end{split}
    \end{equation}
    The transposition $\tau = (k \, k+1)$ only affects the order of the elements $k$ and $k+1$. So, all the terms remain the same except for one term, which is the additional prefactor term.
\end{proof}

\begin{lemma}\label{l:well_def}
Take any $\boldsymbol{\xi} \in \mathbb{C}^N$, so that \Cref{a:generic} is true, and $\mathbf{x}\in \mathbb{Z}^N$. Then, we have
    \begin{equation}
    u(\boldsymbol{\xi}, \mathbf{x}) = \lambda u(\tau \cdot \boldsymbol{\xi}, \mathbf{x}),
    \end{equation}
for any $\tau \in S_N$, where $u$ is given by \eqref{e:coefficients}, $\tau \cdot \boldsymbol{\xi}$ is given by \eqref{e:sym_action}, and $\lambda \in \mathbb{C}$ with $\lambda \neq 0$.
\end{lemma}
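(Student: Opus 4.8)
The plan is to reduce everything to the case of an adjacent transposition $\tau = (k\, k+1)$ and then bootstrap to an arbitrary permutation, since $S_N$ is generated by adjacent transpositions. The first observation I would record is that the $S_N$-action \eqref{e:sym_action} preserves \Cref{a:generic}: permuting the entries of $\boldsymbol{\xi}$ merely relabels the conditions $1 + \xi_i \xi_j - 2\Delta \xi_i \neq 0$, so if $\boldsymbol{\xi}$ satisfies \Cref{a:generic} then so does $\tau \cdot \boldsymbol{\xi}$ for every $\tau \in S_N$. This guarantees that $u(\tau \cdot \boldsymbol{\xi}, \mathbf{x})$ and all the amplitude factors appearing below are well defined with nonzero denominators.

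For an adjacent transposition I would compute $u(\tau \cdot \boldsymbol{\xi}, \mathbf{x})$ directly from the definition \eqref{e:coefficients}. Writing $(\tau \cdot \boldsymbol{\xi})_{\sigma(i)} = \xi_{\tau(\sigma(i))}$ and substituting the amplitude identity of \Cref{l:amps_rel}, namely $A_{\sigma}(\tau \cdot \boldsymbol{\xi}) = c\, A_{\tau \circ \sigma}(\boldsymbol{\xi})$ with
\[
    c = -\frac{1 + \xi_k \xi_{k+1} - 2\Delta \xi_k}{1 + \xi_k \xi_{k+1} - 2\Delta \xi_{k+1}},
\]
reduces the sum to $c \sum_{\sigma \in S_N} A_{\tau \circ \sigma}(\boldsymbol{\xi}) \prod_{i=1}^N \xi_{\tau(\sigma(i))}^{x_i}$. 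The key manipulation is the reindexing $\rho = \tau \circ \sigma$: since $\tau$ is an involution one has $\tau(\sigma(i)) = \rho(i)$, and as $\sigma$ ranges over $S_N$ so does $\rho$. Hence the exponential factors turn into $\prod_{i=1}^N \xi_{\rho(i)}^{x_i}$ and the whole sum collapses to $c\, u(\boldsymbol{\xi}, \mathbf{x})$. Thus $u(\boldsymbol{\xi}, \mathbf{x}) = c^{-1} u(\tau \cdot \boldsymbol{\xi}, \mathbf{x})$, which is the claim with $\lambda = c^{-1}$; here $c \neq 0$ precisely because of \Cref{a:generic} (both numerator and denominator are expressions of the prohibited form), and $c$ does not depend on $\mathbf{x}$.

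Finally, for general $\tau$ I would fix any factorization $\tau = \tau_1 \circ \cdots \circ \tau_m$ into adjacent transpositions and apply the adjacent case repeatedly, using at each step that the intermediate tuple still satisfies \Cref{a:generic}. Each step contributes a nonzero, $\mathbf{x}$-independent scalar, so their product is the desired nonzero $\lambda$. Since we only need to exhibit \emph{some} nonzero $\lambda$, I would not need to verify that this scalar is independent of the chosen factorization. The one place where care is genuinely required is the reindexing step together with the bookkeeping of which variables sit in the exponents versus in the amplitudes; the involution property of transpositions is exactly what makes the exponent labels $\tau(\sigma(i))$ simplify cleanly, so I expect that to be the main (and essentially the only) obstacle.
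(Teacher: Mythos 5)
Your proposal is correct and follows essentially the same route as the paper's proof: both reduce to an adjacent transposition, apply \Cref{l:amps_rel} to convert $A_{\sigma}(\tau\cdot\boldsymbol{\xi})$ into a nonzero multiple of $A_{\tau\circ\sigma}(\boldsymbol{\xi})$, and reindex the sum via $\rho=\tau\circ\sigma$ using $(\tau\cdot\boldsymbol{\xi})_{\sigma(i)}=\xi_{\tau\circ\sigma(i)}$. Your explicit remarks that the $S_N$-action preserves \Cref{a:generic} and that the general case follows by composing adjacent transpositions are slightly more careful than the paper's, but the argument is the same.
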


\begin{proof}
    This follows from re-indexing the summation in the definition of the $u$-function and applying \Cref{l:amps_rel}. Moreover, it suffices to take $\tau \in S_N$ to be a transposition since transpositions generate the symmetric group $S_N$. First, note that
    \begin{equation}
    A_{\tau \circ \sigma}(\boldsymbol{\xi}) = \lambda A_{\sigma}(\tau \cdot \boldsymbol{\xi}),
    \end{equation}
    with $\lambda \neq 0$ independent of $\sigma$, due to \Cref{l:amps_rel} and \Cref{a:generic}. Then,
    \begin{equation}
    \begin{split}
    u(\boldsymbol{\xi}, \mathbf{x}) &= \sum_{\nu \in S_N} A_{\nu}(\boldsymbol{\xi}) \prod_{i=1}^N \xi_{\nu(i)}^{x_i}\\
    &= \sum_{\sigma \in S_N} A_{\tau \circ \sigma}(\boldsymbol{\xi}) \prod_{i=1}^N \xi_{\tau \circ \sigma(i)}^{x_i}\\
    &= \sum_{\sigma' \in S_N} \lambda A_{\sigma'}(\tau \cdot \boldsymbol{\xi}) \prod_{i=1}^N (\tau\cdot\boldsymbol{\xi})_{\sigma'(i)}^{x_i}\\
    &= \lambda u(\tau \cdot \boldsymbol{\xi}, \mathbf{x}),
    \end{split}
    \end{equation}
    where we re-index $\nu = \tau \circ \sigma$ and sum over $\sigma$ in the second equality. Thus, we obtain the desired result. 
\end{proof}

\begin{corollary}\label{c:sym_bethe_vec}
Take any $\boldsymbol{\xi} \in \mathbb{C}^N$, so that \Cref{a:generic} is true, and $\mathbf{x}\in \mathbb{Z}^N$. Then, the Bethe vectors, given by \eqref{e:bethe_vector}, are equal up to a multiplicative constant
    \begin{equation}
    |\boldsymbol{\xi} \rangle = \lambda |\sigma \cdot \boldsymbol{\xi} \rangle,
    \end{equation}
for any $\sigma \in S_N$, where $\sigma \cdot \boldsymbol{\xi}$ is given by \eqref{e:sym_action}, and $\lambda \in \mathbb{C}$ with $\lambda \neq 0$.
\end{corollary}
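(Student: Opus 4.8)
The plan is to read off the corollary directly from \Cref{l:well_def}, which already packages the essential algebraic fact at the level of the scalar coordinate functions $u(\boldsymbol{\xi},\mathbf{x})$. Since the Bethe vector $|\boldsymbol{\xi}\rangle$ is by definition \eqref{e:bethe_vector} the expansion of these coordinate functions against the orthonormal configuration basis, all that is needed is to transport the scalar relation of \Cref{l:well_def} through the (finite) sum over $\mathcal{X}$. A minor structural remark: a relation of the shape $|\boldsymbol{\xi}\rangle = \lambda\,|\sigma\cdot\boldsymbol{\xi}\rangle$ with $\lambda\neq0$ composes multiplicatively, so it would suffice to treat adjacent transpositions $\tau=(k\,k{+}1)$ (which generate $S_N$) and then multiply the resulting nonzero scalars; however, \Cref{l:well_def} is already stated for an arbitrary $\tau\in S_N$, so I would simply invoke it for the given $\sigma$ without this reduction.

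Next I would substitute termwise. Fixing $\sigma\in S_N$ and applying \Cref{l:well_def} to each configuration $\mathbf{x}\in\mathcal{X}\subseteq\mathbb{Z}^N$ yields $u(\boldsymbol{\xi},\mathbf{x})=\lambda\,u(\sigma\cdot\boldsymbol{\xi},\mathbf{x})$, whence
\begin{equation}
|\boldsymbol{\xi}\rangle = \sum_{\mathbf{x}\in\mathcal{X}} u(\boldsymbol{\xi},\mathbf{x})\,|\mathbf{x}\rangle = \sum_{\mathbf{x}\in\mathcal{X}} \lambda\, u(\sigma\cdot\boldsymbol{\xi},\mathbf{x})\,|\mathbf{x}\rangle = \lambda \sum_{\mathbf{x}\in\mathcal{X}} u(\sigma\cdot\boldsymbol{\xi},\mathbf{x})\,|\mathbf{x}\rangle = \lambda\,|\sigma\cdot\boldsymbol{\xi}\rangle,
\end{equation}
where the first and last equalities are the definition \eqref{e:bethe_vector} and the middle step pulls the constant out of the sum.

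The one point requiring care — and the step I would flag as the crux, even though it is already guaranteed upstream — is precisely that $\lambda$ may be extracted from the sum, i.e.\ that it is independent of $\mathbf{x}$. This is exactly the content of \Cref{l:well_def}: inspecting its proof, $\lambda$ is assembled from the prefactors supplied by \Cref{l:amps_rel}, each a ratio built from $1+\xi_k\xi_{k+1}-2\Delta\xi_k$ and $1+\xi_k\xi_{k+1}-2\Delta\xi_{k+1}$ (up to sign), which depends only on the entries of $\boldsymbol{\xi}$ and on $\Delta$ and never on the configuration $\mathbf{x}$. \Cref{a:generic} ensures the relevant denominators are nonzero, so $\lambda\neq0$. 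With this $\mathbf{x}$-independence in hand the factorization above is unconditional, and no genuine obstacle remains: the substantive algebra was already carried out in \Cref{l:amps_rel} and \Cref{l:well_def}, and this corollary is their immediate packaging at the level of vectors. Conceptually, it establishes that $|\boldsymbol{\xi}\rangle$ depends (up to a nonzero scalar) only on the unordered collection of Bethe roots, which is what legitimizes the later use of equivalence-class notation $[\boldsymbol{\xi}]$ for the energy-basis vectors.
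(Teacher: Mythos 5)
Your proposal is correct and matches the paper's proof, which simply states that the corollary follows directly from \Cref{l:well_def}; you have just spelled out the termwise substitution and the $\mathbf{x}$-independence of $\lambda$ that make "directly" work. No gaps.
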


\begin{proof}
Follows directly from \Cref{l:well_def}
\end{proof}

We take any two solutions to the Bethe equations to be equivalent if one is a permutation of the other, i.e.~if they are equal up to order. Additionally, we disregard Bethe vectors $|\boldsymbol{\xi}\rangle$ with any two coordinates in $\boldsymbol{\xi} \in \mathbb{C}^N$ equal. In that case, the Bethe vector is identically equal to zero due to the following result. 

\begin{lemma}
    Take $\boldsymbol{\xi} = (\xi_1, \dots, \xi_N) \in \mathbb{C}^N$, so that \Cref{a:generic} is true, and let $|\boldsymbol{\xi} \rangle$ be the vector given by \eqref{e:bethe_vector}. Then,
    \begin{equation}
        |\boldsymbol{\xi} \rangle = 0
    \end{equation}
    if $\xi_i = \xi_j$ for some $i \neq j$.
\end{lemma}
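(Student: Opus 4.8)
The plan is to show that the coefficient $u(\boldsymbol{\xi},\mathbf{x})$ in \eqref{e:coefficients} vanishes for every configuration $\mathbf{x}\in\mathcal{X}$, which immediately gives $|\boldsymbol{\xi}\rangle=0$ by \eqref{e:bethe_vector}. The mechanism is a sign-reversing involution on the index set $S_N$ of the sum defining $u$.

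First I would reduce to the case of two \emph{adjacent} equal parameters. Suppose $\xi_a=\xi_b$ with $a<b$, and choose any $\pi\in S_N$ with $\pi(1)=a$ and $\pi(2)=b$. Then the first two entries of $\pi\cdot\boldsymbol{\xi}$ coincide, and since \Cref{a:generic} is invariant under reordering of the components (the map $(i,j)\mapsto(\pi(i),\pi(j))$ merely permutes the finite set of inequalities \eqref{e:generic}), the hypotheses of \Cref{c:sym_bethe_vec} hold for both $\boldsymbol{\xi}$ and $\pi\cdot\boldsymbol{\xi}$. Hence $|\boldsymbol{\xi}\rangle=\lambda\,|\pi\cdot\boldsymbol{\xi}\rangle$ with $\lambda\neq 0$, and it suffices to prove the vanishing when the equal parameters sit in consecutive slots.

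So assume now $\xi_k=\xi_{k+1}$ and set $\tau=(k\,k+1)$. I would pair each $\sigma\in S_N$ with $\tau\circ\sigma$; since $\tau\neq\mathrm{id}$, the map $\sigma\mapsto\tau\circ\sigma$ is a fixed-point-free involution, so $S_N$ splits into $N!/2$ such pairs. Within a pair the two monomials agree: $\tau$ merely exchanges the labels $k$ and $k+1$ among the outputs of $\sigma$, so $\prod_i \xi_{\tau\circ\sigma(i)}^{x_i}$ is obtained from $\prod_i \xi_{\sigma(i)}^{x_i}$ by swapping $\xi_k\leftrightarrow\xi_{k+1}$, which leaves it unchanged because $\xi_k=\xi_{k+1}$. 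For the amplitudes I would invoke \Cref{l:amps_rel}: since swapping the two equal entries gives $\tau\cdot\boldsymbol{\xi}=\boldsymbol{\xi}$, its conclusion reads
\begin{equation}
A_{\sigma}(\boldsymbol{\xi})=A_{\sigma}(\tau\cdot\boldsymbol{\xi})=\left(-\frac{1+\xi_k\xi_{k+1}-2\Delta\xi_k}{1+\xi_k\xi_{k+1}-2\Delta\xi_{k+1}}\right)A_{\tau\circ\sigma}(\boldsymbol{\xi})=-A_{\tau\circ\sigma}(\boldsymbol{\xi}),
\end{equation}
where the prefactor collapses to $-1$ because its numerator and denominator are the identical nonzero number $1+\xi_k^2-2\Delta\xi_k$ (nonzero by \Cref{a:generic}). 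Thus the contributions of $\sigma$ and $\tau\circ\sigma$ to $u(\boldsymbol{\xi},\mathbf{x})$ are negatives of one another and cancel.

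Summing over all pairs gives $u(\boldsymbol{\xi},\mathbf{x})=0$ for every $\mathbf{x}\in\mathcal{X}$, hence $|\boldsymbol{\xi}\rangle=0$. The only point requiring care is the bookkeeping of left versus right multiplication: the monomial $\prod_i\xi_{\sigma(i)}^{x_i}$ is invariant precisely under \emph{left} multiplication of $\sigma$ by $\tau$ (when $\xi_k=\xi_{k+1}$), whereas the recursion \eqref{e:amp_relation} built into the amplitudes is phrased via \emph{right} multiplication and does not leave the monomial invariant. \Cref{l:amps_rel} is exactly the left-multiplication companion that makes the amplitudes transform with the matching sign, so invoking it rather than \eqref{e:amp_relation} directly is the crux. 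I expect this to be the main, if modest, obstacle; everything else is a routine pairing argument.
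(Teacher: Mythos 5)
Your proof is correct and is essentially the paper's argument: the paper also reduces to adjacent equal entries via \Cref{c:sym_bethe_vec} and then concludes $|\boldsymbol{\xi}\rangle = -|\tau\cdot\boldsymbol{\xi}\rangle = -|\boldsymbol{\xi}\rangle$ using \Cref{l:amps_rel} with the prefactor collapsing to $-1$, which is exactly your sign-reversing pairing carried out at the level of the whole vector rather than term-by-term in $u(\boldsymbol{\xi},\mathbf{x})$. Both arguments hinge on the same $-1$ factor from \Cref{l:amps_rel} combined with $\tau\cdot\boldsymbol{\xi}=\boldsymbol{\xi}$.
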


\begin{proof}
    We may assume that $\xi_1 = \xi_{2}$ due to \Cref{c:sym_bethe_vec}. Then, we have
    \begin{equation}
        |\boldsymbol{\xi} \rangle = - | \tau \cdot \boldsymbol{\xi} \rangle = - | \boldsymbol{\xi} \rangle,
    \end{equation}
    for $\tau = (1\, 2)$. The first equality is due to \Cref{c:sym_bethe_vec} with $\lambda = -1$ due to \Cref{l:amps_rel} and $\xi_1 = \xi_2$. The second identity follows from $\boldsymbol{\xi} = \tau \cdot \boldsymbol{\xi}$ since $\xi_1 = \xi_2$. Thus, $2 |\boldsymbol{\xi} \rangle =0$ and the result follows.
\end{proof}

Let $\Xi$ denote the set of solutions to the Bethe equations, up to order, with pair-wise distinct entries:
\begin{equation}\label{e:bethe_sols}
    \Xi = \{ \boldsymbol{\xi}  \mid \boldsymbol{\xi}\in \widetilde{\mathbb{C}^N} \text{ is a solution to \eqref{e:bethe_equation}}\}/ S_N,
\end{equation}
where $\widetilde{\mathbb{C}^N}$ is the set of $N$-tuples of complex numbers with pair-wise distinct entries so that any two solutions are identified by the action of the symmetric group $S_N$. For a solution $\boldsymbol{\xi} \in \widetilde{\mathbb{C}^N}$ of the Bethe equation, we denote its image on the set of solutions by $[\boldsymbol{\xi}] \in \Xi$ so that $[\boldsymbol{\xi}] = [\sigma \cdot \boldsymbol{\xi}]$ for any permutation $\sigma \in S_N$.

We introduce the vector space generated by the Bethe vectors
\begin{equation}\label{e:bethe_space}
    \mathbb{W} = \mathrm{Span}\{ | \boldsymbol{\xi} \rangle \mid [\boldsymbol{\xi}] \in\Xi\}
\end{equation}
where $|\boldsymbol{\xi} \rangle$ is given by \eqref{e:bethe_vector}. Note that $\mathbb{W}$ is well-defined due to \Cref{l:well_def}\footnote{In the following, we assume we have an arbitrary fixed choice of representative $\boldsymbol{\xi}$ for each $[\boldsymbol{\xi}] \in \Xi$.}. Then, we have $\mathbb{W} \subseteq \mathbb{H}$ and, in the coming sections, we aim to show that the vector spaces are equal to show that the Bethe vectors generate the same space generated by the coordinate basis.

\begin{proposition}\label{p:cardinality}
    For generic $\Delta$, the cardinality of the set $\Xi$, given by \eqref{e:bethe_sols}, is equal to $\binom{L}{N}$.
\end{proposition}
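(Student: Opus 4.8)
The plan is to anchor the count at the free point $\Delta = 0$ and then argue that the number of admissible solutions is a deformation invariant in $\Delta$. At $\Delta = 0$ the Bethe equations \eqref{e:bethe_equation} decouple completely into $\xi_i^L = (-1)^{N-1}$, whose solution set is exactly the $L$ distinct roots $\{\eta(k) : k \in [L]\}$ of \eqref{e:roots_special}. A solution with pairwise distinct entries, taken up to the $S_N$-action as in \eqref{e:bethe_sols}, is then precisely an unordered choice of $N$ of these $L$ roots, so $|\Xi| = \binom{L}{N}$ at $\Delta = 0$. This already matches $\dim \mathbb{H} = \binom{L}{N}$, the count required for completeness.

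To propagate this to $\Delta \neq 0$, I would first clear denominators in \eqref{e:bethe_equation}. Using that the $j=i$ factor is identically $1$ and that \Cref{a:generic} keeps all denominators nonzero, each equation becomes a genuine polynomial relation $\xi_i^L \prod_{j}(1 + \xi_i\xi_j - 2\Delta\xi_j) = (-1)^{N-1}\prod_{j}(1 + \xi_i\xi_j - 2\Delta\xi_i)$, giving a polynomial system $F_i(\boldsymbol{\xi}) = 0$, $i = 1, \dots, N$. Because the system is $S_N$-symmetric, I would push it down to the elementary symmetric coordinates of $\boldsymbol{\xi}$ — equivalently, encode a solution by the monic polynomial $Q(z) = \prod_{i=1}^N (z - \xi_i)$ — so that the quotient by $S_N$ in \eqref{e:bethe_sols} is built in and pairwise-distinctness becomes nonvanishing of the discriminant of $Q$. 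The set $\Xi$ is thereby identified with the reduced points of an affine variety cut out by the symmetrized equations in the space of degree-$N$ monic polynomials.

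The engine of the proof is then continuity of the solution count: the projective, multiplicity-weighted number of solutions of a polynomial system is invariant under deformation of the coefficients provided no solution escapes to infinity, no two solutions collide, and the system stays reduced. I would verify at $\Delta = 0$ that all $\binom{L}{N}$ solutions are simple, finite, have pairwise distinct entries, and satisfy \Cref{a:generic}; homotopy invariance in $\Delta$ then transports the count $\binom{L}{N}$ to a punctured neighborhood of $0$, and an algebraic-nondegeneracy argument — solutions vary algebraically in $\Delta$, so coincidences occur only on a proper subvariety — extends it to all but finitely many $\Delta$, precisely the range appearing in \Cref{c:main_v2}.

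I expect the main obstacle to be boundary control: ruling out that, as $\Delta$ is turned on, an admissible solution degenerates by having two entries merge (in which case it is excluded from $\Xi$ by definition), runs off to $\infty$, or lands on the locus where \Cref{a:generic} fails, and symmetrically that no spurious solution enters. Tracking these degenerations — showing the count of genuinely admissible classes is neither lost nor gained under the deformation — is the delicate point that forces the exclusion of finitely many $\Delta$ and that underlies the broader completeness question. An alternative route that sidesteps the homotopy is to recast \eqref{e:bethe_equation} as a Baxter-type $TQ$ relation and count the admissible degree-$N$ polynomials $Q$ directly, but this trades the boundary analysis for an equally delicate polynomiality and degree argument.
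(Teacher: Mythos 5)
Your overall strategy --- anchor the count at $\Delta=0$, where the equations decouple and the admissible classes are visibly the $\binom{L}{N}$ unordered choices of distinct $L$-th roots, and then transport the count by a homotopy/deformation argument --- is the same in spirit as the paper's. The difference is in the engine: the paper does not run the continuation in $\Delta$ directly. Instead it changes variables to map \eqref{e:bethe_equation} onto the periodic ASEP Bethe equations up to a nonzero multiplicative constant on the left-hand side, and then invokes the counting argument of Brattain--Do--Saenz, which computes a Lefschetz-type \emph{coincidence number} of the two maps defined by the left and right sides of the equation. That coincidence number is a homotopy invariant by construction (in a compactified, multiplicity-weighted setting), so the constant factor is absorbed for free and the heavy lifting --- exactly the boundary control you flag --- is delegated to the cited reference.

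This points to the genuine gap in your proposal: the deformation-invariance step is asserted rather than established, and the part you yourself identify as ``the delicate point'' is the entire content of the proof. Concretely, the quantity that naive homotopy continuation or a B\'ezout-type count preserves is the multiplicity-weighted total number of solutions of the polynomial system, and at $\Delta=0$ that total (ordered) count is $L^N$, of which only the tuples with pairwise distinct entries contribute to $\Xi$; the rest are excluded by definition in \eqref{e:bethe_sols}. Preserving the total count therefore says nothing directly about $|\Xi|$: you must separately show that no admissible class degenerates (entries merging, escape to infinity, landing on the locus where \Cref{a:generic} fails) and that no previously non-admissible solution becomes admissible, and none of that is carried out. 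Your proposed passage to elementary symmetric coordinates is a reasonable way to build in the $S_N$-quotient and detect collisions via the discriminant, but it does not by itself supply the invariant that counts only the reduced, distinct-entry solutions. To close the argument along the paper's lines you would either need to import the coincidence-number machinery (which is designed to be the right homotopy invariant here), or actually carry out the degeneration analysis you defer --- which, as the paper's own hedging in \Cref{c:main_v2} to ``all but finitely many $\Delta$ with $|\Delta|<\varepsilon$'' suggests, is nontrivial.
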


\begin{proof}
For $\Delta =0$, it is straightforward to check that the proposition above is true as the solutions to the Bethe Equations are roots of unity. If $\Delta \neq 0$, then the counting argument becomes more complicated. The proof of \Cref{p:cardinality} follows from the counting argument in \cite[Sec.~4.3]{brattain_completeness_2017}, where the  Bethe Ansatz for the asymmetric simple exclusion process (ASEP) is considered. In particular, our Bethe equations become
\begin{equation}\label{e:BE-transformed}
    \left(\frac{q}{p}\right)^{L/2}(\xi_i')^L = (-1)^{N-1} \prod_{j=1}^N \frac{p + q \xi_i' \xi_j' - \xi_i'}{p + q \xi_i' \xi_j' - \xi_j'}, \quad i=1, 2, \dots, N
\end{equation}
after the change of variables \begin{equation}
    \xi_i=\xi_i'\sqrt{\frac{q}{p}}, \quad 2\Delta=\frac{1}{\sqrt{pq}}.
\end{equation}
The system of equations \eqref{e:BE-transformed} is the same as the system of equations for the ASEP, except for a multiplicative factor on the left side of the equations. The counting argument relies on a generalization of the Lefschetz fixed point number for the coincidence number of two functions, which correspond to the functions on the left and right side of the equality in \eqref{e:BE-transformed}. Moreover, the counting of solutions remains the same for a homotopic deformation of the functions. Thus, the counting argument remains the same despite the difference in the multiplicative constant, since scaling by a non-zero multiplicative constant gives a homotopic function. The result follows.
\end{proof}

Note that, by \Cref{p:cardinality}, it follows that the set of Bethe vectors $|\boldsymbol{\xi} \rangle$, for $\boldsymbol{\xi} \in \Xi$, are linearly independent if and only if $\mathbb{H} = \mathbb{W}$ since $\dim \mathbb{H} = \binom{L}{N}$.

\subsection{Linear Independence}\label{s:linear_independence}

We claim that the set of Bethe vectors generated by $\Xi$, given by \eqref{e:bethe_sols}, are linearly independent. Due to \Cref{p:cardinality}, it suffices to show
\begin{equation}
    \dim \mathbb{W} = \binom{L}{N}.
\end{equation}
In particular, we show that $\mathbb{H} \subseteq \mathbb{W}$ since, by definition, we know $\dim \mathbb{H} = \binom{L}{N}$ and $\mathbb{W} \subseteq \mathbb{H}$. We do this by representing each coordinate basis in $|\mathbf{x} \rangle \in \mathbb{H}$, for $x \in \mathcal{X}$, as a linear combination of Bethe vectors $| \boldsymbol{\xi} \rangle \in \mathbb{W}$, with $[\boldsymbol{\xi}] \in \Xi$. In the case $\Delta =0$, we give a precise proof. In the case $\Delta\neq0$, we verify our claim numerically. In \Cref{s: N=2 proof}, we give a proof for the two particle case ($N=2$) for $|\Delta|<\frac{L-1}{2L}$ and additional conditions on $\Delta$ stated in \Cref{a:N=2 proof assumptions}. 

We introduce a map $\ell: \mathbb{Z}^N \times \mathbb{C}^N \rightarrow \mathbb{C}$. Let $\Lambda (\boldsymbol{\xi})$ be a $N\times N$-matrix \footnote{A similar matrix appears in \cite{korepin_calculation_1982},\cite{gaudin_normalization_1981} in relation to the norm of the Bethe vectors.} with entries given by 
\begin{equation}\label{e:lambda_mat}
\begin{split}
    [\Lambda(\boldsymbol{\xi})]_{i, j} 
    &= \partial_{\xi_i}\left[\ln\left(S(\xi_i,\xi_j) \right)\right]\\
    &= \frac{-2\Delta(1 + \xi_j^2 - 2\Delta \xi_j)}{(1 + \xi_i \xi_j - 2\Delta \xi_i)(1 + \xi_i \xi_j - 2\Delta \xi_j)}, \hspace{10mm} i \neq j \\
    [\Lambda(\boldsymbol{\xi})]_{i, i}&=\partial_{\xi_i}\left[\ln\left(\xi_i^L \right) - \ln\left(\prod_{k\neq i}S(\xi_i,\xi_k) \right) \right]\\
    &= \frac{L}{\xi_i} - \sum_{j \neq i} [\Lambda(\boldsymbol{\xi})]_{i, j}, \hspace{41mm} i=1, \dots, N,
\end{split}
\end{equation}
for any $\boldsymbol{\xi} = (\xi_1, \dots, \xi_N) \in \mathbb{C}^N$ and $S(\xi_i,\xi_j) \coloneq -\frac{1+\xi_i\xi_j-2\Delta\xi_i}{1+\xi_i\xi_j-2\Delta\xi_j}$. Then, we define the map
\begin{equation}
    \tilde{\ell}(\mathbf{x}, \boldsymbol{\xi}) := \det(\Lambda(\boldsymbol{\xi}))^{-1} \prod_{i=1}^{N} \xi_{i}^{-x_i-1}
\end{equation}
and its twisted-symmetrization on the $\boldsymbol{\xi}$-variable
\begin{equation}\label{e:l_coeff}
    \ell(\mathbf{x}, \boldsymbol{\xi}) := \sum_{\sigma \in S_N} \frac{\tilde{\ell}(\mathbf{x}, \sigma \cdot \boldsymbol{\xi} )}{A_{\sigma}(\boldsymbol{\xi})} = \sum_{\sigma \in S_N}\left( A_{\sigma}(\boldsymbol{\xi}) \det[\Lambda(\boldsymbol{\xi})] \prod_{i =1}^N \xi_{\sigma(i)}^{x_i +1} \right)^{-1}
\end{equation}
for any $\mathbf{x} = (x_1, \dots, x_N) \in \mathbb{Z}^N$ and any $\boldsymbol{\xi} = (\xi_1, \dots, \xi_N) \in \mathbb{C}^N$. Note that the $\ell$-map is well-defined on $\mathbb{Z}^N \times \Xi$ since it is symmetric on the second argument. Moving forward, we assume that $\det(\Lambda(\boldsymbol{\xi}))\neq0$. For $\Delta=0$, this assumption holds, since the $\xi_k$ are roots of unity. We expect this to hold for $\Delta\neq0$ and verify this numerically, but we do not have a precise argument.  

\begin{conjecture}\label{c:main_v3}
For any coordinate basis element $|\mathbf{x}\rangle \in \mathbb{H}$ with $\mathbf{x} \in \mathcal{X}$, given by \eqref{e:coordinate_config} and \eqref{e:config_space}, we have
    \begin{equation}\label{e:v_tranformation}
    | \mathbf{x} \rangle = \sum_{[\boldsymbol{\xi}] \in \Xi} \ell(\mathbf{x}, \boldsymbol{\xi}) | \boldsymbol{\xi} \rangle
    \end{equation}
where $|\boldsymbol{\xi} \rangle \in \mathbb{W}$, with $\boldsymbol{\xi} \in \Xi$, is given by \eqref{e:bethe_vector}. Moreover, the set of Bethe vectors $\{|\boldsymbol{\xi} \rangle \mid [\boldsymbol{\xi}] \in \Xi \}$ are linearly independent.
\end{conjecture}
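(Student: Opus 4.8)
The plan is to prove the two assertions of \Cref{c:main_v3} together, exploiting that they are essentially equivalent once the count is in hand. Since $\mathbb{W}\subseteq\mathbb{V}$ and $\dim\mathbb{V}=\binom{L}{N}=\#\Xi$ by \Cref{p:cardinality}, the Bethe vectors $\{\,|\boldsymbol{\xi}\rangle \mid [\boldsymbol{\xi}]\in\Xi\,\}$ are linearly independent if and only if they span $\mathbb{V}$; so it suffices to establish the expansion \eqref{e:v_tranformation} for every $\mathbf{x}\in\mathcal{X}$, as this forces each $|\mathbf{x}\rangle\in\mathbb{W}$ and hence $\mathbb{V}=\mathbb{W}$. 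Substituting the definition \eqref{e:bethe_vector} of $|\boldsymbol{\xi}\rangle$ into \eqref{e:v_tranformation} and matching coordinates, the expansion is equivalent to the biorthogonality relation
\begin{equation}
  \sum_{[\boldsymbol{\xi}]\in\Xi} \ell(\mathbf{x},\boldsymbol{\xi})\,u(\boldsymbol{\xi},\mathbf{y}) = \delta_{\mathbf{x},\mathbf{y}}, \qquad \mathbf{x},\mathbf{y}\in\mathcal{X},
\end{equation}
that is, the $\binom{L}{N}\times\binom{L}{N}$ matrices $[\ell(\mathbf{x},\boldsymbol{\xi})]$ and $[u(\boldsymbol{\xi},\mathbf{y})]$ are mutual inverses. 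The whole problem thus reduces to this single summation identity over the Bethe roots.

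For $\Delta=0$ I would prove this identity directly, because every object collapses to a determinant. At $\Delta=0$ the amplitude \eqref{e:amplitude} is $A_\sigma(\boldsymbol{\xi})=\operatorname{sgn}(\sigma)$, so $u(\boldsymbol{\xi},\mathbf{x})=\det[\xi_j^{x_i}]_{i,j=1}^N$; the matrix \eqref{e:lambda_mat} becomes diagonal, $\Lambda(\boldsymbol{\xi})=\operatorname{diag}(L/\xi_1,\dots,L/\xi_N)$, and a short computation from \eqref{e:l_coeff} gives $\ell(\mathbf{x},\boldsymbol{\xi})=L^{-N}\det[\xi_j^{-x_i}]_{i,j=1}^N$. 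The Bethe equations \eqref{e:bethe_equation} reduce to $\xi_i^L=(-1)^{N-1}$, whose solutions are exactly the $L$ roots $\eta(0),\dots,\eta(L-1)$ of \eqref{e:roots_special}; hence $\Xi$ is in bijection with the $N$-element subsets $K\subseteq[L]$ via the representative $\boldsymbol{\xi}_K=(\eta(k))_{k\in K}$. Let $F$ be the $L\times L$ matrix $F_{x,k}=\eta(k)^x$. Orthogonality of roots of unity gives $F^{-1}=L^{-1}F^{*}$, so $(F^{-1})_{k,x}=L^{-1}\eta(k)^{-x}$. Then $u(\boldsymbol{\xi}_K,\mathbf{y})$ and $\ell(\mathbf{x},\boldsymbol{\xi}_K)$ are precisely the $N\times N$ minors of $F$ and of $F^{-1}$ on the relevant rows and columns $K$, so the required identity is exactly the entrywise form of $\wedge^N(F)\,\wedge^N(F^{-1})=\wedge^N(FF^{-1})=I$ for the $N$-th compound (exterior power) matrices, which is Cauchy--Binet. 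This settles \Cref{c:main_v3} for $\Delta=0$.

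The main obstacle is the case $\Delta\neq 0$, where none of these simplifications survive: $A_\sigma$ is a genuine product of rational factors, $u$ is a deformed determinant, $\Lambda$ is a full (Gaudin/Izergin--Korepin-type norm) matrix sitting in the denominator of $\ell$, and the Bethe equations \eqref{e:bethe_equation} are transcendental with no explicit root set to sum over. I see two plausible routes. The first is a deformation argument in the spirit of \cite{tarasov_completeness_1995}: the biorthogonality matrix depends algebraically on $\Delta$ and equals $I$ at $\Delta=0$, so if one can show $\det[u(\boldsymbol{\xi},\mathbf{y})]\neq 0$ for $\Delta$ near $0$ and that the explicit $\ell$ coincides with the algebraic continuation of the inverse, the identity propagates; the delicate points are non-degeneracy of the Bethe roots and control of the vanishing locus of $\det\Lambda$. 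The second is a direct evaluation, replacing the sum over solutions of \eqref{e:bethe_equation} by a multiple contour integral encircling them and reducing the completeness sum to a residue computation governed by the Izergin--Korepin determinant, which is where the new determinant identities announced in the introduction would enter. I expect the summation over Bethe roots to be the genuinely hard step, and this is precisely why the $\Delta\neq 0$ statement is offered as a conjecture with numerical support rather than as a theorem.
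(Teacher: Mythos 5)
Your reduction to the biorthogonality identity $\sum_{[\boldsymbol{\xi}]\in\Xi}\ell(\mathbf{x},\boldsymbol{\xi})u(\boldsymbol{\xi},\mathbf{y})=\delta_{\mathbf{x},\mathbf{y}}$ is exactly the paper's equation \eqref{e:identity}, and your assessment of the overall status matches the paper's: only the $\Delta=0$ case is actually proved (\Cref{l:main_v3}), while $\Delta\neq 0$ is left as a conjecture supported numerically, so you are right not to claim more. For $\Delta=0$ your argument is correct but packaged differently. The paper expands $u$ and $\ell$ as permutation sums, extends the sum over $\Xi$ to all $L^N$ tuples of roots of unity at the cost of a $1/N!$ (using that both functions vanish on tuples with repeated entries), and then evaluates the resulting finite geometric sums to produce indicator functions. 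You instead observe that $u(\boldsymbol{\xi}_K,\mathbf{y})$ and $\ell(\mathbf{x},\boldsymbol{\xi}_K)$ are the $N\times N$ minors of the (phase-shifted) DFT matrix $F$ and of $F^{-1}=L^{-1}F^{*}$ indexed by the subset $K\subseteq[L]$, so the identity is the $(\mathbf{y},\mathbf{x})$ entry of $\wedge^N(F)\wedge^N(F^{-1})=I$, i.e.~Cauchy--Binet; your computations of $A_\sigma=\operatorname{sgn}(\sigma)$, of $\Lambda=\operatorname{diag}(L/\xi_i)$, and of the unitarity of $L^{-1/2}F$ even for the $N$-even phase $\phi=e^{\pi i/L}$ all check out. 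The two arguments are mathematically the same orthogonality-of-characters computation---the paper's extension of the sum to degenerate tuples is precisely the step that Cauchy--Binet absorbs---but your version is more structural and makes the ``transform/inverse transform'' nature of the $\ell$--$u$ pair explicit, whereas the paper's hands-on version is the one that suggests how the computation might deform to $\Delta\neq 0$. Your two sketched routes for $\Delta\neq 0$ (algebraic deformation \`a la Tarasov--Varchenko, or contour-integral/residue evaluation tied to the Izergin--Korepin determinant) are both consistent with how the authors say they found the formula, but, as you acknowledge, neither is carried out here or in the paper.
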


The formula \eqref{e:v_tranformation}, for $\Delta \neq 0$, is new (to the best of the authors' knowledge). The formula was obtained through non-rigorous contour integral formulas, similar to the formulas from \cite{tracy_asymptotics_2009},\cite{liu_integral_2020},\cite{li_contour_2023}, and recognizing a pattern for the $N \leq 3$ cases. \Cref{c:main_v3} is equivalent to 
\begin{equation}\label{e:identity}
    \mathds{1}(\mathbf{x} = \mathbf{y}) = \sum_{[\boldsymbol{\xi}] \in \Xi}  \ell(\mathbf{y}, \boldsymbol{\xi}) u(\boldsymbol{\xi}, \mathbf{x})
\end{equation}
for all $\mathbf{x}, \mathbf{y} \in \mathcal{X}$ where the $u$-function is given by \eqref{e:coefficients} and $\mathds{1}(\mathbf{x} = \mathbf{y})$ is the indicator function\footnote{A formula for the indicator function based on Plancherel measures in the case of the XXZ on $\mathbb{Z}$ can be found in \cite{Gutkin2000-gr}.}. We check \eqref{e:identity} numerically using a \emph{Python} program for several $\Delta \neq 0$ cases; see \Cref{s:numerical}. We have a precise proof for $\Delta =0$ and for $N=2$ under the conditions of \Cref{a:N=2 proof assumptions} in \Cref{s: N=2 proof}.

\begin{lemma}\label{l:main_v3}
\Cref{c:main_v3} is true for $\Delta =0$.
\end{lemma}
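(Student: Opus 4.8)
The plan is to prove the equivalent identity \eqref{e:identity} by specializing every ingredient to $\Delta = 0$ and recognizing the left-hand side as a single application of the Cauchy--Binet formula. First I would record the $\Delta=0$ simplifications. When $\Delta = 0$ the amplitude \eqref{e:amplitude} collapses to $A_\sigma(\boldsymbol\xi) = \mathrm{sgn}(\sigma)$, so the coefficient \eqref{e:coefficients} becomes the determinant $u(\boldsymbol\xi,\mathbf x) = \det\!\big[\xi_j^{x_i}\big]_{i,j=1}^N$. Likewise the matrix \eqref{e:lambda_mat} becomes diagonal, with $[\Lambda(\boldsymbol\xi)]_{i,i} = L/\xi_i$ and $[\Lambda(\boldsymbol\xi)]_{i,j}=0$ for $i\neq j$, so $\det\Lambda(\boldsymbol\xi) = L^N/\prod_i \xi_i$; substituting this into \eqref{e:l_coeff} and using $A_\sigma = \mathrm{sgn}(\sigma)$ again, the twisted symmetrization collapses into $\ell(\mathbf x,\boldsymbol\xi) = L^{-N}\det\!\big[\xi_j^{-x_i}\big]_{i,j=1}^N$. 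Finally, at $\Delta=0$ the Bethe equations \eqref{e:bethe_equation} reduce to $\xi_i^L = (-1)^{N-1}$, whose solutions are exactly the $L$ roots $\eta(0),\dots,\eta(L-1)$ of \eqref{e:roots_special}; hence $\Xi$ is precisely the family of $N$-element subsets of $\{\eta(0),\dots,\eta(L-1)\}$, consistent with $|\Xi| = \binom{L}{N}$ from \Cref{p:cardinality}.

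Next I would evaluate the sum. Since both $u$ and $\ell$ are determinants, the summand $\ell(\mathbf y,\boldsymbol\xi)\,u(\boldsymbol\xi,\mathbf x)$ is symmetric in the entries of $\boldsymbol\xi$ and therefore well-defined on $\Xi$; for a subset $S=\{k_1<\dots<k_N\}\subseteq[L]$ it equals $L^{-N}$ times the product of the two $N\times N$ minors formed by the columns $S$ of the $N\times L$ matrices $A$ and $B$ defined by $A_{a,k}=\eta(k)^{-y_a}$ and $B_{c,k}=\eta(k)^{x_c}$ (the ordering ambiguity in $S$ cancels between the two minors). Summing over all $N$-subsets and applying Cauchy--Binet gives
\begin{equation}
    \sum_{[\boldsymbol\xi]\in\Xi}\ell(\mathbf y,\boldsymbol\xi)\,u(\boldsymbol\xi,\mathbf x) = \frac{1}{L^N}\det\!\big(A B^{\top}\big), \qquad \big(AB^{\top}\big)_{a,c} = \sum_{k=0}^{L-1}\eta(k)^{\,x_c - y_a}.
\end{equation}
The key computation is the power sum over the roots: for $N$ odd the $\eta(k)$ are the $L$-th roots of unity and for $N$ even they are the roots of $\eta^L=-1$, and in both cases $\sum_{k}\eta(k)^{p}$ vanishes unless $p\equiv 0 \pmod L$, in which case it equals $L$ up to a sign that is trivial when $p=0$. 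Because $x_c,y_a\in\{0,\dots,L-1\}$, the exponent $x_c-y_a$ lies in $(-L,L)$, so the congruence $x_c-y_a\equiv 0\pmod L$ forces $x_c=y_a$; hence $AB^{\top} = L\,M$ with $M_{a,c}=\mathds 1(x_c=y_a)$.

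It then remains to compute $\det M$. As $\mathbf x$ and $\mathbf y$ are strictly increasing tuples with distinct entries, $M$ is the identity when $\{x_i\}=\{y_i\}$ (equivalently $\mathbf x=\mathbf y$, giving $\det M = 1$), and otherwise $M$ has a zero row and $\det M = 0$; thus $\det(AB^{\top}) = L^N\,\mathds 1(\mathbf x=\mathbf y)$, which yields \eqref{e:identity} after dividing by $L^N$. This establishes the expansion \eqref{e:v_tranformation}, so $\mathbb V\subseteq\mathbb W$ and therefore $\mathbb V=\mathbb W$; combined with $|\Xi|=\binom{L}{N}=\dim\mathbb V$ from \Cref{p:cardinality}, the Bethe vectors are linearly independent. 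The only genuinely delicate steps are the algebraic simplifications of the first paragraph---verifying that the twisted symmetrization in \eqref{e:l_coeff} really collapses to a clean determinant at $\Delta=0$---and tracking the sign in the even-$N$ power sum; everything after that is the bookkeeping of Cauchy--Binet together with the range restriction $|x_c-y_a|<L$, which I expect to be routine.
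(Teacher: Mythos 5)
Your proof is correct, and it reaches the same target identity \eqref{e:identity} as the paper, but it is organized around a different key lemma. The paper expands $\ell(\mathbf{y},\boldsymbol{\xi})u(\boldsymbol{\xi},\mathbf{x})$ as a double sum over permutations, extends the sum over $\Xi$ to a sum over \emph{all} $N$-tuples of roots (inserting a $1/N!$ and using that terms with repeated entries vanish), and then pushes the $k_i$-sums inside the product to get finite geometric sums; you instead observe that at $\Delta=0$ both $u$ and $\ell$ are literally $N\times N$ minors of the $N\times L$ matrices $B_{c,k}=\eta(k)^{x_c}$ and $A_{a,k}=\eta(k)^{-y_a}$, so the sum over $[\boldsymbol{\xi}]\in\Xi$ is exactly the Cauchy--Binet expansion of $L^{-N}\det(AB^{\top})$. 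These are the same computation in different clothing --- Cauchy--Binet is precisely the ``extend, divide by $N!$, and swap sums with products'' manipulation done once and for all --- and both reduce to the orthogonality relation $\sum_{k\in[L]}\eta(k)^{p}=0$ unless $p\equiv 0\pmod L$, together with the observation that $|x_c-y_a|<L$ forces $p=0$ (which also disposes of the phase $\phi=e^{\pi i/L}$ in the even-$N$ case, exactly as in the paper). What your packaging buys is the elimination of the slightly delicate extension-of-the-sum step and the double permutation bookkeeping: the final case analysis is just ``$\det M=1$ if $\mathbf{x}=\mathbf{y}$, else $M$ has a zero row,'' which is cleaner than the paper's three-point justification of the last equality. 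All of your intermediate simplifications check out: $\Lambda$ is indeed diagonal at $\Delta=0$ with $\det\Lambda=L^{N}/\prod_i\xi_i$, so $\ell(\mathbf{x},\boldsymbol{\xi})=L^{-N}\det[\xi_j^{-x_i}]$, matching the paper's formula, and your closing deduction of linear independence from $\mathbb{V}=\mathbb{W}$ and $|\Xi|=\binom{L}{N}$ is the same as the paper's.
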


\begin{proof}[Proof of \Cref{c:main_v3} for $\Delta =0$.]

Given $\Delta=0$, we have that
\begin{equation}
    u(\boldsymbol{\xi},\mathbf{x})=\sum_{\sigma\in S_N}\mathrm{sgn}(\sigma)\prod_{i=1}^N\xi_{\sigma(i)}^{x_i}, \quad \ell(\mathbf{y},\boldsymbol{\xi})=\sum_{\sigma\in S_N}\frac{1}{\mathrm{sgn}(\sigma)}\prod_{i=1}^N\frac{\xi_{\sigma(i)}^{-y_i}}{L}.
\end{equation}
Then,
\begin{equation}
\begin{split}
    \ell(\mathbf{y},\boldsymbol{\xi})u(\boldsymbol{\xi},\mathbf{x}) &= \sum_{\sigma, \mu\in S_N} \mathrm{sgn}(\sigma)\mathrm{sgn}(\mu)\prod_{i=1}^N\frac{\xi_{\mu(i)}^{x_i}\xi_{\sigma(i)}^{-y_i}}{L}\\
    &= \sum_{\sigma, \mu\in S_N} \mathrm{sgn}(\sigma)\mathrm{sgn}(\mu) \prod_{i=1}^N\frac{\xi_{i}^{x_{\mu^{-1}(i)}-y_{\sigma^{-1}(i)}}}{L}\\
    &= \sum_{\sigma, \mu\in S_N} \mathrm{sgn}(\sigma)\mathrm{sgn}(\mu) \prod_{i=1}^N \frac{\xi_{i}^{x_{\mu(i)}-y_{\sigma(i)}}}{L}.
\end{split}
\end{equation}

The solutions to the Bethe equations \eqref{e:bethe_equation} when $\Delta=0$ are the $\binom{L}{N}$ possible $N$-tuples of $L$'th roots of unity (up to permutation of the entries). That is, setting $\eta=e^{\frac{2\pi i}{L}}$ and $\phi=e^{\frac{\pi i}{L}}$ when $N$ is even and $\phi=1$ when $N$ is odd, we have that 
\begin{equation}
    \Xi=\left\{\phi\left(\eta^{k_1},\eta^{k_2},\ldots,\eta^{k_N}\right)=\boldsymbol{\xi} \mid \{k_1,\ldots,k_N\}\subset [L] \right\}.
\end{equation}
Note that $\ell(y,\boldsymbol{\xi})= u(\boldsymbol{\xi},x)=0$ when $\boldsymbol{\xi}$ has repeated entries. Then, when we sum over Bethe roots we may sum over all roots of unity\footnote{It may be curious to the reader that we extend our sum to include terms that we know are zero, but including these terms is crucial to obtain the finite geometric sums allowing for great simplification.} 
\begin{equation}
\begin{split}
\sum_{[\boldsymbol{\xi}] \in \Xi} \ell(\mathbf{y},\boldsymbol{\xi})u(\boldsymbol{\xi},\mathbf{x}) &= \frac{1}{N!}\sum_{k_1=1}^L \cdots \sum_{k_N=1}^L \sum_{\sigma, \mu\in S_N} \mathrm{sgn}(\sigma)\mathrm{sgn}(\mu) \prod_{i=1}^N\frac{\left(\phi\eta^{k_i}\right)^{(x_{\mu(i)}-\,y_{\sigma(i)})}}{L}\\
&= \frac{1}{N!} \sum_{\sigma, \mu \in S_N} \mathrm{sgn}(\sigma)\mathrm{sgn}(\mu)\prod_{i=1}^N\mathds{1}(x_{\mu(i)}=y_{\sigma(i)})\\
&=\mathds{1}(\mathbf{x} = \mathbf{y}).
\end{split}
\end{equation}
The factor $1/N!$ in the first equality is due to the fact that we only consider Bethe roots up to permutation of the coordinates. Periodicity is satisfied in the indicator function equality due to the periodicity of the $L$'th roots of unity. We obtain the second identity by taking the summations inside the product which gives us 
\begin{equation}
    \begin{split}
        \prod_{i=1}^N\phi^{(x_{\mu_{(i)}}-\,y_{\sigma_{(i)}})}\left(\sum_{k_i=1}^L\frac{\eta^{k_i(x_{\mu(i)}-\,y_{\sigma(i)})}}{L}\right)  &= \prod_{i=1}^N\mathds{1}(x_{\mu(i)}=y_{\sigma(i)}) 
    \end{split}
\end{equation}
where it is clear the finite geometric sum over $k_i$ is $1$ if $x_{\mu(i)}=y_{\sigma(i)}$ and the sum is $0$ otherwise since $\frac{1-\eta^{L(x_{\mu(i)}-y_{\sigma(i)})}}{1-\eta^{(x_{\mu(i)}-y_{\sigma(i)})}} =0$ because $\eta$ is an $L$'th root of unity.  
The last identity holds because: (1) the product is non-zero only if the set of $x$-coordinates is equal to the set of $y$-coordinates, (2) the $x$- and $y$-coordinates are ordered, and (3) if $\mathbf{x} = \mathbf{y}$, the product is equal to one if and only if $\sigma = \mu$.

\end{proof}

\subsection{Completeness of the Bethe Ansatz}\label{s:complete}

The Bethe Ansatz is said to be \emph{complete} if the Bethe vectors, $| \boldsymbol{\xi} \rangle \in \mathbb{W}$ with $[\boldsymbol{\xi}] \in \Xi$ given by \eqref{e:bethe_vector}, \eqref{e:bethe_space} and \eqref{e:bethe_sols} respectively,  
\begin{itemize}
    \item [(i)] are eigenvectors for the Hamiltonian $H_{\mathrm{XXZ}}$ given by \eqref{e:hamiltonian},
    \item [(ii)] generate the vector space $\mathbb{H}$ given by \eqref{e:config_space}, and
    \item [(iii)] are linearly independent.
\end{itemize}
We have shown that the Bethe vectors are eigenvectors of the Hamiltonian $H_{\mathrm{XXZ}}$ if the Bethe vector is not identically equal to zero; see \Cref{p:eigenvector}. We have introduced a collection of Bethe vectors $\Xi$ with pairwise-distinct entries and cardinality equal to $\dim \mathbb{H}$; see \Cref{p:cardinality}. We have also introduced a representation of the coordinate basis vectors $| \mathbf{x} \rangle \in \mathbb{H}$ as a linear combination of Bethe vectors; see \Cref{c:main_v3} and \eqref{e:v_tranformation}. These are sufficient to make the following claim.

\begin{proposition}
    Assume \Cref{c:main_v3} is true. Then, the Bethe Ansatz is complete. This means that the set of Bethe vectors $\{ |\boldsymbol{\xi}\rangle \mid [\boldsymbol{\xi}] \in \Xi \}$, given by \eqref{e:bethe_vector} and \eqref{e:bethe_sols} respectively, is a complete basis for the vector space $\mathbb{H}$, given by \eqref{e:config_space}, so that each Bethe vector is an eigenvector of the Hamiltonian $H_{XXZ}$, given by \eqref{e:hamiltonian}.
\end{proposition}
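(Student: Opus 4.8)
The plan is to verify the three defining conditions (i)--(iii) of completeness listed immediately above the statement, by assembling results already established in this section. Conditions (i) and (iii) turn out to be essentially immediate, so the real content of the argument is using \Cref{c:main_v3} to bootstrap the spanning property; linear independence and the eigenvector property then follow. Since the hard analytic input is packaged into the assumed conjecture, I expect no genuine calculation here — the only delicate point is the logical ordering of the three conditions, which I address below.

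First I would establish the spanning condition (ii). \Cref{c:main_v3} provides, for every coordinate basis vector $|\mathbf{x}\rangle$ with $\mathbf{x} \in \mathcal{X}$, the explicit finite expansion \eqref{e:v_tranformation} as a linear combination of Bethe vectors $|\boldsymbol{\xi}\rangle$ with $[\boldsymbol{\xi}] \in \Xi$. As the vectors $|\mathbf{x}\rangle$ form a basis of $\mathbb{V}$, this shows $\mathbb{V} \subseteq \mathbb{W}$. Combined with the reverse inclusion $\mathbb{W} \subseteq \mathbb{V}$ recorded just after the definition \eqref{e:bethe_space}, I would conclude $\mathbb{V} = \mathbb{W}$, so the Bethe vectors indexed by $\Xi$ span $\mathbb{V}$.

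Next, for linear independence (iii), I would invoke \Cref{p:cardinality}, which gives $|\Xi| = \binom{L}{N} = \dim \mathbb{V}$. A spanning set of a finite-dimensional space whose cardinality equals the dimension is automatically a basis, so the Bethe vectors are linearly independent; equivalently, this is the independence assertion already contained in \Cref{c:main_v3}, and the two routes agree. In particular, no $|\boldsymbol{\xi}\rangle$ with $[\boldsymbol{\xi}] \in \Xi$ is the zero vector — a fact I will need in the last step.

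Finally, for the eigenvector condition (i), I would appeal to \Cref{p:eigenvector}: each representative $\boldsymbol{\xi}$ of a class in $\Xi$ solves the Bethe equations \eqref{e:bethe_equation} and satisfies \Cref{a:generic}, and by the previous paragraph $|\boldsymbol{\xi}\rangle \neq 0$, so $|\boldsymbol{\xi}\rangle$ is an eigenvector of $H_{\mathrm{XXZ}}$ with eigenvalue $E(\boldsymbol{\xi})$ from \eqref{e:eigenvalue}. The one subtle point in the whole argument is precisely this nonvanishing hypothesis feeding \Cref{p:eigenvector}, which is why I would extract it from linear independence rather than argue it directly; once the chain span $\Rightarrow$ independence $\Rightarrow$ nonvanishing $\Rightarrow$ eigenvector is closed in this order, all three conditions hold and the Bethe Ansatz is complete. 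I would also note that \Cref{a:generic} is only a generic condition on $\Delta$ (genuinely so for odd $L$), so the claim is understood modulo the finitely many exceptional anisotropies described in \Cref{c:main_v2}.
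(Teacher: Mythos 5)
Your proposal is correct and follows essentially the same route as the paper: use \Cref{c:main_v3} for spanning, \Cref{p:cardinality} to upgrade the spanning set of cardinality $\binom{L}{N}=\dim\mathbb{V}$ to a linearly independent basis, deduce nonvanishing of each $|\boldsymbol{\xi}\rangle$, and then invoke \Cref{p:eigenvector}. Your explicit attention to the logical order (span $\Rightarrow$ independence $\Rightarrow$ nonvanishing $\Rightarrow$ eigenvector) matches the paper's argument, just stated more carefully.
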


\begin{proof}
The result follows from the discussion before the statement of the result. To show that the set of Bethe vectors is a complete basis for the vector space $\mathbb{H}$, it suffices to show that the cardinality of the set of Bethe vectors is equal to the dimension of the vector space $\mathbb{H}$ and that the Bethe vectors generate the entire space $\mathbb{H}$. This follows from $\Cref{p:cardinality}$ and $\Cref{c:main_v3}$. The latter result we take to be true by assumption, but it is also verified numerically in \Cref{s:numerical} and proven for $N=2$ in \Cref{s: N=2 proof} under certain assumptions on the values of $\Delta$ stated in \Cref{a:N=2 proof assumptions}. Moreover, we know that the Bethe vectors are not identically equal to zero since we have just argued that the Bethe vectors give a complete basis for the vector space $\mathbb{H}$. Then, it follows that the Bethe vectors are eigenvectors of the Hamiltonian $H_{XXZ}$ by \Cref{p:eigenvector}. Thus, we have established the desired result. 
\end{proof}

\section{Computations and plots}\label{s:functions}

In the previous sections, we have provided an explicit construction of the completeness of the Bethe Ansatz for the periodic XXZ model. With this construction, namely the $\ell$ and $u$ functions, we can provide an exact solution for the Schr\"odinger equation \eqref{e:schrodinger} and compute observables of interest, like the one-point function. For the statements in previous sections that remain conjectured (like \eqref{c:main_v3}), we can provide numerical evidence for their validity.

\subsection{Solution to the Schr\"odinger equation}

\begin{proposition}\label{p:solution}
Assume \Cref{c:main_v3} is true. Then, the evolution of the Heisenberg-Ising (XXZ) spin-$\tfrac{1}{2}$ chain with deterministic initial conditions given by any basis vector $|\mathbf{y} \rangle$, i.e.~the solution of the initial value problem \eqref{e:schrodinger}, is given by
    \begin{equation}
    |\Psi(t) \rangle  = \sum_{\mathbf{x} \in \mathcal{X}} \left(\sum_{[\boldsymbol{\xi}] \in \Xi} \ell(\mathbf{y}, \boldsymbol{\xi})u(\boldsymbol{\xi}, \mathbf{x}) e^{-i t E(\boldsymbol{\xi})}\right) | \mathbf{x} \rangle
    \end{equation}
with the functions $\ell(\mathbf{y}, \boldsymbol{\xi})$ and $u(\boldsymbol{\xi},\mathbf{x})$ given by \eqref{e:l_coeff} and $\eqref{e:coefficients}$, respectively, and the function $E(\boldsymbol{\xi})$ given by \eqref{e:eigenvalue}.
\end{proposition}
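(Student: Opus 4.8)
The plan is to diagonalize the time evolution in the Bethe basis and reduce the statement to two checks: that the proposed state matches the initial data at $t=0$, and that it solves the Schr\"odinger equation \eqref{e:schrodinger} for all $t$. Because $\mathbb{H}$ is finite-dimensional and $H_{\mathrm{XXZ}}$ is a fixed linear (Hermitian) operator, the initial value problem \eqref{e:schrodinger} has a unique solution, namely $|\Psi(t)\rangle = e^{-it H_{\mathrm{XXZ}}}|\mathbf{y}\rangle$. Hence it suffices to exhibit a state satisfying both the differential equation and the initial condition; uniqueness then forces it to equal the stated formula. No analytic subtleties (convergence, domains) arise since every sum involved is finite.

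First I would expand the initial vector in the Bethe basis. Assuming \Cref{c:main_v3}, the coordinate basis vector $|\mathbf{y}\rangle$ decomposes as
\begin{equation}
    |\mathbf{y}\rangle = \sum_{[\boldsymbol{\xi}] \in \Xi} \ell(\mathbf{y}, \boldsymbol{\xi}) \, |\boldsymbol{\xi}\rangle .
\end{equation}
Under this assumption the Bethe vectors $\{|\boldsymbol{\xi}\rangle \mid [\boldsymbol{\xi}] \in \Xi\}$ form a complete basis, so in particular each is non-zero; together with \Cref{a:generic} this is exactly the hypothesis of \Cref{p:eigenvector}, which gives $H_{\mathrm{XXZ}}|\boldsymbol{\xi}\rangle = E(\boldsymbol{\xi})\,|\boldsymbol{\xi}\rangle$ with $E(\boldsymbol{\xi})$ as in \eqref{e:eigenvalue}. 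I would then define the candidate
\begin{equation}
    |\Psi(t)\rangle := \sum_{[\boldsymbol{\xi}] \in \Xi} \ell(\mathbf{y}, \boldsymbol{\xi}) \, e^{-i t E(\boldsymbol{\xi})} \, |\boldsymbol{\xi}\rangle
\end{equation}
and verify directly, differentiating the finite sum term by term, that $i\tfrac{d}{dt}|\Psi(t)\rangle = \sum_{[\boldsymbol{\xi}] \in \Xi} \ell(\mathbf{y}, \boldsymbol{\xi})\, E(\boldsymbol{\xi})\, e^{-i t E(\boldsymbol{\xi})}\,|\boldsymbol{\xi}\rangle = H_{\mathrm{XXZ}}|\Psi(t)\rangle$. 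Setting $t=0$ recovers the expansion above, so $|\Psi(0)\rangle = |\mathbf{y}\rangle$, confirming both conditions of \eqref{e:schrodinger}.

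Finally I would substitute the definition $|\boldsymbol{\xi}\rangle = \sum_{\mathbf{x} \in \mathcal{X}} u(\boldsymbol{\xi}, \mathbf{x})\,|\mathbf{x}\rangle$ from \eqref{e:bethe_vector} and interchange the two finite summations to collect the coefficient of each coordinate basis vector $|\mathbf{x}\rangle$, which produces exactly the stated form of $|\Psi(t)\rangle$. The one point deserving a remark, rather than a genuine obstacle, is well-definedness: each summand must depend only on the class $[\boldsymbol{\xi}] \in \Xi$ and not on the chosen ordered representative. This is guaranteed because $E(\boldsymbol{\xi})$ is manifestly symmetric in its arguments, while the product $\ell(\mathbf{y},\boldsymbol{\xi})\,|\boldsymbol{\xi}\rangle$ is invariant under the $S_N$-action by \Cref{l:well_def} and \Cref{c:sym_bethe_vec} (the compensating factors $\lambda$ cancel). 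Beyond this bookkeeping, the entire argument is pure linearity on a finite-dimensional space, with all the substantive content already absorbed into \Cref{c:main_v3} and \Cref{p:eigenvector}.
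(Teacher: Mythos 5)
Your proposal is correct and follows essentially the same route as the paper's proof: verify the Schr\"odinger equation term by term using the eigenvector property from \Cref{p:eigenvector}, and verify the initial condition at $t=0$ using the decomposition of $|\mathbf{y}\rangle$ furnished by \Cref{c:main_v3}. Your explicit appeals to uniqueness of the finite-dimensional linear IVP and to well-definedness over the classes $[\boldsymbol{\xi}]\in\Xi$ are sensible additions the paper leaves implicit, but they do not change the argument.
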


\begin{proof}
First let us show that $|\Psi(t)\rangle$ satisfies the Schr\"odinger equation. We have
\begin{equation}
    \begin{split}
    i\frac{d}{dt}|\Psi(t)\rangle &= \sum_{\mathbf{x}\in \mathcal{X}}\left(\sum_{[\boldsymbol{\xi}]\in\Xi}E(\boldsymbol{\xi})\ell(\mathbf{y},\boldsymbol{\xi})u(\boldsymbol{\xi},\mathbf{x})e^{-itE(\boldsymbol{\xi})} \right)|\mathbf{x}\rangle \\
    &=\sum_{[\boldsymbol{\xi}]\in\Xi}\ell(\mathbf{y},\boldsymbol{\xi})e^{-itE(\boldsymbol{\xi})}E(\boldsymbol{\xi})\left(\sum_{x\in \mathcal{X}}u(\boldsymbol{\xi},\mathbf{x})|\mathbf{x}\rangle\right) \\ 
    &=\sum_{[\boldsymbol{\xi}]\in\Xi}\ell(\mathbf{y},\boldsymbol{\xi})e^{-itE(\boldsymbol{\xi})}H_{XXZ}|\boldsymbol{\xi}\rangle \\
    &= H_{XXZ}|\Psi(t)\rangle.
    \end{split}
\end{equation}
Now let's inspect $|\Psi(0)\rangle$. We have
\begin{equation}
    \begin{split}
        |\Psi(0)\rangle &= \sum_{\mathbf{x}\in \mathcal{X}}\left(\sum_{[\boldsymbol{\xi}]\in\Xi}\ell(\mathbf{y},\boldsymbol{\xi})u(\boldsymbol{\xi},\mathbf{x}) \right)|x\rangle \\
        &= \sum_{\mathbf{x}\in \mathcal{X}} \mathds{1}(\mathbf{x}=\mathbf{y}) |\mathbf{x}\rangle = |\mathbf{y}\rangle
    \end{split}
\end{equation}
where the second equality holds since we are assuming that \eqref{c:main_v3} is true. Thus, we have shown that $|\Psi(t)\rangle$ satisfies the initial value problem $\eqref{e:schrodinger}$.  
\end{proof}

\subsection{Probability function}

With an exact solution to the Schr\"odinger equation satisfying deterministic initial conditions, we can, in principle, compute observables explicitly. Of particular interest is the probability function \eqref{e:prob} from which we can extract statistics and potentially observe universal behavior under certain scaling limits. 

\begin{lemma}
Assume \Cref{c:main_v3} is true. Then, the probability function for the Heisenberg-Ising XXZ spin-$\tfrac{1}{2}$ chain, given by \eqref{e:schrodinger} with deterministic initial conditions $\mathbf{y} \in \mathcal{X}$, is given by 
\begin{equation}\label{e:prob_ba}
    \mathbb{P}\left(| \Psi(t) \rangle = | \mathbf{x} \rangle\right) = \left|\sum_{[\boldsymbol{\xi}] \in \Xi} \ell(\mathbf{y}, \boldsymbol{\xi})u(\boldsymbol{\xi},\mathbf{x}) e^{-i t E(\boldsymbol{\xi})} \right|^2
\end{equation}
where $u(\boldsymbol{\xi}, \mathbf{x})$ is given by \eqref{e:coefficients} and $\ell(\mathbf{y}, \boldsymbol{\xi})$ is given by \eqref{e:l_coeff}.
\end{lemma}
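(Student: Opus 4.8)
The plan is to combine the explicit form of the solution to the Schrödinger equation from \Cref{p:solution} with the Born rule \eqref{e:prob}. Since \Cref{c:main_v3} is assumed, \Cref{p:solution} already gives us the closed form
\begin{equation}
    |\Psi(t) \rangle  = \sum_{\mathbf{x}' \in \mathcal{X}} \left(\sum_{[\boldsymbol{\xi}] \in \Xi} \ell(\mathbf{y}, \boldsymbol{\xi})u(\boldsymbol{\xi}, \mathbf{x}') e^{-i t E(\boldsymbol{\xi})}\right) | \mathbf{x}' \rangle,
\end{equation}
so the entire content of the lemma reduces to extracting the coefficient of a single basis vector $|\mathbf{x}\rangle$ and taking its squared modulus. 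I would begin by naming the bracketed coefficient $c_{\mathbf{x}'}(t) := \sum_{[\boldsymbol{\xi}] \in \Xi} \ell(\mathbf{y}, \boldsymbol{\xi})u(\boldsymbol{\xi}, \mathbf{x}') e^{-i t E(\boldsymbol{\xi})}$, so that $|\Psi(t)\rangle = \sum_{\mathbf{x}'} c_{\mathbf{x}'}(t)\,|\mathbf{x}'\rangle$.

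Next I would compute the overlap $\langle \mathbf{x} | \Psi(t) \rangle$ directly from the definition of the inner product. Using the anti-linearity in the second argument \eqref{e:lp_rel2} together with the orthonormality of the coordinate basis \eqref{e:lp_rel1}, only the term $\mathbf{x}' = \mathbf{x}$ survives, yielding $\langle \mathbf{x} | \Psi(t)\rangle = \overline{c_{\mathbf{x}}(t)}$. Then I would apply the Born rule \eqref{e:prob}, namely $\mathbb{P}(|\Psi(t)\rangle = |\mathbf{x}\rangle) = |\langle \mathbf{x} | \Psi(t)\rangle|^2$, and observe that passing to the modulus kills the complex conjugation, so $\mathbb{P}(|\Psi(t)\rangle = |\mathbf{x}\rangle) = |\overline{c_{\mathbf{x}}(t)}|^2 = |c_{\mathbf{x}}(t)|^2$. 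Substituting back the definition of $c_{\mathbf{x}}(t)$ gives exactly \eqref{e:prob_ba}.

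This is essentially a bookkeeping argument rather than a substantive one, so I do not anticipate a genuine obstacle. The only point requiring a moment of care is the convention chosen for the inner product: because \eqref{e:lp_rel2} places the anti-linearity on the \emph{second} argument, the overlap $\langle \mathbf{x}|\Psi(t)\rangle$ returns the complex conjugate of the coefficient rather than the coefficient itself. I would flag explicitly that this conjugation is invisible after taking $|\cdot|^2$, so the stated formula is correct regardless of which slot carries the anti-linearity. Beyond that, the result follows immediately from \Cref{p:solution} and the postulated Born rule.
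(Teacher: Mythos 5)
Your argument is correct and is essentially the same as the paper's proof: apply \Cref{p:solution} to get the closed form of $|\Psi(t)\rangle$, then extract the coefficient of $|\mathbf{x}\rangle$ and invoke the Born rule \eqref{e:prob}. Your extra remark about the conjugation arising from the anti-linearity convention being harmless under $|\cdot|^2$ is a fine (if unnecessary) point of care.
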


\begin{proof}
The probability of a configuration $\mathbf{x}$ at time $t \geq 0$ is given by
\begin{equation}
    \mathbb{P}\left(| \Psi(t) \rangle = | \mathbf{x}\rangle\right) = \langle \Psi(t)| \mathbf{x} \rangle \langle \mathbf{x} | \Psi(t) \rangle = \vert \langle \mathbf{x} | \Psi(t) \rangle  \vert^2;
\end{equation}
recall \eqref{e:prob}. Then, the result follows from \Cref{p:solution}.
\end{proof}

\subsection{One-point probability function}

The one-point probability function $\rho(x,t)$ is the probability of finding an up-spin at position $x \in [L]$ at time $t\geq 0$. We obtain the one-point probability function by summing the probability function \eqref{e:prob_ba} over all the configurations that contain an up-spin at the desired position. That is,
\begin{equation}\label{e:one-point}
\begin{split}
    \rho(x,t) &:= \sum_{\mathbf{x} \in X(x)} \mathbb{P}\left(| \Psi(t) \rangle = | \mathbf{x} \rangle\right)\\
    &= \sum_{\mathbf{x} \in X(x)} \left|\sum_{[\boldsymbol{\xi}] \in \Xi} \ell(\mathbf{y}, \boldsymbol{\xi})u(\boldsymbol{\xi},\mathbf{x}) e^{-i t E(\boldsymbol{\xi})} \right|^2
\end{split}
\end{equation}
where $X(x) = \{(x_1, x_2, \dots, x_N) \in X \mid x \in\{x_1, x_2, \dots, x_N\} \}$. The following plots are created using a \emph{Python} program that numerically finds all the solutions to the Bethe equations \eqref{e:bethe_equation}.

\subsection{Numerical solutions to the Bethe equations}

We solve the Bethe equations numerically to test \Cref{c:main_v3} for $\Delta \neq 0$. Recall that \Cref{l:main_v3} proves that \Cref{c:main_v3} is true for $\Delta=0$. In that case, we have an explicit solution for the Bethe equations, i.e.~roots of unity with a phase shift. Then, we employ Newton's method to solve the Bethe equations numerically for small $\Delta \neq 0$ using the solutions for $\Delta =0$ as our initial guess. For small $\Delta \neq 0$, we keep a one-to-one correspondence of the solutions of the Bethe equations with the solutions for $\Delta =0$.

We define a sequence $\boldsymbol{\xi}(k) \in \mathbb{C}^n$, $k=0, 1, \dots$, so that it has a direct numerical implementation and the limit is a solution of the Bethe equations. We use the solutions of the Bethe equations at $\Delta =0$ as the base point, i.e.~$\boldsymbol{\xi}(0) \in \mathbb{C}^N$. Note that the Bethe equations decouple in the case $\Delta =0$, 
\begin{equation}
    \xi_i^L = (-1)^{N-1}, \quad i = 1, 2, \dots, N.
\end{equation}
In this case, we can find all the solutions to the system of equations
\begin{equation}\label{e:delta_zero_solutions}
    \eta_i(k) = 
    \begin{cases}
    e^{2 k_i  \pi i /L}, &\quad N \text{ is odd}\\
    e^{(2k_i +1 ) \pi i /L}, &\quad N \text { is even}
    \end{cases}
\end{equation}
for any $k= (k_1, k_2, \dots, k_N)$ with $k_i \in [L]$. Thus, when $\Delta$ is close to zero, we expect the solutions to the Bethe equations to be close to the solutions given by \eqref{e:delta_zero_solutions}. More precisely, by Rouch\'e's Theorem, we expect
\begin{equation}
    |\xi_i - \eta_i| < C \epsilon
\end{equation}
for some constant $C> 0$ if $|\Delta| = \epsilon >0$. In our numerical solution to the Bethe equations, we use the solutions $\eta (k) = (\eta(k_i))_{i=1}^{N}$, given by \eqref{e:delta_zero_solutions}, as the base points of our sequence.

We now define our sequence that will give a solution of the Bethe equations; see \Cref{l:BE_limit}. For a fixed $k \in [L]^N$, we set the first element of the sequence as follows
\begin{equation}\label{e:base_point}
    \boldsymbol{\xi}(0) := \eta(k).
\end{equation}
Then, we define the rest of the terms in the sequence recursively as follows
\begin{equation}\label{e:recursion}
\begin{split}
    \boldsymbol{\xi}(n, 0) &= \boldsymbol{\xi}(n) \\
    \boldsymbol{\xi}(n, i+1) &= \boldsymbol{\xi}(n, i) - \left(\frac{F_{i+1}(\boldsymbol{\xi}(n,i))}{\Vert J_{i+1}(\boldsymbol{\xi}(n, i)) \Vert^2 }\right) J^{\dagger}_{i+1}(\boldsymbol{\xi}(n, i)), \quad i=0, 1, \dots, N-1,\\
    \boldsymbol{\xi}(n+1) &= \boldsymbol{\xi}(n, N)
\end{split}
\end{equation}
with the functions $F_i$, $i = 1, \dots, N$, given by the Bethe equations
\begin{equation}
    F_i(\boldsymbol{\xi}) = \xi_i^{L} + (-1)^N\prod_{j=1}^N\left(\frac{1 + \xi_i \xi_j - 2 \Delta \xi_j}{1 + \xi_i \xi_j - 2 \Delta \xi_i} \right)
\end{equation}
and $J_i = (\partial F_i / \partial\xi_1, \dots, \partial F_i / \partial\xi_N)$. 

\begin{lemma}\label{l:BE_limit}
For any $k \in [L]^N$, let $\{\boldsymbol{\xi}(n)\in \mathbb{C}^N \mid  n=0, 1, \dots\}$ be a sequence given by the recursion formula \eqref{e:recursion} and with a base point $\boldsymbol{\xi}(0) \in \mathbb{C}^N$ given by \eqref{e:base_point}. Then, there exists some $\epsilon >0$ so that the sequence converges for $\Delta$ if $|\Delta| < \epsilon$ and the limit point, $\boldsymbol{\xi}(n) \rightarrow \boldsymbol{\xi}(\infty)$ as $n \rightarrow \infty$, has all pairwise distinct entries and solves the Bethe equation given by \eqref{e:bethe_equation}.
\end{lemma}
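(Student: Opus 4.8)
The plan is to separate the statement into an existence-and-regularity claim about the target solution and a dynamical claim about the iteration: first I would produce the limit directly, independently of the recursion, and only afterwards show that the recursion finds it. Write $\Phi(\boldsymbol{\xi},\Delta)=(F_1(\boldsymbol{\xi}),\dots,F_N(\boldsymbol{\xi}))$ for the left-hand sides of the Bethe system, viewed as a holomorphic map in $\boldsymbol{\xi}$ depending on the parameter $\Delta$. At $\Delta=0$ the equations decouple to $\xi_i^L=(-1)^{N-1}$, and the base point $\boldsymbol{\xi}(0)=\eta(k)$ is an exact zero, so $\Phi(\eta(k),0)=0$. The Jacobian $D_{\boldsymbol{\xi}}\Phi(\eta(k),0)$ is then diagonal with entries $L\,\xi_i^{\,L-1}\neq 0$, hence invertible. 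The holomorphic implicit function theorem supplies $\epsilon_1>0$ and a unique analytic branch $\Delta\mapsto\boldsymbol{\xi}_*(\Delta)$ with $\boldsymbol{\xi}_*(0)=\eta(k)$ and $\Phi(\boldsymbol{\xi}_*(\Delta),\Delta)=0$ for $|\Delta|<\epsilon_1$; this $\boldsymbol{\xi}_*(\Delta)$ is the candidate limit and, by construction, solves the Bethe equations \eqref{e:bethe_equation}.

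For distinctness I would restrict to multi-indices $k$ with pairwise distinct components (the only case relevant to the completeness count), so that $\eta(k)$ already has distinct entries. Since $\boldsymbol{\xi}_*(\Delta)\to\eta(k)$ as $\Delta\to 0$ and each difference $(\boldsymbol{\xi}_*)_i-(\boldsymbol{\xi}_*)_j$ is continuous and nonzero at $\Delta=0$, there is $\epsilon_2\le\epsilon_1$ so that the entries of $\boldsymbol{\xi}_*(\Delta)$ remain pairwise distinct for $|\Delta|<\epsilon_2$.

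It remains to show the recursion converges to $\boldsymbol{\xi}_*(\Delta)$. Denote the inner maps $T_i(\zeta)=\zeta-\frac{F_i(\zeta)}{\|J_i(\zeta)\|^2}J_i(\zeta)$ and the full sweep $G_\Delta=T_N\circ\cdots\circ T_1$, so that $\boldsymbol{\xi}(n+1)=G_\Delta(\boldsymbol{\xi}(n))$. Three points must be checked. (i) \emph{Well-definedness}: $\|J_i\|^2$ is continuous and, at $\eta(k)$ with $\Delta=0$, equals $L^2\neq 0$, so it stays nonzero on a fixed neighborhood of $\eta(k)$ for all small $\Delta$. (ii) $\boldsymbol{\xi}_*(\Delta)$ is a fixed point of $G_\Delta$, since each $F_i$ vanishes there and hence each $T_i$ acts as the identity. (iii) \emph{Local contraction}: I would compute the real-linear differential $DG_\Delta$ at the fixed point. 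Each step contributes $DT_i=I-Q_i$ with $Q_i(h)=\|J_i\|^{-2}(J_i\cdot h)\,J_i$ arising from the holomorphic differential $dF_i(h)=J_i\cdot h$; because the gradients $J_1,\dots,J_N$ are linearly independent at $\eta(k)$ (at $\Delta=0$ they are nonzero multiples of distinct coordinate vectors, as the system decouples), the composite $DG_0$ is explicitly computable from this decoupled structure, and the Newton/least-squares character of each step — each enforcing one equation — drives the composition toward the isolated common zero. A continuity argument then keeps the spectral radius of $DG_\Delta$ below $1$ for $|\Delta|<\epsilon_3\le\epsilon_2$. Consequently $G_\Delta$ is a contraction on a small ball about $\boldsymbol{\xi}_*(\Delta)$; since $\|\eta(k)-\boldsymbol{\xi}_*(\Delta)\|=O(\Delta)$ places the base point inside this ball, the Banach fixed point theorem yields $\boldsymbol{\xi}(n)\to\boldsymbol{\xi}_*(\Delta)$, completing the proof with $\epsilon=\epsilon_3$.

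The main obstacle is item (iii): controlling the linearization of the non-holomorphic cyclic iteration. The factor $\|J_i\|^2$ breaks holomorphicity, so $DG_\Delta$ must be handled as a genuine real-linear map on $\mathbb{R}^{2N}$, and the cyclic (Gauss–Seidel / alternating-projection) structure means the contraction comes from the composition rather than from any single step, which need not contract on its own. The decoupling at $\Delta=0$ is the lever that makes the key computation explicit and reduces the small-$\Delta$ case to a perturbation of it; obtaining a clean, $\Delta$-uniform bound on the spectral radius is the delicate step, and is precisely where one must pin down the exact form of the update direction to guarantee that the sweep is genuinely contracting near the solution.
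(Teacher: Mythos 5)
Your overall architecture differs from the paper's: the paper disposes of convergence in one stroke by identifying the recursion as the Newton--Kaczmarz method and invoking \cite[Theorem 2.1]{martinez1986method}, verifying its key hypothesis (linear independence of the gradients $J_1,\dots,J_N$, which holds at $\Delta=0$ because the system decouples and persists for small $\Delta$ because it is an algebraic, i.e.\ open, condition), and then gets existence and distinctness of the limit from Rouch\'e's theorem. Your existence step via the holomorphic implicit function theorem is a perfectly good substitute for the Rouch\'e argument, and your treatment of distinctness is actually more careful than the paper's: you correctly observe that one must restrict to multi-indices $k$ with pairwise distinct components, since for repeated $k$ the unique analytic branch through $\eta(k)$ inherits the coincidence $\xi_i=\xi_j$ by the symmetry-plus-uniqueness argument, so the lemma's distinctness claim cannot hold there.

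The genuine gap is your item (iii), which is the heart of the lemma, and it is not merely unfinished --- the specific claim you lean on is false as stated. You assert that the decoupled structure at $\Delta=0$ together with ``the Newton/least-squares character of each step'' drives the spectral radius of $DG_0$ below $1$. Compute it: at $\Delta=0$ one has $J_i=L\xi_i^{L-1}e_i$, so with the update direction $J_i$ (as written in \eqref{e:recursion}) and the Hermitian norm in the denominator, the linearization of $T_i$ at the fixed point acts on the $i$-th coordinate as multiplication by $1-e^{2i(L-1)\arg\xi_i}$, whose modulus is $2\lvert\sin((L-1)\arg\xi_i)\rvert$ and exceeds $1$ for most roots of unity $\xi_i$. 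So the sweep is \emph{not} a contraction in the form you analyze it. To rescue the argument one must either (a) use the conjugated direction $\overline{J_i}$, which turns each $I-P_i$ into an orthogonal projection on $\mathbb{R}^{2N}$ (non-expansive), after which linear independence of the $J_i$ gives strict contraction of the composition by the von Neumann alternating-projection theorem; or (b) read $\Vert J_i\Vert^2$ as the unconjugated sum $\sum_j(\partial F_i/\partial\xi_j)^2$, which makes each step exact Newton in one complex variable at $\Delta=0$ and kills $DG_0$ entirely. Either fix is a real piece of mathematics that your proposal does not supply, and it is precisely the content that the paper outsources to Martinez's theorem. Until one of these is pinned down, the convergence claim --- the only nontrivial assertion in the lemma --- remains unproved in your write-up.
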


\begin{proof}
The sequence defined by \eqref{e:recursion} is the so-called \emph{Newton-Kaczmarz} method to solve a system of non-linear equations. Our proof relies on satisfying the conditions of \cite[Theorem 2.1]{martinez1986method}. The key condition is that the vectors $J_i$, $i= 1, 2, \dots, N$, of partial derivatives are linearly independent. We know this condition to be true for $\Delta = 0$. Moreover, we know that this condition depends algebraically on $\Delta$ since this is equivalent to the condition that the determinant of the Jacobian is non-zero, i.e.~a polynomial on $\Delta$ is non-zero. By taking $\varepsilon >0$ small enough, this and the other conditions are satisfied. This establishes that the sequence is well-defined, converges and the limit is a solution of the Bethe equations.   

Recall, $\boldsymbol{\xi}(0) \in \mathbb{C}^N$ is the solution of the Bethe equations for $\Delta =0$. By Rouch\'e's theorem, we may assume that there is a small neighborhood around $\boldsymbol{\xi}(0) \in \mathbb{C}^N$ so that the Bethe equations have a unique solution with pairwise distinct entries if $|\Delta| < \varepsilon$ for $\varepsilon >0$ small enough. This establishes that the limit of the sequence has pairwise distinct entries.
\end{proof}

\begin{remark}
    We opted to use the Newton-Kaczmarz method \eqref{e:recursion} to solve the Bethe equations since it has a direct numerical implementation. An alternative approach was to use the regular Newton method to numerically solve the Bethe equations, but this required inverting an $N\times N$ matrix at each step of the sequence. 
\end{remark}

In \Cref{l:BE_limit}, the value $\Delta = 0$ is only special because we know how to solve the Bethe equations for that value. The result may be extended to find solutions of the Bethe equations near other base points $\boldsymbol{\xi}(0) \in \mathbb{C}^N$ if we know the base point to be a solution of the Bethe equations. In principle, we may find solutions for generic values of $\Delta$ by starting from the solutions for $\Delta =0$ and repeatedly applying the sequence \eqref{e:recursion} to find solutions that are closer and closer to $\Delta$. The key takeaway from this section is that we have a precise numerical method to solve the Bethe equations for non-trivial $\Delta \neq 0$ to verify \Cref{c:main_v3}.

\section{Numerical verification and code}\label{s:numerical}
To verify that the equations presented in the previous sections are accurate, we used Python to generate solutions and check that the output transition matrix at $t = 0$ was an identity matrix. The pseudocode for this is shown below, and the actual Python code can be found \href{https://github.com/elsaessern/Bethe-Functions}{\underline{here}}.

\begin{algorithm}[ht]
\caption{Transition Matrix Generation}
\begin{algorithmic}
    \For{all possible ordered n-tuples of $0 \dots (L-1)$} 
        \State Generate roots of unity as initial iterates
        \For{$t \in \{0, \dots, \text{numIters}\}$}
            \State Run an iteration of \Cref{e:recursion} for finding Bethe equations solutions
        \EndFor
    \EndFor
    \State Calculate $\sum_{\mathbf{x} \in \mathcal{X}} \left(\sum_{[\boldsymbol{\xi}] \in \Xi} \ell(\mathbf{y}, \boldsymbol{\xi})u(\boldsymbol{\xi}, \mathbf{x})\right) | \mathbf{x} \rangle$ for every configuration pair $(\mathbf{x}, \mathbf{y})$
    \State Check that this resulting matrix is an identity matrix
\end{algorithmic}
\end{algorithm}

The functions $\ell(\mathbf{y}, \boldsymbol{\xi})$ and $u(\boldsymbol{\xi}, \mathbf{x})$ are given by \eqref{e:l_coeff} and $\eqref{e:coefficients}$, respectively. We omitted the $e^{-itE(\boldsymbol{\xi})}$ because this code was only checking initial values, so this always evaluates to 1. Code for the above algorithm can be found in the \textit{DetFormula.ipynb}. 

\begin{figure}[ht]
    \centering
    \includegraphics[width=\linewidth]{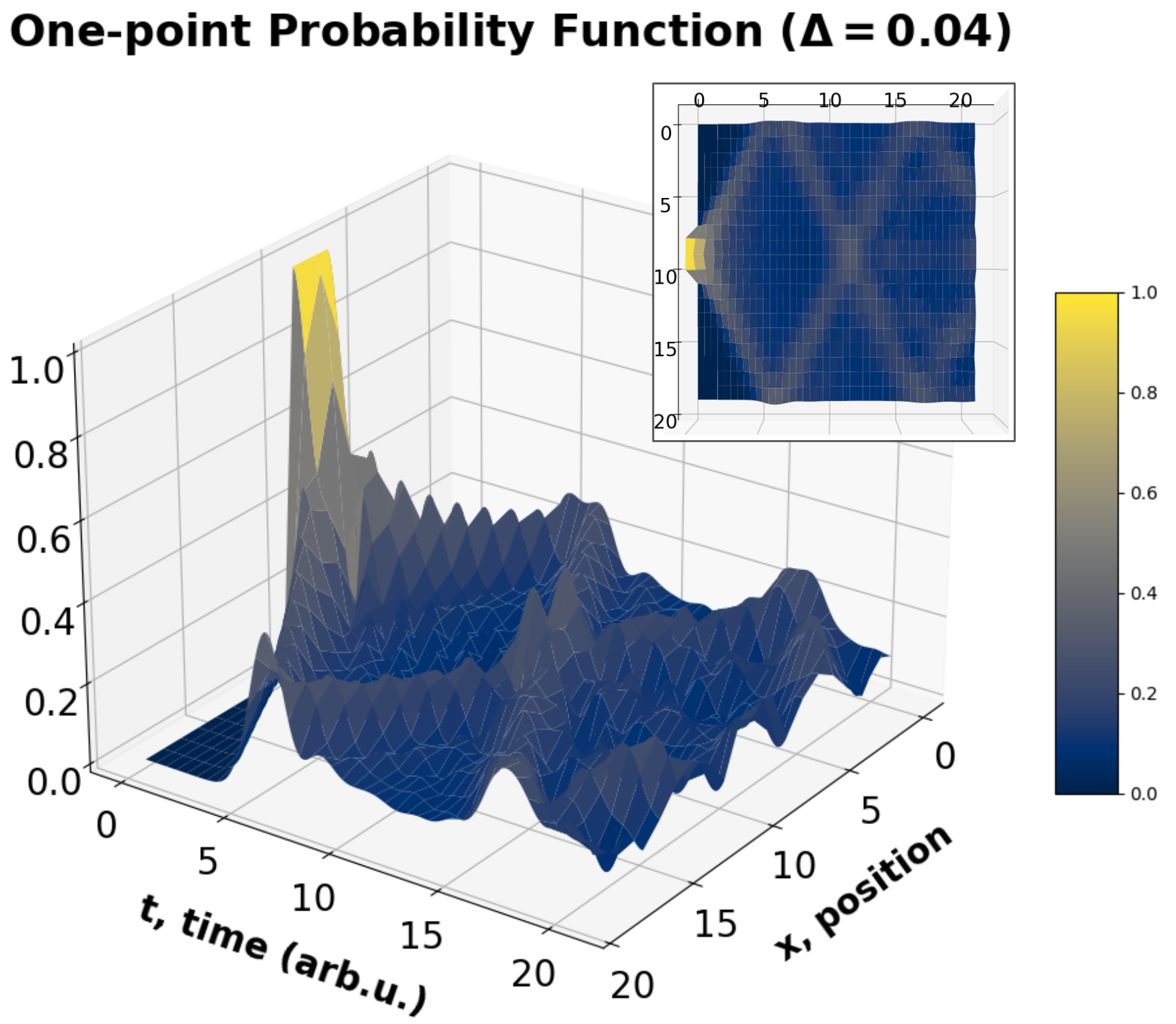}   
    \label{fig:N3-L21-d0.04-yi[20,21]}
    \caption{One-point function for $N=3$, $L=21$, $\Delta=0.04$, and initial condition $|\mathbf{y}\rangle=|8,9,10\rangle$}
\end{figure}

We also wrote code to generate plots for the one-point function for a given $N$, $L$, $\Delta$, and initial condition. Below are a few examples. The Python file \textit{OnePointFuncConsole.py} contains this code. To generate the plots for larger $N$ and $L$, we used Oregon State University's College of Engineering High Performance Computing Cluster.

\section{One-point function}\label{s:one-point function}
In the work of Saenz, Tracy, and Widom for XXZ on the line ($\mathbb{Z}$) \cite{saenz_domain_2022}, the marginal distribution of the left-most up-spin (often called the one-point function) with domain wall initial conditions ($|\mathbf{y}\rangle=|1, 2, 3,\ldots\rangle$) converges to the GUE Tracy-Widom distribution $F_2$ when $\Delta=0$ in the large $N$ limit (with proper scaling). Moreover in that work, there is an asymptotic formula for the one-point function when $\Delta \neq 0$, but it is unclear that the limit is also the GUE Tracy-Widom distribution. This connection to the KPZ universality class prompts us to investigate the similar statistics for the XXZ spin-1/2 chain on the ring. Here we present a simplification of a different kind of one-point function, assuming the completeness of the Bethe Ansatz but independent of the specific form of the $\ell$-function. In this setting, our one-point function could also be called the occupation probability, as it gives the probability of having an up-spin at a particular site $x$ and time $t$. 

\begin{theorem}\label{t:one-point}
Assume the Bethe Ansatz is complete. Then, the one-point function, given by \eqref{e:one-point}, has the following formula 
\begin{equation}
    \rho(x,t) = \sum_{[\boldsymbol{\xi}], [\boldsymbol{\zeta}] \in \Xi} \ell(\mathbf{y}, \boldsymbol{\xi}) \mathcal{F}(x; \boldsymbol{\xi}, \boldsymbol{\zeta})\ell(\mathbf{y}, \boldsymbol{\zeta})
\end{equation}
with the function $\mathcal{F}(x; \boldsymbol{\xi}, \boldsymbol{\zeta})$ given by
\begin{equation}
\begin{split}
    &\mathcal{F}(x;\boldsymbol{\xi}, \boldsymbol{\zeta}) :=\\
    &\frac{e^{-it(E(\boldsymbol{\xi})-E(\boldsymbol{\zeta}))}F_{1,N}(\boldsymbol{\xi}, \boldsymbol{\zeta})^x }{\prod_{1 \leq i <j \leq N}(1 + \xi_i \xi_j - 2 \Delta \xi_i)(1 + \zeta_i \zeta_j - 2 \Delta \zeta_i)}\\
    &\times\sum_{s = 1}^{N} \sum_{\substack{I_1, I_2 \subset [N]\\|I_1|=|I_2| = s}} \bigg[ ( F_{1, s}(\boldsymbol{\xi}, \boldsymbol{\zeta}; I_1, I_2)^{-1}-1)\Gamma(\boldsymbol{\xi}, \boldsymbol{\zeta}; I_1, I_2)G(\boldsymbol{\xi}; I_1^c, I_1)\\
    &\hspace{15mm}\times G(\boldsymbol{\zeta}; I_2^c, I_2)\Gamma(\boldsymbol{\xi}^{-1}, \boldsymbol{\zeta}^{-1}; I_1^c, I_2^c) F_{1, N-s}(\boldsymbol{\xi}, \boldsymbol{\zeta}; I_1^c, I_2^c)^{N-s-2}\bigg].
\end{split}
\end{equation}
\end{theorem}

The definition of the one-point function includes a summation over a large subset of the configuration space. Without simplification, it still far from a closed simplification amenable to asymptotic analysis, and the computational complexity of the one-point function is on the order of $\binom{L}{N}^2\binom{L-1}{N-1}\left(N! \right)^4$ which is much too large for any kind of numerical implementation of a chain with large $N$ and $L$. In what follows we will apply and extend identities from \cite{liu_integral_2020},\cite{tracy_asymptotics_2009}, \cite{saenz_domain_2022}, \cite{Cantini:2019zmz} and establish some new identities to simplify the one-point function to the form in \Cref{t:one-point}.
\begin{remark}
    Due to the rotational invariance of the lattice, we may express the one-point function at a point $x\in [L]$ in terms of the one-point function at $x=0$ by taking $\mathbf{y}\mapsto \mathbf{y}-x$. Our work to show \Cref{t:one-point} will be done for $\rho(0,t)$ for simplicity but easily extends for a general $x\neq 0$. 
\end{remark}
We will first establish some notation and then focus on simplifying $\sum_{\mathbf{x} \in X(0)}u(\boldsymbol{\xi}, \mathbf{x})u(\boldsymbol{\zeta}, \mathbf{x})$. 

\subsection{Notation and preliminary identities}

We use the following notation. For a fixed subset $I \subset \{1, \dots, N \}$ and a collection of functions $f_k(\boldsymbol{\xi}) = f_k(\xi_1, \dots, \xi_k)$, we denote the function restricted to a set of indices
\begin{equation}
    f(\boldsymbol{\xi}; I) = f_{n}(x_1, \dots, x_n)
\end{equation}
where $|I|=n$, the elements of $I$ are ordered so that $I = \{i_1 < \cdots < i_n\}$, and the variables are identified $x_{\alpha} = \xi_{i_{\alpha}}$. Additionally, when $I$ is the whole set, we use the following identification $f(\boldsymbol{\xi}) := f(\boldsymbol{\xi}; \{1, \dots, N\})$. Similarly, we denote the following function on two variables restricted on two sets
\begin{equation}
    f(\boldsymbol{\xi}, \boldsymbol{\zeta}; I_1, I_2) = f_n(x_1, \dots, x_n , z_1,\dots, z_n)
\end{equation}
where $|I_1|= |I_2|=n$, the elements of $I_1$ and $I_2$ are ordered so that $I_1 = \{i_1 < \cdots < i_n\}$ and $I_2 = \{j_1 < \cdots < j_n\}$, and the variables are identified $x_{\alpha} = \xi_{i_{\alpha}}$ and $z_{\alpha} = \zeta_{j_{\alpha}}$. Additionally, when $I_1$ and $I_2$ are the whole set, we use the following identification $f(\boldsymbol{\xi}, \boldsymbol{\zeta}) := f(\boldsymbol{\xi}, \boldsymbol{\zeta}; \{1, \dots, N\}, \{1, \dots, N\})$. In the case when $I$ is empty, $f(\boldsymbol{\xi};I)=1$ and $f(\boldsymbol{\xi},\boldsymbol{\zeta};I,J)=1$. 

We define the following functions
\begin{equation}
    \begin{split}
    B_{\sigma} (\xi_1, \dots, \xi_N) &= \mathrm{sgn}(\sigma) \prod_{1 \leq i <j \leq N}\left(1 + \xi_{\sigma(i)}\xi_{\sigma(j)} - 2 \Delta \xi_{\sigma(i)}\right) \prod_{j=1}^N\xi_{\sigma(j)}^j \\
    &= A_\sigma(\boldsymbol{\xi})\prod_{1\leq i<j\leq N}\left( 1+\xi_i\xi_j-2\Delta\xi_i \right) \prod_{j=1}^N\xi_{\sigma(j)}^j
    \end{split}
\end{equation}
for any permutation $\sigma \in S_N$ and 
\begin{equation}
    C(\boldsymbol{\xi}, \boldsymbol{\zeta}; \mathbf{x}) = \sum_{\sigma, \mu \in S_N} B_{\sigma}(\boldsymbol{\xi}) B_{\mu}(\boldsymbol{\zeta}) \prod_{j=1}^N \xi_{\sigma(j)}^{x_j-j}\zeta_{\mu(j)}^{x_j -j}.
\end{equation}

With these functions we can now write
\begin{equation}\label{e:oneptsimplificationstep1}
\begin{split}
    \sum_{\mathbf{x} \in X(0)}u(\boldsymbol{\xi}, \mathbf{x})u(\boldsymbol{\zeta}, \mathbf{x}) &= \frac{\sum_{\mathbf{x} \in X(0)}\sum_{\sigma ,\mu \in S_N} B_{\sigma}(\boldsymbol{\xi}) B_{\mu}(\boldsymbol{\zeta}) \prod_{j=1}^N \xi_{\sigma(j)}^{x_j-j}\zeta_{\mu(j)}^{x_j-j}}{\prod_{1 \leq i <j \leq N}(1 + \xi_i \xi_j - 2 \Delta \xi_i)(1 + \zeta_i \zeta_j - 2 \Delta \zeta_i)}\\
    &=\frac{\sum_{\mathbf{x} \in X(0)}C(\boldsymbol{\xi}, \boldsymbol{\zeta}; \mathbf{x})}{\prod_{1 \leq i <j \leq N}(1 + \xi_i \xi_j - 2 \Delta \xi_i)(1 + \zeta_i \zeta_j - 2 \Delta \zeta_i)}.\\
\end{split}
\end{equation}

We define the following family of functions \begin{equation}
    F_{m,n}(\boldsymbol{\xi}, \boldsymbol{\zeta}) = \prod_{j=m}^n \xi_j \zeta_j, \quad F_{m,n}^{\sigma, \mu}(\boldsymbol{\xi},\boldsymbol{\zeta}) = \prod_{j=m}^n \xi_{\sigma(j)} \zeta_{\mu(j)}
\end{equation}
for any pair of permutations $\sigma, \mu \in S_N$ and $F_{m, m-1} =1$. Then, we have the following lemma from \cite{liu_integral_2020} and \cite{Baik2019-nd}:

\begin{lemma}(\cite{liu_integral_2020}, Lemma 6.1; \cite{Baik2019-nd}, Lemma 5.3)
    \begin{equation}
    \begin{split}
    &\sum_{\substack{\mathbf{x} \in X\\x_1=0}} \left( \prod_{j=1}^N \xi_{j}^{x_j-j}\right)\\
    &=\sum_{k=0}^{N-1}(-1)^k \sum_{1 < s_1 < \cdots < s_k < N+1} \frac{1 - F_{1, s_1-1}}{F_{1, s_1-1}}(F_{s_1, N})^{L-N} \prod_{i=0}^{k} \frac{1}{\prod_{j=s_i}^{s_{i+1}-1}(1 - F_{j, s_{i+1}-1}) }
    \end{split}
    \end{equation}
\end{lemma}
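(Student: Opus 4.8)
The plan is to prove the identity by induction on $N$, performing a single finite geometric summation at each step. Write $S_N(\xi_1,\dots,\xi_N;L)$ for the left-hand side, and recall $F_{m,n}=\prod_{j=m}^{n}\xi_j$ with the convention $F_{m,m-1}=1$. Since $x_1=0$ is held fixed while $x_2<\cdots<x_N$ are summed, the first step is to peel off the top variable $x_N$, which ranges over $x_{N-1}+1\le x_N\le L-1$. Summing the geometric series
\begin{equation}
\sum_{x_N=x_{N-1}+1}^{L-1}\xi_N^{\,x_N-N}=\frac{\xi_N^{\,x_{N-1}+1-N}-\xi_N^{\,L-N}}{1-\xi_N}
\end{equation}
splits the expression into two pieces. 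In the \emph{lower-limit} piece, $\xi_N^{\,x_{N-1}+1-N}$ combines with $\xi_{N-1}^{\,x_{N-1}-(N-1)}$ to form $F_{N-1,N}^{\,x_{N-1}-(N-1)}$, so the top two indices \emph{merge} into a cluster with base $F_{N-1,N}$; in the \emph{upper-limit} piece the monomial $\xi_N^{\,L-N}$ decouples as a boundary constant. Tracking the new upper bound $x_{N-1}\le L-2$, this yields the recursion
\begin{equation}
S_N(\boldsymbol{\xi};L)=\frac{1}{1-F_{N,N}}\Big[S_{N-1}(\xi_1,\dots,\xi_{N-2},F_{N-1,N};L-1)-F_{N,N}^{\,L-N}\,S_{N-1}(\xi_1,\dots,\xi_{N-1};L-1)\Big].
\end{equation}

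Next I would iterate this dichotomy down to $x_1=0$, interpreting the resulting binary tree of choices combinatorially. Reading the choices from $x_N$ downward, the indices at which the \emph{break} (upper-limit) branch is taken are exactly the cluster bottoms $1<s_1<\cdots<s_k<N+1$, and the maximal runs of merges between consecutive breaks are the clusters $[s_i,s_{i+1}-1]$ (with $s_0=1$ and $s_{k+1}=N+1$). Within a cluster $[s_i,s_{i+1}-1]$ the successive merges produce precisely the denominator $\prod_{j=s_i}^{s_{i+1}-1}(1-F_{j,s_{i+1}-1})$, since at each merge the effective base is the partial tail product inside that cluster. The distinguished cluster $[1,s_1-1]$ is handled separately because $x_1=0$ is not summed: after the merges reduce its effective base to $F_{1,s_1-1}$, evaluating the monomial at $x_1=0$ produces the extra factor $F_{1,s_1-1}^{-1}$, which together with the cluster denominator reproduces $\tfrac{1-F_{1,s_1-1}}{F_{1,s_1-1}}\prod_{j=1}^{s_1-1}(1-F_{j,s_1-1})^{-1}$. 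The base case $N=1$ gives $S_1=\xi_1^{-1}$, matching the $k=0$ term under the stated conventions, and each break carries one factor $-1$, accounting for the sign $(-1)^k$.

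The step I expect to be the main obstacle is proving that the boundary monomials collapse into the \emph{single} power $F_{s_1,N}^{\,L-N}$ rather than an unwieldy product. A break at cluster bottom $s_i$ occurs after $N-s_i$ descending steps, so the current sub-problem has length $L-(N-s_i)$ and its top variable carries index $s_i$; hence the boundary exponent is $(L-(N-s_i))-s_i=L-N$ at \emph{every} break, independent of $i$. Since the clusters $[s_1,s_2-1],\dots,[s_k,N]$ partition $[s_1,N]$, the product of cluster bases telescopes, $\prod_{i=1}^{k}F_{s_i,s_{i+1}-1}=F_{s_1,N}$, and the common exponent $L-N$ then yields $F_{s_1,N}^{\,L-N}$. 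Making this telescoping and the length–index bookkeeping precise inside the induction — in particular checking that merging several $\xi$'s into one argument $F_{N-1,N}$ leaves the partial-product notation $F_{j,s-1}=\prod_{\ell=j}^{s-1}\xi_\ell$ intact at the next level — is the only genuinely delicate part; the remaining matching of factors is routine. As a consistency check, reparametrizing by the $N-1$ gaps together with one slack variable shows $S_N=F_{1,N}^{-1}\,h_{L-N}(F_{2,N},\dots,F_{N,N},1)$, the complete homogeneous symmetric polynomial of degree $L-N$, which gives an independent symmetric-function evaluation to test the final formula against.
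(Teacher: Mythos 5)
Your proposal is correct. Note that the paper itself offers no proof of this lemma --- it is imported verbatim from Lemma 6.1 of the Liu reference and Lemma 5.3 of Baik--Liu --- so you are supplying an argument where the paper supplies only a citation. Your recursion
\begin{equation}
S_N(\boldsymbol{\xi};L)=\frac{1}{1-\xi_N}\Big[S_{N-1}(\xi_1,\dots,\xi_{N-2},\xi_{N-1}\xi_N;L-1)-\xi_N^{\,L-N}\,S_{N-1}(\xi_1,\dots,\xi_{N-1};L-1)\Big]
\end{equation}
is correctly derived, and the two bookkeeping claims you single out as delicate both check out: each merge or break reduces the variable count and the lattice length by one simultaneously, so the boundary exponent $L'-m$ is invariantly $L-N$, and the cluster bases telescope to $F_{s_1,N}$. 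The bottom-cluster factor also matches after the cancellation $\frac{1-F_{1,s_1-1}}{F_{1,s_1-1}}\prod_{j=1}^{s_1-1}(1-F_{j,s_1-1})^{-1}=F_{1,s_1-1}^{-1}\prod_{j=2}^{s_1-1}(1-F_{j,s_1-1})^{-1}$, which is exactly what iterating the recursion down to $S_1=F_{1,s_1-1}^{-1}$ produces; your $N=1$ base case and the $h_{L-N}$ consistency check are both right. The one thing to make explicit when writing this up formally is that the induction hypothesis is the identity in \emph{generic} independent variables, so that the substitution $\xi_{N-1}\mapsto\xi_{N-1}\xi_N$ is legitimate and the partial products $F_{j,s-1}$ of the substituted tuple re-expand correctly as partial products of the original $\xi$'s; once that is stated, the argument is complete and in the same inductive, geometric-series spirit as the proofs in the cited sources.
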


We define the following function
\begin{equation}
\begin{split}
    &C_{s_1, \cdots, s_k}(\boldsymbol{\xi}, \boldsymbol{\zeta}) \\
    &= \sum_{\sigma, \mu \in S_N}\left( B_{\sigma}(\boldsymbol{\xi})B_{\mu}(\boldsymbol{\zeta})
   \frac{1 - F_{1, s_1-1}^{\sigma, \mu}}{F_{1, s_1-1}^{\sigma, \mu}}(F_{s_1, N}^{\sigma, \mu})^{L-N} \prod_{i=0}^{k} \frac{1}{\prod_{j=s_i}^{s_{i+1}-1}(1 - F_{j, s_{i+1}-1}^{\sigma,\mu}) }\right)
\end{split}
\end{equation}
for $1 < s_1 < \cdots < s_k < N+1$. By convention, we set $s_0 =1$. By the prior lemma, we have that

\begin{equation}\label{e:CDecomp}
    \sum_{0=x_1<x_2<\cdots<x_N<L}C(\boldsymbol{\xi},\boldsymbol{\zeta},\mathbf{x})=\sum_{k=0}^{N-1}(-1)^k\sum_{1<s_1<\cdots<s_k<N+1}C_{\vec{s}}(\boldsymbol{\xi},\boldsymbol{\zeta}).
\end{equation}

Then we can write \eqref{e:oneptsimplificationstep1} as 
\begin{equation}\label{e:oneptsimplificationstep2}
\begin{split}
    &\frac{\sum_{\mathbf{x} \in X(0)}C(\boldsymbol{\xi}, \boldsymbol{\zeta}; \mathbf{x})}{\prod_{1 \leq i <j \leq N}(1 + \xi_i \xi_j - 2 \Delta \xi_i)(1 + \zeta_i \zeta_j - 2 \Delta \zeta_i)}\\
    =&\frac{\sum_{k=0}^{N-1}(-1)^k \sum_{1< s_1 < \cdots < s_k < N+1}C_{\vec{s}}(\boldsymbol{\xi}, \boldsymbol{\zeta})}{\prod_{1 \leq i <j \leq N}(1 + \xi_i \xi_j - 2 \Delta \xi_i)(1 + \zeta_i \zeta_j - 2 \Delta \zeta_i)}.
\end{split}
\end{equation}
Now our aim is to break down the sum over the $C_{\vec{s}}(\boldsymbol{\xi},\boldsymbol{\zeta})$ terms.
We begin by decomposing the permutation group based on the image of the set $\{1,2, \dots, s_1-1\}$. For a fixed subset $I \subset [N]=\{1,\ldots,N\}$, we denote the set of permutations that map $\{1,2, \dots, s_1-1\}$ to $I$ as follows
\begin{equation}
    S(I) = \{\sigma \in S_N  \mid \sigma(\{1, 2, \dots, s_1-1 \}) =I\}.
\end{equation}
Then, we have the following decomposition of the permutation group
\begin{equation}\label{e:decomp1}
    S_{N} = \bigcup_{\substack{I \subset [N]\\|I|=s_1-1}} S(I).
\end{equation}
Furthermore, we have the following decomposition
\begin{equation}\label{e:decomp2}
    S(I) = S_1(I) \times S_2(I)
\end{equation}
where we identify $\sigma \in S(I)$ with the restriction to the subset $\{1, \dots, |I|\}$ and its complements,
\begin{equation}
    S_1(I) = \{\sigma|_{\{1, \dots, |I|\}} \mid \sigma\in S(I) \}, \quad S_2(I) = \{\sigma|_{\{|I|+1, \dots, N\}} \mid \sigma\in S(I) \}.
\end{equation}
Note that $S_1(I)$ and $S_2(I)$ are isomorphic to the permutation groups $S_{|I|}$ and $S_{N-|I|}$. We identify $\lambda_1 \in S_{|I|}$ and $\lambda_2 \in S_{N-|I|}$ with $\sigma_1(\lambda_1) \in S_1(I)$ and $\sigma_2(\lambda_2) \in S_2(I)$ as follows
\begin{equation}\label{e:per_decomp}
\begin{split}
\sigma_1(\lambda_1)&: k \in \{1, \dots, |I|\} \mapsto i_{\lambda_1(k)} \in I\\
\sigma_2(\lambda_2)&: k\in \{|I|+1, \dots, N\} \mapsto j_{\lambda_2(k)} \in I^c
\end{split}
\end{equation}
where we have ordered the elements $I = \{ i_1 < \cdots < i_{|I|}\}$ and $I^c = \{ j_1 < \cdots< j_{N-|I|}\}$. A more basic discussion on this decomposition can be found in section \eqref{section: S_N decomp}.

With this kind of decomposition, we can now write
\begin{equation}\label{e:CsDecomp1}
\begin{split}
    &C_{\vec{s}}(\boldsymbol{\xi},\boldsymbol{\zeta})= \sum_{\substack{I_1,I_2 \subset [N]\\|I_1| = |I_2| = s_1-1}} \left(\frac{1-F_{1,s_1-1}^{\sigma,\mu}(\boldsymbol{\xi},\boldsymbol{\zeta};I_1,I_2)}{F_{1,s_1-1}^{\sigma,\mu}(\boldsymbol{\xi},\boldsymbol{\zeta};I_1,I_2)}\right) \sum_{\sigma \in S(I_1), \mu \in S(I_2)}  \\
    &B_\sigma(\vec{\boldsymbol{\xi})}B_\mu(\boldsymbol{\zeta})\left(F_{s_1,N}^{\sigma,\mu}(\boldsymbol{\xi},\boldsymbol{\zeta})\right)^{L-N} \prod_{i=0}^k\frac{1}{\prod_{j=s_i}^{s_{i+1}-1}\left(1-F_{j,s_{i+1}-1}^{\sigma,\mu}(\boldsymbol{\xi},\boldsymbol{\zeta})\right)}
\end{split}
\end{equation}
where we have used the decomposition given by \eqref{e:decomp1} twice, once per permutation group. 

Now we define the following functions:
\begin{equation}\label{oneptGfxn}
    G(\boldsymbol{\xi};J_1, \dots, J_k) = \prod_{1 \leq i < j \leq k} \left(\prod_{\alpha \in J_i, \beta\in J_j}(1 + \xi_{\alpha} \xi_{\beta} - 2\Delta \xi_{\alpha}) \prod_{\substack{\alpha>\beta\\ \alpha \in J_i, \beta \in J_j}}(-1)\right)
\end{equation}
and
\begin{equation} \label{oneptEfxn}
    E(\boldsymbol{\xi}, \boldsymbol{\zeta}; I_1, I_2) = \left(F(\boldsymbol{\xi},\boldsymbol{\zeta};I_1,I_2)^{-1} -1\right) \sum_{\lambda_1, \tau_1 \in S_{s}} \frac{B_{\lambda_1}(\boldsymbol{\xi}; I_1)B_{\tau_1}(\boldsymbol{\zeta}; I_2)}{\prod_{j=1}^{|I|} (1 - F_{j, |I|}^{\lambda_1, \tau_1}(\boldsymbol{\xi}, \boldsymbol{\zeta}; I_1, I_2))}
\end{equation}
and
\begin{equation} \label{oneptDfxn}
\begin{split}
    &D_{s_1, \dots, s_k}(\boldsymbol{\xi}, \boldsymbol{\zeta}; I_1^c, I_2^c)\\
    &= \sum_{\lambda_2, \tau_2 \in S_{N-s}}B_{\lambda_2}(\boldsymbol{\xi}; I_1^c) B_{\tau_2}(\boldsymbol{\zeta}; I_2^c) \prod_{i=1}^k \frac{(F_{s_i, s_{i+1}-1}^{\lambda_2, \tau_2}(\boldsymbol{\xi}, \boldsymbol{\zeta}; I_1^c, I_2^c))^{L- |I^c|}}{\prod_{j=s_i +1}^{s_{i+1}-1}(1- F_{j, s_{i+1}-1}^{\lambda_2, \tau_2}(\boldsymbol{\xi}, \boldsymbol{\zeta}; I_1^c, I_2^c))}
\end{split}
\end{equation}
where $|I_1| = |I_2| = s$.

\subsection{Further simplification and identities}
We simplify $C_{\vec{s}}(\boldsymbol{\xi},\boldsymbol{\zeta})$ with the following lemma whose proof can be found in \Cref{s:L6.2 and 6.5 proofs}. 
\begin{lemma}\label{l:BDecomp} 
Fix $I\subset\{1,\ldots,N\}$, $|I| = s_1-1$. Let $\sigma\in S(I)$ such that $\sigma=(\sigma_1(\lambda_1),\sigma_2(\lambda_2))$ with $\lambda_1\in S_{|I|}$ and $\lambda_2\in S_{|I^c|}$; see \eqref{e:per_decomp}. Then
    \begin{equation}\label{BDecomp}
        B_\sigma(\vec{\boldsymbol{\xi})} = B_{\lambda_1}(\boldsymbol{\xi};I)G(\boldsymbol{\xi};I,I^c)B_{\lambda_2}(\boldsymbol{\xi};I^c)\left(F_{s_1,N}^{\sigma}(\boldsymbol{\xi})\right)^{|I|}.
    \end{equation}
\end{lemma}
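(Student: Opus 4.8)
The plan is to expand $B_\sigma(\boldsymbol{\xi})$ directly from its definition and match it against the right-hand side of \eqref{BDecomp} by splitting every product according to the block structure of $\sigma$. Write $s := |I| = s_1 - 1$, order $I = \{i_1 < \cdots < i_s\}$ and $I^c = \{j_1 < \cdots < j_{N-s}\}$, and use \eqref{e:per_decomp} so that $\sigma(a) = i_{\lambda_1(a)}$ for $a \in \{1,\dots,s\}$ and $\sigma(s+b) = j_{\lambda_2(b)}$ for $b \in \{1,\dots,N-s\}$. The definition of $B_\sigma$ has three ingredients, which I would treat in turn: the sign $\mathrm{sgn}(\sigma)$, the weight product $\prod_{1\le i<j\le N}(1+\xi_{\sigma(i)}\xi_{\sigma(j)}-2\Delta\xi_{\sigma(i)})$, and the monomial $\prod_{j=1}^N \xi_{\sigma(j)}^j$.

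For the weight product, I would partition the pairs $1 \le i < j \le N$ into those with $j \le s$, those with $s < i$, and the cross pairs $i \le s < j$. Since $\sigma$ restricts to bijections $\{1,\dots,s\}\to I$ and $\{s+1,\dots,N\}\to I^c$, the first block reproduces exactly the weight product of $B_{\lambda_1}(\boldsymbol{\xi};I)$, the second block (after the shift $i\mapsto i-s$, $j\mapsto j-s$) reproduces that of $B_{\lambda_2}(\boldsymbol{\xi};I^c)$, and the cross pairs give $\prod_{\alpha\in I,\beta\in I^c}(1+\xi_\alpha\xi_\beta-2\Delta\xi_\alpha)$, which is precisely the non-sign factor of $G(\boldsymbol{\xi};I,I^c)$ in \eqref{oneptGfxn}; note the $-2\Delta\xi_\alpha$ correctly carries the index $\alpha\in I$ from the earlier slot, matching the convention in \eqref{oneptGfxn}.

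For the sign, I would factor $\sigma = w \circ \pi$, where $w$ is the shuffle with $w(a)=i_a$ ($a\le s$) and $w(s+b)=j_b$ ($b\le N-s$), and $\pi$ acts as $\lambda_1$ on $\{1,\dots,s\}$ and as $s+b\mapsto s+\lambda_2(b)$ on the second block, so that $\mathrm{sgn}(\sigma)=\mathrm{sgn}(w)\,\mathrm{sgn}(\lambda_1)\,\mathrm{sgn}(\lambda_2)$. Counting inversions of $w$ shows that only cross pairs contribute, giving $\mathrm{sgn}(w)=(-1)^{\#\{(\alpha,\beta)\,:\,\alpha\in I,\ \beta\in I^c,\ \alpha>\beta\}}$, which is exactly the $\prod_{\alpha>\beta}(-1)$ factor of $G(\boldsymbol{\xi};I,I^c)$. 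Thus $\mathrm{sgn}(\lambda_1)$ and $\mathrm{sgn}(\lambda_2)$ attach to $B_{\lambda_1}$ and $B_{\lambda_2}$, and the remaining sign completes $G$.

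Finally, for the monomial, splitting at $j=s$ gives $\prod_{a=1}^s \xi_{i_{\lambda_1(a)}}^a$, the monomial of $B_{\lambda_1}(\boldsymbol{\xi};I)$, together with $\prod_{b=1}^{N-s}\xi_{j_{\lambda_2(b)}}^{\,b+s}$. Separating the exponent $b+s$ factors the latter into $\prod_{b=1}^{N-s}\xi_{j_{\lambda_2(b)}}^{b}$, the monomial of $B_{\lambda_2}(\boldsymbol{\xi};I^c)$, times $\big(\prod_{b=1}^{N-s}\xi_{j_{\lambda_2(b)}}\big)^s = \big(\prod_{\beta\in I^c}\xi_\beta\big)^s = (F_{s_1,N}^{\sigma}(\boldsymbol{\xi}))^{|I|}$, using that $\sigma$ maps $\{s_1,\dots,N\}$ bijectively onto $I^c$ and $|I|=s$. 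Collecting the three matched ingredients yields \eqref{BDecomp}. The only genuine bookkeeping hazard is keeping the exponent shift $j\mapsto j-s$ and the inversion count of $w$ mutually consistent; the rest is a clean partition of products, so I expect the sign identification to be the single step worth writing out in full.
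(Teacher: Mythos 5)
Your proposal is correct and follows essentially the same route as the paper's proof: splitting the sign, the weight product, and the monomial according to the blocks $I$ and $I^c$, with the cross terms assembling into $G(\boldsymbol{\xi};I,I^c)$ and the exponent shift producing $(F_{s_1,N}^{\sigma}(\boldsymbol{\xi}))^{|I|}$. Your explicit factorization $\sigma = w\circ\pi$ through the shuffle $w$ to justify $\mathrm{sgn}(\sigma)=\mathrm{sgn}(\lambda_1)\mathrm{sgn}(\lambda_2)\prod_{\alpha>\beta}(-1)$ is a slightly more detailed account of a step the paper simply asserts, but it is the same argument.
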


With the above lemma we can now write the following proposition
\begin{proposition}
    \begin{equation}
C_{\vec{s}}(\boldsymbol{\xi}, \boldsymbol{\zeta}) = \sum_{\substack{I_1,I_2 \subset [N]\\|I_1| = |I_2| = s_1-1}} E(\boldsymbol{\xi}, \boldsymbol{\zeta}; I_1, I_2) G(\boldsymbol{\xi},I_1, I_1^c)G(\boldsymbol{\zeta},I_2, I_2^c)D_{\vec{s}}(\boldsymbol{\xi}, \boldsymbol{\zeta}; I_1^c, I_2^c) 
\end{equation}
\end{proposition}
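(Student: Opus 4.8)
The plan is to derive the claimed factorization directly from the block decomposition \eqref{e:CsDecomp1}, converting the single sum over $\sigma\in S(I_1),\mu\in S(I_2)$ into a product of two independent sums by splitting each permutation across the two factors of $S(I)=S_1(I)\times S_2(I)$; see \eqref{e:decomp2} and \eqref{e:per_decomp}. Writing $s:=s_1-1=|I_1|=|I_2|$, I would set $\sigma=(\sigma_1(\lambda_1),\sigma_2(\lambda_2))$ with $\lambda_1\in S_{s}$, $\lambda_2\in S_{N-s}$, and likewise $\mu=(\mu_1(\tau_1),\mu_2(\tau_2))$ with $\tau_1\in S_{s}$, $\tau_2\in S_{N-s}$, and then substitute \Cref{l:BDecomp} for both $B_\sigma(\boldsymbol{\xi})$ and $B_\mu(\boldsymbol{\zeta})$. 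This replaces each amplitude by $B_{\lambda_1}(\boldsymbol{\xi};I_1)\,G(\boldsymbol{\xi};I_1,I_1^c)\,B_{\lambda_2}(\boldsymbol{\xi};I_1^c)\,(F_{s_1,N}^{\sigma}(\boldsymbol{\xi}))^{|I_1|}$ and its $\boldsymbol{\zeta}$-analogue, after which the summand becomes a product of a ``block-$1$'' part depending only on $(\lambda_1,\tau_1)$, a ``block-$2$'' part depending only on $(\lambda_2,\tau_2)$, and factors depending on $I_1,I_2$ alone.

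The second step is to pull out everything that is permutation-independent. Since $\sigma(\{1,\dots,s_1-1\})=I_1$ and $\mu(\{1,\dots,s_1-1\})=I_2$, the quantities $F_{1,s_1-1}^{\sigma,\mu}=\prod_{\alpha\in I_1}\xi_\alpha\prod_{\beta\in I_2}\zeta_\beta$, and $F_{s_1,N}^{\sigma}(\boldsymbol{\xi})=\prod_{\alpha\in I_1^c}\xi_\alpha$ together with $F_{s_1,N}^{\mu}(\boldsymbol{\zeta})=\prod_{\beta\in I_2^c}\zeta_\beta$, depend only on the images $I_1,I_2$, and the cross-terms $G(\boldsymbol{\xi};I_1,I_1^c)$, $G(\boldsymbol{\zeta};I_2,I_2^c)$ do as well. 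These factor out of the inner sum and match the two $G$-factors in the proposition; moreover the prefactor $\tfrac{1-F_{1,s_1-1}^{\sigma,\mu}}{F_{1,s_1-1}^{\sigma,\mu}}=\prod_{\alpha\in I_1}\xi_\alpha^{-1}\prod_{\beta\in I_2}\zeta_\beta^{-1}-1$ is exactly the prefactor of $E$ in \eqref{oneptEfxn}.

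The third step is to split the telescoping product $\prod_{i=0}^{k}\bigl(\prod_{j=s_i}^{s_{i+1}-1}(1-F_{j,s_{i+1}-1}^{\sigma,\mu})\bigr)^{-1}$ across the two blocks, which is where the decoupling of the double sum into $\bigl(\sum_{\lambda_1,\tau_1}\bigr)\bigl(\sum_{\lambda_2,\tau_2}\bigr)$ from \eqref{e:decomp1}--\eqref{e:decomp2} does the work. The $i=0$ factor involves only indices $j\le s_1-1$, hence only $(\lambda_1,\tau_1)$, and together with the prefactor and $\sum_{\lambda_1,\tau_1}B_{\lambda_1}(\boldsymbol{\xi};I_1)B_{\tau_1}(\boldsymbol{\zeta};I_2)$ it assembles precisely into $E(\boldsymbol{\xi},\boldsymbol{\zeta};I_1,I_2)$ of \eqref{oneptEfxn}. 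The factors $i\ge 1$ involve only indices $j\ge s_1$, hence only $(\lambda_2,\tau_2)$, and together with $\sum_{\lambda_2,\tau_2}B_{\lambda_2}(\boldsymbol{\xi};I_1^c)B_{\tau_2}(\boldsymbol{\zeta};I_2^c)$ and the leftover power of $F_{s_1,N}^{\sigma,\mu}$ they should reproduce $D_{\vec s}(\boldsymbol{\xi},\boldsymbol{\zeta};I_1^c,I_2^c)$ of \eqref{oneptDfxn}.

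The main obstacle is the exponent bookkeeping on $F_{s_1,N}$ and the matching of the boundary terms of the block-$2$ product. The power of $F_{s_1,N}^{\sigma}(\boldsymbol{\xi})$ supplied by \Cref{l:BDecomp} is $|I_1|=s$, while the Liu--Baik expansion supplies the combined power $(F_{s_1,N}^{\sigma,\mu})^{L-N}$; using $|I_1|=|I_2|=s$ and $|I_1^c|=|I_2^c|=N-s$, these add to the single power $L-N+s=L-|I^c|$, which distributes over the block factorization $F_{s_1,N}^{\sigma,\mu}=\prod_{i=1}^{k}F_{s_i,s_{i+1}-1}^{\sigma,\mu}$ to yield the numerators $(F_{s_i,s_{i+1}-1}^{\lambda_2,\tau_2})^{L-|I^c|}$ in \eqref{oneptDfxn}. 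The delicate remaining point is reconciling the denominators: the block-$2$ telescoping product runs over $j=s_i,\dots,s_{i+1}-1$ whereas $D_{\vec s}$ is written with $j=s_i+1,\dots,s_{i+1}-1$, so the boundary terms $j=s_i$ must be tracked through the index shift sending the global labels $\{s_1,\dots,N\}$ to the restricted labels $\{1,\dots,N-s\}$ on which $\lambda_2,\tau_2$ act via \eqref{e:per_decomp}. Once this shift is made explicit the identity reduces to a purely mechanical rearrangement, and collecting the three pieces $E$, $G(\boldsymbol{\xi};I_1,I_1^c)G(\boldsymbol{\zeta};I_2,I_2^c)$, and $D_{\vec s}$ and summing over $I_1,I_2$ yields the proposition.
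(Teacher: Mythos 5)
Your proposal follows essentially the same route as the paper's proof: decompose each of $\sigma$ and $\mu$ via $S(I)=S_1(I)\times S_2(I)$ as in \eqref{e:decomp2}--\eqref{e:per_decomp}, substitute \Cref{l:BDecomp} for $B_\sigma(\boldsymbol{\xi})$ and $B_\mu(\boldsymbol{\zeta})$, split the telescoping denominator into the $i=0$ block (giving $E$) and the $i\ge 1$ blocks (giving $D_{\vec s}$), and combine the exponents $|I_1|+|I_2|+(L-N)=L-|I^c|$ on $F_{s_1,N}$. Your flagged concern about the $j=s_i$ versus $j=s_i+1$ boundary of the block-$2$ denominators is a fair observation about the index relabeling from $\{s_1,\dots,N\}$ to $\{1,\dots,N-s\}$, which the paper's own proof also leaves implicit.
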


\begin{proof}Using the decomposition given by \eqref{e:decomp2} and the identification given by \eqref{e:per_decomp}, we have that
\begin{equation}
\begin{split}
    &C_{\vec{s}}(\boldsymbol{\xi},\boldsymbol{\zeta}) =\sum_{\substack{I_1,I_2 \subset [N]\\|I_1| = |I_2| = s_1-1}} \left(\frac{1-F_{1,s_1-1}^{\sigma,\mu}(\boldsymbol{\xi},\boldsymbol{\zeta}; I_1, I_2)}{F_{1,s_1-1}^{\sigma,\mu}(\boldsymbol{\xi},\boldsymbol{\zeta}; I_1, I_2)}\right) \sum_{\lambda_1, \tau_1 \in S_{s_1-1}}\sum_{\lambda_2, \tau_2 \in S_{N+1-s_1}}\\
    &B_{\lambda_1}(\boldsymbol{\xi};I_1)B_{\tau_1}(\boldsymbol{\zeta};I_2)G(\boldsymbol{\xi};I_1,I_1^c)G(\boldsymbol{\zeta};I_2,I_2^c)B_{\lambda_2}(\boldsymbol{\xi};I_1^c)B_{\tau_2}(\boldsymbol{\zeta};I_2^c)\\
    &\times\left(F_{s_1,N}^{\lambda_2,\tau_2}(\boldsymbol{\xi},\boldsymbol{\zeta}; I_1^c, I_2^c)\right)^{L-N+s_1-1} \prod_{i=1}^k\frac{1}{\prod_{j=s_i}^{s_{i+1}-1}\left(1-F_{j,s_{i+1}-1}^{\lambda_2,\tau_2}(\boldsymbol{\xi},\boldsymbol{\zeta}; I_1^c, I_2^c)\right)}\\
    &\times \left(\frac{1}{\prod_{j=1}^{s_{1}-1}\left(1-F_{j,s_{1}-1}^{\lambda_1,\tau_1}(\boldsymbol{\xi},\boldsymbol{\zeta};I_1, I_2)\right)} \right).
\end{split}
\end{equation}
Then by \Cref{l:BDecomp}
\begin{equation}\label{e:CsToEGGDs}
C_{\vec{s}}(\boldsymbol{\xi}, \boldsymbol{\zeta}) = \sum_{\substack{I_1 \subset [N]\\|I_1| = |I_2| = s_1-1}} E(\boldsymbol{\xi}, \boldsymbol{\zeta}; I_1, I_2) G(\boldsymbol{\xi},I_1, I_1^c)G(\boldsymbol{\zeta},I_2, I_2^c)D_{\vec{s}}(\boldsymbol{\xi}, \boldsymbol{\zeta}; I_1^c, I_2^c) .
\end{equation} 
\end{proof}

Adapted to our notation, we use the following identity 
\begin{lemma}(\cite{saenz_domain_2022} equation $(23)$, \cite{Cantini:2019zmz} Proposition $6$)\label{l:STW(23)}
\begin{equation}
    \sum_{\sigma, \mu \in S_N} \frac{B_{\sigma}(\boldsymbol{\xi}) B_{\mu}(\boldsymbol{\zeta})}{\prod_{j=1}^N \left(1 - F_{j, N}(\boldsymbol{\xi}, \boldsymbol{\zeta})\right)} = F_{1, N}(\boldsymbol{\xi}, \boldsymbol{\zeta}) \Gamma(\boldsymbol{\xi}, \boldsymbol{\zeta})
\end{equation}
where
\begin{equation}
    \Gamma(\boldsymbol{\xi},\boldsymbol{\zeta}) =\det \left[\left( \frac{\prod_{k \neq j}(\xi_i + \zeta_k - 2 \Delta \xi_i \zeta_k)}{1-\xi_i \zeta_j}\right)_{i, j=1}^N \right].
\end{equation}
\end{lemma}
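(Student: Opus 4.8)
The plan is to prove the identity by induction on $N$, treating both sides as rational functions of the $2N$ variables $\xi_1,\dots,\xi_N,\zeta_1,\dots,\zeta_N$ and comparing their poles, residues, and growth. Throughout, the summand denominator is the permutation-dependent product $\prod_{j=1}^N\bigl(1-F_{j,N}^{\sigma,\mu}(\boldsymbol{\xi},\boldsymbol{\zeta})\bigr)$ with $F_{j,N}^{\sigma,\mu}=\prod_{k=j}^N\xi_{\sigma(k)}\zeta_{\mu(k)}$. The base case $N=1$ is immediate: here $B_{\mathrm{id}}(\xi_1)=\xi_1$ and $B_{\mathrm{id}}(\zeta_1)=\zeta_1$, so the left-hand side equals $\xi_1\zeta_1/(1-\xi_1\zeta_1)$, while $\Gamma=1/(1-\xi_1\zeta_1)$ and $F_{1,1}=\xi_1\zeta_1$, giving the same value.

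For the inductive step I would fix all variables except $\zeta_1$ and regard each side as a rational function of $\zeta_1$. Writing $g(\xi,\zeta)=\xi+\zeta-2\Delta\xi\zeta$, the entries of $\Gamma$ are $M_{ij}=\bigl(\prod_{k\neq j}g(\xi_i,\zeta_k)\bigr)/(1-\xi_i\zeta_j)$, so in $\zeta_1$ the determinant $\Gamma$ has simple poles exactly at $\zeta_1=1/\xi_i$, each arising only from the entry $M_{i1}$. A cofactor expansion along the first column shows that the residue of $\Gamma$ at $\zeta_1=1/\xi_1$ factors as an explicit monomial times the cofactor $C_{11}$; after substituting $\zeta_1=1/\xi_1$, each remaining entry picks up the factor $g(\xi_i,1/\xi_1)=(1+\xi_i\xi_1-2\Delta\xi_i)/\xi_1$, and pulling these out of the rows identifies $C_{11}$ with the determinant $\Gamma$ built from $\{\xi_2,\dots,\xi_N\}$ and $\{\zeta_2,\dots,\zeta_N\}$. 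Thus the residue of $F_{1,N}\Gamma$ reduces to the $(N-1)$-variable right-hand side times explicit factors.

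On the left-hand side I would compute the matching residue by isolating the terms producing the pole at $\zeta_1=1/\xi_1$. Generically this pole comes only from the factor $1-F_{N,N}^{\sigma,\mu}=1-\xi_{\sigma(N)}\zeta_{\mu(N)}$ with $\sigma(N)=\mu(N)=1$; for such terms the remaining factors $1-F_{j,N}^{\sigma,\mu}$ with $j<N$ become, after the substitution $\xi_1\zeta_1=1$, exactly the factors $1-F_{j,N-1}^{\sigma',\mu'}$ of the $(N-1)$-problem, where $\sigma',\mu'$ are the restrictions of $\sigma,\mu$. Peeling the index $1$ off $B_\sigma$ and $B_\mu$ produces the expected sign, the monomials $\xi_1^N$ and $\zeta_1^N$, and the coupling products $\prod_{i\geq2}(1+\xi_i\xi_1-2\Delta\xi_i)$ and $\prod_{i\geq2}(1+\zeta_i\zeta_1-2\Delta\zeta_i)$; crucially, evaluating the latter at $\zeta_1=1/\xi_1$ turns the $\zeta$–$\zeta$ coupling into $\prod_{i\geq2}g(\xi_1,\zeta_i)$, which is precisely the factor appearing on the right. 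Invoking the inductive hypothesis then matches both residues, and the poles at $\zeta_1=1/\xi_i$ for $i\neq1$ follow from the same computation after using the antisymmetry of both sides under $\xi_1\leftrightarrow\xi_i$.

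The main obstacle is to show that the left-hand side has \emph{no other} poles in $\zeta_1$. A priori each factor $1-F_{j,N}^{\sigma,\mu}$ with $j<N$ contributes a spurious pole, located where a product $\prod_{k=j}^N\xi_{\sigma(k)}\zeta_{\mu(k)}$ equals $1$, and no such pole occurs on the right. I would prove these cancel in the symmetrized sum: the location of a given spurious pole depends only on the tail-sets $\{\sigma(j),\dots,\sigma(N)\}$ and $\{\mu(j),\dots,\mu(N)\}$, not on their internal order, so the residue there is itself a sum over internal orderings, and a sign-reversing involution built from adjacent transpositions within the block cut off at position $j$—together with the exchange relation satisfied by the factors $1+\xi\xi'-2\Delta\xi$ that governs $B_\sigma$—pairs terms with equal and opposite residues. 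This is the mechanism underlying the symmetrization identities of \cite{tracy_asymptotics_2009} and \cite{Cantini:2019zmz}, and verifying it is where the bulk of the work lies. Once only the physical poles $\zeta_1=1/\xi_i$ remain with matched residues, the difference of the two sides is a Laurent polynomial in $\zeta_1$; comparing the leading behavior as $\zeta_1\to0$ and $\zeta_1\to\infty$, where the $\zeta_1$-degrees of $B_\mu$ and of $F_{1,N}\Gamma$ are read off directly, forces this polynomial to vanish and closes the induction. Alternatively, since the statement is recorded as \cite[eq.~(23)]{saenz_domain_2022} and \cite[Prop.~6]{Cantini:2019zmz}, one may simply transcribe their proof into the present notation.
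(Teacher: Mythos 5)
The paper does not actually prove this lemma: it is imported verbatim from the literature (\cite[eq.~(23)]{saenz_domain_2022}, \cite[Prop.~6]{Cantini:2019zmz}), so your closing fallback sentence coincides with what the paper in fact does. Judged as a standalone proof, your write-up gets the easy parts right but leaves the hard part unproven. Your reading of the statement (the denominator must carry the permutation dependence, $\prod_{j=1}^N\bigl(1-F^{\sigma,\mu}_{j,N}\bigr)$, despite the typo in the displayed formula) is correct, the base case $N=1$ checks out, and your residue matching at $\zeta_1=1/\xi_1$ is sound on both sides: on the right, the pole sits only in the entry $M_{11}$, and pulling the factors $g(\xi_i,1/\xi_1)=(1+\xi_i\xi_1-2\Delta\xi_i)/\xi_1$ out of the cofactor rows does reduce to the $(N-1)$-variable $\Gamma$; on the left, for generic values of the remaining variables only the terms with $\sigma(N)=\mu(N)=1$ contribute a pole at $\zeta_1=1/\xi_1$, and peeling off the index $1$ from $B_\sigma$, $B_\mu$ produces exactly the factors you list.

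The genuine gap is the step you yourself flag: the cancellation of the spurious poles coming from the factors $1-F^{\sigma,\mu}_{j,N}$ with $j<N$. This is the entire content of the identity, and your proposed mechanism --- a sign-reversing involution by adjacent transpositions within the tail block, justified by appeal to ``the mechanism underlying'' \cite{tracy_asymptotics_2009} and \cite{Cantini:2019zmz} --- is not carried out and is not obviously correct as stated. The difficulty is that the residue at a spurious pole located on $\prod_{k\geq j}\xi_{\sigma(k)}\zeta_{\mu(k)}=1$ does not factor cleanly into head and tail contributions: the remaining denominators $1-F^{\sigma,\mu}_{j',N}$ with $j'<j$ straddle both blocks, and after evaluating on the pole locus an adjacent transposition inside the tail changes these straddling factors as well, so equal-and-opposite pairing of terms is far from automatic. (In the cited works the analogous cancellations require intricate, multi-page combinatorial arguments, and appealing to them here is close to assuming the lemma.) The final degree-count argument is likewise only asserted: you would need to verify that after pole matching the difference is a Laurent polynomial in $\zeta_1$ and to actually bound its degrees as $\zeta_1\to 0,\infty$, keeping track of the $\zeta_1$-dependence of $B_\mu$ (which enters both through $\zeta_{\mu(j)}^j$ and through the $N-1$ coupling factors $1+\zeta_{\mu(i)}\zeta_{\mu(j)}-2\Delta\zeta_{\mu(i)}$) against that of the denominators. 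As it stands, the proposal is a plausible road map whose two load-bearing steps are deferred, so it does not constitute a proof; since the paper itself only cites the result, the defensible options are either to cite as the paper does or to fully execute the spurious-pole cancellation.
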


We now simplify \eqref{oneptEfxn} and \eqref{oneptDfxn} via \eqref{l:STW(23)}. For \eqref{oneptEfxn}, we have
\begin{equation}
    E(\boldsymbol{\xi},\boldsymbol{\zeta};I_1, I_2)=(1-F_{1,s_1-1}(\boldsymbol{\xi},\boldsymbol{\zeta};I_1, I_2))\Gamma(\boldsymbol{\xi},\boldsymbol{\zeta};I_1, I_2)
\end{equation}

and for \eqref{oneptDfxn} we first sum over all possible partitions of $I_1^c$ and $I_2^c$ as we did in \eqref{e:CDecomp}, and then we apply \Cref{l:STW(23)}
\begin{equation}\label{e:DsDecomp}
\begin{split}
    &D_{s_1,\ldots,s_k}(\boldsymbol{\xi},\boldsymbol{\zeta};I_1^c, I_2^c)=\sum_{\substack{J_{1,1}\cup\cdots\cup J_{1,k} = I_1^c \\J_{2,1}\cup\cdots\cup J_{2,k} = I_2^c \\
    |J_{m, l}|=s_{l+1}-s_l}}G(\boldsymbol{\xi};J_{1,1},\ldots,J_{1,k})G(\boldsymbol{\zeta};J_{2,1},\ldots,J_{2,k})\\
    &\times \prod_{l=1}^k\left( \sum_{\lambda_{(l)}\in S_{|J_{1,l}|}, \tau_{(l)} \in S_{|J_{2,l}|}}\frac{B_{\lambda_{(l)}}(\boldsymbol{\xi};J_{1,l})B_{\tau_{(l)}}(\boldsymbol{\zeta};J_{2,l})\left(F_{1,|J_l|}^{\lambda_{(l)},\tau_{(l)}}(\boldsymbol{\xi},\boldsymbol{\zeta};J_{1,l}, J_{2, l}) \right)^{L-N+s_l-1}}{\prod_{j=1}^{|J_l|}\left(1-F_{j,|J_l|}^{\lambda_{(l)},\tau_{(l)}}(\boldsymbol{\xi},\boldsymbol{\zeta};J_{1, l}, J_{2,l}) \right)}  \right) \\
    &= \left(F_{1,N+1-s_1}(\boldsymbol{\xi},\boldsymbol{\zeta};I_1^c, I_2^c) \right)^{L-N+s_1}\sum_{\substack{J_{1,1}\cup\cdots\cup J_{1,k} = I_1^c \\J_{2,1}\cup\cdots\cup J_{2,k} = I_2^c \\
    |J_{m, l}|=s_{l+1}-s_l}}G(\boldsymbol{\xi};J_{1,1},\ldots,J_{1,k})G(\boldsymbol{\zeta};J_{2,1},\ldots,J_{2,k})\\
    &\times \prod_{l=1}^k\left(F_{1,|J_l|}(\boldsymbol{\xi},\boldsymbol{\zeta};J_{1,l}, J_{2,l}) \right)^{s_l-s_1}\Gamma(\boldsymbol{\xi},\boldsymbol{\zeta};J_{1,l}, J_{2,l}).
\end{split}
\end{equation}
To handle this last sum we introduce the following lemma (adapted from Lemma 6.3 in \cite{liu_integral_2020}). The proof can be found in \Cref{s:L6.2 and 6.5 proofs} 
\begin{lemma}\label{c:LSW20L6.3}
    \begin{equation}
    \begin{split}
    &\sum_{k=1}^n(-1)^k\sum_{\substack{I_{1,1}\cup\cdots\cup I_{1,k}=\{1,\ldots,n\}\\I_{2,1}\cup\cdots\cup I_{2,k}=\{1,\ldots,n\}}} G(\boldsymbol{\xi};I_{1,1},\ldots,I_{1,k})G(\boldsymbol{\zeta};I_{2,1},\ldots,I_{2,k})\\&\times\prod_{\alpha=1}^kF_{1,e_{\alpha}}(\boldsymbol{\xi},\boldsymbol{\zeta};I_{1,\alpha}, I_{2,\alpha})^{\left(\sum_{i=1}^{\alpha-1}e_i \right)}\Gamma(\boldsymbol{\xi},\boldsymbol{\zeta}; I_{1, \alpha}, I_{2, \alpha})\\
    &= \left(F_{1,n}(\boldsymbol{\xi},\boldsymbol{\zeta}) \right)^{2n-3}\Gamma(\boldsymbol{\xi}^{-1},\boldsymbol{\zeta}^{-1}).
    \end{split}
    \end{equation}
where $e_{\alpha} = |I_{1, \alpha}| = |I_{2, \alpha}| $, $\boldsymbol{\xi}^{-1} = (\xi_1^{-1}, \dots, \xi_N^{-1})$ and $\boldsymbol{\zeta}^{-1} = (\zeta_1^{-1}, \dots, \zeta_N^{-1})$.
\end{lemma}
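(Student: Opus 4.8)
The plan is to prove \Cref{c:LSW20L6.3} by induction on $n$, adapting the argument for Lemma~6.3 of \cite{liu_integral_2020} to general anisotropy $\Delta$, where the Cauchy determinant is replaced by the Izergin--Korepin determinant $\Gamma$ of \Cref{l:STW(23)}. The first move is to isolate the leading block $(I_{1,1},I_{2,1})$ of each pair of ordered set partitions, of common size $s=e_1$. Two structural facts make this clean. The function $G$ from \eqref{oneptGfxn} is multiplicative under splitting off the first block, $G(\boldsymbol{\xi};I_{1,1},\ldots,I_{1,k})=G(\boldsymbol{\xi};I_{1,1},I_{1,1}^c)\,G(\boldsymbol{\xi};I_{1,2},\ldots,I_{1,k})$, since the cross-terms between $I_{1,1}$ and the later blocks collectively equal the cross-terms between $I_{1,1}$ and its complement. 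Moreover the accumulated exponent $\sum_{i<\alpha}e_i$ on each later block splits as $e_1+\sum_{2\le i<\alpha}e_i$, while $\prod_{\alpha\ge 2}F_{1,e_\alpha}(\boldsymbol{\xi},\boldsymbol{\zeta};I_{1,\alpha},I_{2,\alpha})=F_{1,n-s}(\boldsymbol{\xi},\boldsymbol{\zeta};I_{1,1}^c,I_{2,1}^c)$, so peeling the first block produces an overall factor $F_{1,n-s}(\boldsymbol{\xi},\boldsymbol{\zeta};I_{1,1}^c,I_{2,1}^c)^{s}$ and leaves exactly the sum defining the lemma on the complementary index sets.

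Writing $T_n(\boldsymbol{\xi},\boldsymbol{\zeta})$ for the left-hand side and using $(-1)^k=-(-1)^{k-1}$, these facts yield the recursion
\begin{equation*}
T_n(\boldsymbol{\xi},\boldsymbol{\zeta}) = -\sum_{s=1}^{n}\ \sum_{\substack{I_1,I_2\subset[N]\\ |I_1|=|I_2|=s}} G(\boldsymbol{\xi};I_1,I_1^c)\,G(\boldsymbol{\zeta};I_2,I_2^c)\,\Gamma(\boldsymbol{\xi},\boldsymbol{\zeta};I_1,I_2)\,F_{1,n-s}(\boldsymbol{\xi},\boldsymbol{\zeta};I_1^c,I_2^c)^{s}\,T_{n-s}(\boldsymbol{\xi},\boldsymbol{\zeta};I_1^c,I_2^c),
\end{equation*}
with the convention $T_0=1$. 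Substituting the inductive hypothesis $T_{n-s}=F_{1,n-s}^{\,2(n-s)-3}\,\Gamma(\boldsymbol{\xi}^{-1},\boldsymbol{\zeta}^{-1};I_1^c,I_2^c)$ for $s<n$ and reading the $s=n$ term as $-\Gamma(\boldsymbol{\xi},\boldsymbol{\zeta})$, the claimed value $F_{1,n}^{\,2n-3}\Gamma(\boldsymbol{\xi}^{-1},\boldsymbol{\zeta}^{-1})$ is precisely the $s=0$ term of the same expression. Hence the induction closes as soon as I establish the standalone \emph{splitting identity}
\begin{equation*}
\sum_{s=0}^{n}\ \sum_{\substack{I_1,I_2\subset[N]\\ |I_1|=|I_2|=s}} G(\boldsymbol{\xi};I_1,I_1^c)\,G(\boldsymbol{\zeta};I_2,I_2^c)\,\Gamma(\boldsymbol{\xi},\boldsymbol{\zeta};I_1,I_2)\,F_{1,n-s}(\boldsymbol{\xi},\boldsymbol{\zeta};I_1^c,I_2^c)^{2n-s-3}\,\Gamma(\boldsymbol{\xi}^{-1},\boldsymbol{\zeta}^{-1};I_1^c,I_2^c)=0,
\end{equation*}
which I would check directly in the base case $n=1$ and then prove for all $n$.

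To prove the splitting identity I would interpret it as a complementary-minor (Laplace/Cauchy--Binet) expansion that collapses a determinant to zero. Using the explicit formula for $\Gamma$ in \Cref{l:STW(23)}, the factor $\Gamma(\boldsymbol{\xi},\boldsymbol{\zeta};I_1,I_2)$ is the minor on rows $I_1$ and columns $I_2$ of the Izergin--Korepin matrix $M$, while $G(\boldsymbol{\xi};I_1,I_1^c)G(\boldsymbol{\zeta};I_2,I_2^c)$ supplies both the permutation sign distinguishing a minor from its complementary minor and the cross row/column weights needed to reassemble minors into a full determinant. The inversion $\boldsymbol{\xi}\mapsto\boldsymbol{\xi}^{-1}$, $\boldsymbol{\zeta}\mapsto\boldsymbol{\zeta}^{-1}$ together with the power of $F_{1,n-s}$ is exactly the content of Jacobi's complementary-minor theorem: the complementary minor of $M^{-1}$ equals, up to $\det M$ and a monomial prefactor, the Izergin--Korepin determinant at the inverted variables, the $|I_1^c|$-dependent exponent $2n-s-3=n+|I_1^c|-3$ cancelling the $|I_1^c|$-dependent row and column monomials generated when rewriting $\Gamma(\boldsymbol{\xi}^{-1},\boldsymbol{\zeta}^{-1};\cdot)$ through the inversion. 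With these identifications the signed sum over $s$ and over all complementary pairs $(I_1,I_2)$ becomes the Laplace expansion of a determinant $\det(A+B)$ with $B=-A$, hence of $\det(A-A)=0$.

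The main obstacle is this splitting identity, and within it the exact matching of rational prefactors and signs: I must show that the $G$-factors reproduce the correct complementary-minor signs of the Laplace expansion and that the inversion behaviour of the Izergin--Korepin determinant (via \Cref{l:STW(23)} and the references \cite{izergin1987partition,korepin_calculation_1982,warnaar2008bisymmetric}) combines with the $s$-dependent powers of $F_{1,n-s}$ to produce genuine minors of a single matrix and its inverse. Verifying that these monomial exponents cancel cleanly, so that the sum is the Laplace expansion of an honestly vanishing determinant rather than a mere formal resemblance, is the delicate computational heart of the argument; the outer induction and the first-block bookkeeping are then routine.
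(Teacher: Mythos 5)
Your outer structure coincides exactly with the paper's: induction on $n$, peeling off the leading blocks $(I_{1,1},I_{2,1})$ to obtain the recursion $T_n=-\sum_{s=1}^n(\cdots)T_{n-s}$, and observing that the claimed right-hand side is the $s=0$ term of the same expression, so that everything reduces to the single ``splitting identity'' $\sum_{s=0}^n g(s)=0$. Up to that point your bookkeeping (the multiplicativity of $G$ under splitting off the first block, the accumulation of the exponent $e_1=s$ into the factor $F_{1,n-s}(\boldsymbol{\xi},\boldsymbol{\zeta};I_1^c,I_2^c)^{s}$) matches the paper's equation for $H_{n-s}$ and is correct.

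The genuine gap is that you do not prove the splitting identity, and that identity is the entire content of the lemma beyond routine reorganization. Your proposed route --- reading the sum as a Laplace expansion of $\det(A+B)$ with $B=-A$, with $\Gamma(\boldsymbol{\xi}^{-1},\boldsymbol{\zeta}^{-1};I_1^c,I_2^c)$ identified via Jacobi's complementary-minor theorem as a minor of the inverse Izergin--Korepin matrix --- is stated as a hypothesis, and you yourself flag that the matching of signs, the $G$-prefactors, and the $s$-dependent exponent $2n-s-3$ is ``the delicate computational heart'' that you have not carried out. It is far from clear that $\Gamma$ at inverted variables is, up to a monomial and $\det M$, a complementary minor of the same matrix $M$: the Izergin--Korepin determinant is not a Cauchy determinant for $\Delta\neq 0$, and the cross-factors $\prod_{k\neq j}(\xi_i+\zeta_k-2\Delta\xi_i\zeta_k)$ in its entries couple rows and columns in a way that does not obviously survive restriction to minors. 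The paper instead proves $f(\boldsymbol{\xi},\boldsymbol{\zeta})=\sum_{s=0}^n g(s)=0$ by a concrete three-step argument: (i) a polynomial factorization $f=\frac{\Delta(\boldsymbol{\xi})\Delta(\boldsymbol{\zeta})}{\prod_{i,j}(1-\xi_i\zeta_j)}Q(\boldsymbol{\xi},\boldsymbol{\zeta})$ with $Q$ polynomial; (ii) the inverse symmetry $f(\boldsymbol{\xi}^{-1},\boldsymbol{\zeta}^{-1})=(\prod_i\xi_i\zeta_i)^{3-2n}f(\boldsymbol{\xi},\boldsymbol{\zeta})$, obtained by re-indexing $s\mapsto n-s$, which bounds the degree of $Q$ and forces palindromic coefficients; and (iii) a residue computation showing $Q$ vanishes at $n$ points $\xi_n=\zeta_i^{-1}$, which together with (i)--(ii) forces $Q\equiv 0$. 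To complete your proof you would need either to carry out the determinantal identification in full (verifying the minor-of-the-inverse claim for the Izergin--Korepin matrix) or to substitute an argument of the paper's type; as written, the lemma is reduced to an unproven identity of comparable difficulty.
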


Then, we have
\begin{equation}\label{e:DsDecompFinal}
\begin{split}
    &\sum_{k=1}^{N-s} (-1)^k \sum_{1+s =s_1< s_2 <\cdots < s_k \leq N} D_{s_1, \dots, s_k}(\boldsymbol{\xi}, \boldsymbol{\zeta}; I_1^c, I_2^c)\\
    &= F_{1, N+1-s_1}(\boldsymbol{\xi}, \boldsymbol{\zeta}; I_1^c, I_2^c)^{L+N-s_1-1} \Gamma(\boldsymbol{\xi}^{-1}, \boldsymbol{\zeta}^{-1}; I_1^c, I_2^c).
\end{split}    
\end{equation}

\begin{lemma}\label{l:GGF^L to GG}
Assume $[\boldsymbol{\xi}],[\boldsymbol{\zeta}] \in \Xi$. Then,
\begin{equation}
    G(\boldsymbol{\xi}; I_1,I_1^c )G(\boldsymbol{\zeta}; I_2,I_2^c ) F_{1, s}(\boldsymbol{\xi}, \boldsymbol{\zeta}; I_1, I_2)^L = G(\boldsymbol{\xi}; I_1^c, I_1) G(\boldsymbol{\zeta}; I_2^c, I_2) 
\end{equation}
where $s = |I_1|= |I_2|$ and $I_1, I_2 \subset [N]$.
\end{lemma}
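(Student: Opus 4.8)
The plan is to decouple the claim into a single--spectral--parameter identity and then multiply. Observe that the factor $F_{1,s}(\boldsymbol{\xi},\boldsymbol{\zeta};I_1,I_2)$ splits as $\left(\prod_{a\in I_1}\xi_a\right)\left(\prod_{b\in I_2}\zeta_b\right)$ by the definition of $F$ and the restriction notation, so the whole statement follows by multiplying the two assertions
\begin{equation*}
G(\boldsymbol{\xi};I_1,I_1^c)\Big(\textstyle\prod_{a\in I_1}\xi_a\Big)^L = G(\boldsymbol{\xi};I_1^c,I_1),\qquad
G(\boldsymbol{\zeta};I_2,I_2^c)\Big(\textstyle\prod_{b\in I_2}\zeta_b\Big)^L = G(\boldsymbol{\zeta};I_2^c,I_2),
\end{equation*}
which are the same identity in different variables. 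Thus I would first reduce the lemma to the single--parameter statement for $\boldsymbol{\xi}$ and prove that; the $\boldsymbol{\zeta}$--version is verbatim. Throughout I use \Cref{a:generic} to guarantee the denominators below are nonzero.

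For the single--parameter identity I would compare $G(\boldsymbol{\xi};I_1,I_1^c)$ and $G(\boldsymbol{\xi};I_1^c,I_1)$ factor by factor using \eqref{oneptGfxn}. With two blocks, the only difference in the analytic factors is which index of a cross pair carries the $-2\Delta\xi$ term: a pair $\{a,b\}$ with $a\in I_1$, $b\in I_1^c$ contributes $(1+\xi_a\xi_b-2\Delta\xi_a)$ to $G(\boldsymbol{\xi};I_1,I_1^c)$ and $(1+\xi_a\xi_b-2\Delta\xi_b)$ to $G(\boldsymbol{\xi};I_1^c,I_1)$. Hence the ratio of the analytic parts is $\prod_{a\in I_1}\prod_{b\in I_1^c}R(a,b)$, where $R(a,b):=\tfrac{1+\xi_a\xi_b-2\Delta\xi_b}{1+\xi_a\xi_b-2\Delta\xi_a}$ is exactly the building block of the Bethe equations \eqref{e:bethe_equation}.

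Computing this product is the core step. I would extend the inner product from $I_1^c$ to all of $[N]$ by writing $\prod_{b\in I_1^c}R(a,b)=\big(\prod_{b=1}^N R(a,b)\big)\big/\big(\prod_{b\in I_1}R(a,b)\big)$, where the numerator equals $(-1)^{N-1}\xi_a^{-L}$ by \eqref{e:bethe_equation}. Then I would use the elementary relations $R(a,b)R(b,a)=1$ and $R(a,a)=1$ to conclude that $\prod_{a\in I_1}\prod_{b\in I_1}R(a,b)=1$ by pairwise cancellation over the symmetric block $I_1\times I_1$, which removes the denominator after taking the product over $a\in I_1$. This leaves the analytic ratio equal to $(-1)^{s(N-1)}\big(\prod_{a\in I_1}\xi_a\big)^{-L}$ with $s=|I_1|$.

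The remaining obstacle, and the only place I expect genuine care to be required, is the sign bookkeeping from the $(-1)$ factors in \eqref{oneptGfxn}. These attach a sign for every cross pair $(\alpha,\beta)$ with $\alpha>\beta$, so $G(\boldsymbol{\xi};I_1,I_1^c)$ and $G(\boldsymbol{\xi};I_1^c,I_1)$ carry signs $(-1)^p$ and $(-1)^q$ with $p+q=s(N-s)$, since each unordered cross pair is counted in exactly one of $p,q$. The sign ratio is therefore $(-1)^{q-p}=(-1)^{s(N-s)}$, and combined with the $(-1)^{s(N-1)}$ from the product step the net sign is $(-1)^{s(N-s)+s(N-1)}=(-1)^{2sN-s(s+1)}=1$ because $s(s+1)$ is even. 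Assembling the analytic and sign parts yields $G(\boldsymbol{\xi};I_1^c,I_1)/G(\boldsymbol{\xi};I_1,I_1^c)=\big(\prod_{a\in I_1}\xi_a\big)^{-L}$, which is the single--parameter identity; multiplying by its $\boldsymbol{\zeta}$--counterpart and recognizing $F_{1,s}(\boldsymbol{\xi},\boldsymbol{\zeta};I_1,I_2)^L$ completes the proof.
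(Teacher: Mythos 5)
Your method is the same as the paper's: reduce to a one--variable identity, use the Bethe equations \eqref{e:bethe_equation} to turn an $L$-th power of a product of the $\xi_a$ into a product of the cross ratios $R(a,b)$, cancel the symmetric block $I_1\times I_1$ via $R(a,b)R(b,a)=R(a,a)=1$, and track the $(-1)$ factors from \eqref{oneptGfxn}. All of these steps are carried out correctly, and your sign bookkeeping $(-1)^{s(N-s)+s(N-1)}=1$ is right.

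The issue is your last sentence. Your computation gives
\begin{equation*}
\frac{G(\boldsymbol{\xi};I_1^c,I_1)}{G(\boldsymbol{\xi};I_1,I_1^c)}=\Bigl(\prod_{a\in I_1}\xi_a\Bigr)^{-L},
\end{equation*}
whereas the single--parameter identity you set out to prove asserts this ratio equals $\bigl(\prod_{a\in I_1}\xi_a\bigr)^{+L}$; the two differ by $\bigl(\prod_{a\in I_1}\xi_a\bigr)^{2L}$, which is not $1$ in general, so declaring your result ``is the single--parameter identity'' is a non sequitur. What your (correct) computation actually exposes is that the lemma as literally stated has the exponent on the wrong index sets: multiplying the Bethe equations over all $i$ gives $\prod_{j=1}^N\xi_j^L=1$, hence $\bigl(\prod_{a\in I_1}\xi_a\bigr)^{-L}=\bigl(\prod_{b\in I_1^c}\xi_b\bigr)^{L}$, and the identity that is true --- the one the paper's own proof establishes (its displayed ``sufficient to show'' carries $\prod_{\beta\in I^c}\xi_\beta^L$) and the one actually invoked in the proof of \Cref{t:one-point}, where the absorbed factor is $F_{1,N-s}(\boldsymbol{\xi},\boldsymbol{\zeta};I_1^c,I_2^c)^{L}$ coming from \eqref{e:DsDecompFinal} --- is
\begin{equation*}
G(\boldsymbol{\xi};I_1,I_1^c)\,G(\boldsymbol{\zeta};I_2,I_2^c)\,F_{1,N-s}(\boldsymbol{\xi},\boldsymbol{\zeta};I_1^c,I_2^c)^{L}=G(\boldsymbol{\xi};I_1^c,I_1)\,G(\boldsymbol{\zeta};I_2^c,I_2).
\end{equation*}
So your derivation is sound and reproduces the paper's argument; to close it you should either prove the corrected statement or add the one extra line $\prod_{j}\xi_j^L=1$ (and its $\boldsymbol{\zeta}$ analogue) to convert your $-L$ exponent over $I_1,I_2$ into the $+L$ exponent over $I_1^c,I_2^c$. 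As written, the final equality you assert does not follow from what you computed. (Minor point: you also need $[\boldsymbol{\zeta}]\in\Xi$, since the $\boldsymbol{\zeta}$ half of the argument uses the Bethe equations for $\boldsymbol{\zeta}$.)
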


\begin{proof}
It is sufficient to show
\begin{equation}
\begin{split}
    &\prod_{\beta\in I^c}\xi_{\beta}^L\prod_{\alpha \in I, \beta \in I^c} (1+ \xi_{\alpha} \xi_{\beta} - 2 \Delta \xi_{\alpha})\prod_{\substack{\alpha >\beta \\\alpha \in I, \beta \in I^c}}(-1)\\
    &= \prod_{\alpha \in I^c, \beta \in I} (1+ \xi_{\alpha} \xi_{\beta} - 2 \Delta \xi_{\alpha})\prod_{\substack{\alpha >\beta \\\alpha \in I^c, \beta \in I}}(-1).
\end{split}
\end{equation}
Due to the Bethe equation \eqref{e:bethe_equation}, we have
\begin{equation}
\begin{split}
    \prod_{\beta \in I^c} \xi_{\beta}^L &= (-1)^{(N-1)|I^c|}\prod_{\beta \in I^c} \prod_{j =1}^N \frac{1 + \xi_{\beta} \xi_j - 2 \Delta \xi_{\beta}}{1 + \xi_{\beta} \xi_j - 2 \Delta \xi_{j}}\\
    &= (-1)^{|I||I^c|}\prod_{\alpha \in I, \beta \in I^c}\frac{1 + \xi_{\beta} \xi_{\alpha} - 2 \Delta\xi_{\beta}}{1 + \xi_{\beta} \xi_{\alpha} - 2 \Delta\xi_{\alpha}}.
\end{split}
\end{equation}
Then, the result follows.
\end{proof}

\subsection{One-point function simplification}
We summarize the preceding simplifications as follows.
the one-point function reads\footnote{We omit the complex conjugation in the expansion of the probabilities, since the Bethe roots come in complex conjugate pairs and it is straightforward to check that $\overline{u(\boldsymbol{\xi},\textbf{x})}=u\left(\overline{\boldsymbol{\xi}},\textbf{x}\right)$ and $\overline{\ell(\textbf{y},\boldsymbol{\xi})}=\ell\left(\textbf{y},\overline{\boldsymbol{\xi}}\right)$.}
\begin{equation}
\begin{split}
    \rho(0;t) &= \sum_{\mathbf{x} \in X(0)} \mathbb{P}(|\Psi(t) \rangle = | \mathbf{x}\rangle)\\
    &=\sum_{\mathbf{x} \in X(0)} \sum_{[\boldsymbol{\xi}], [\boldsymbol{\zeta}] \in \Xi}\ell(\mathbf{y}, \boldsymbol{\xi})\overline{\ell(\mathbf{y}, \boldsymbol{\zeta})} u(\boldsymbol{\xi}, \mathbf{x})\overline{u(\boldsymbol{\zeta}, \mathbf{x})} e^{-it(E(\boldsymbol{\xi})-E(\boldsymbol{\zeta}))}\\
    &=\sum_{[\boldsymbol{\xi}], [\boldsymbol{\zeta}] \in \Xi}\ell(\mathbf{y}, \boldsymbol{\xi})\ell(\mathbf{y}, \boldsymbol{\zeta}) \left(\sum_{\mathbf{x} \in X(0)}u(\boldsymbol{\xi}, \mathbf{x})u(\boldsymbol{\zeta}, \mathbf{x})\right) e^{-it(E(\boldsymbol{\xi})-E(\boldsymbol{\zeta}))}.
\end{split}
\end{equation}

We focus on the summation inside the parenthesis. From \eqref{e:oneptsimplificationstep2} we have 
\begin{equation}
\begin{split}
    \sum_{\mathbf{x} \in X(0)}u(\boldsymbol{\xi}, \mathbf{x})u(\boldsymbol{\zeta}, \mathbf{x}) &= \frac{\sum_{\mathbf{x} \in X(0)}\sum_{\sigma ,\mu \in S_N} B_{\sigma}(\boldsymbol{\xi}) B_{\mu}(\boldsymbol{\zeta}) \prod_{j=1}^N \xi_{\sigma(j)}^{x_j-j}\zeta_{\mu(j)}^{x_j-j}}{\prod_{1 \leq i <j \leq N}(1 + \xi_i \xi_j - 2 \Delta \xi_i)(1 + \zeta_i \zeta_j - 2 \Delta \zeta_i)}\\
    &=\frac{\sum_{\mathbf{x} \in X(0)}C(\boldsymbol{\xi}, \boldsymbol{\zeta}; \mathbf{x})}{\prod_{1 \leq i <j \leq N}(1 + \xi_i \xi_j - 2 \Delta \xi_i)(1 + \zeta_i \zeta_j - 2 \Delta \zeta_i)}\\
    &=\frac{\sum_{k=0}^{N-1}(-1)^k \sum_{1< s_1 < \cdots < s_k < N+1}C_{\vec{s}}(\boldsymbol{\xi}, \boldsymbol{\zeta})}{\prod_{1 \leq i <j \leq N}(1 + \xi_i \xi_j - 2 \Delta \xi_i)(1 + \zeta_i \zeta_j - 2 \Delta \zeta_i)}.
\end{split}
\end{equation}
Moreover, by \eqref{e:CsToEGGDs}, \eqref{e:DsDecompFinal}, and \Cref{l:GGF^L to GG}
\begin{equation}
\begin{split}
    &\sum_{k=0}^{N-1}(-1)^k \sum_{1< s_1 < \cdots < s_k < N+1}C_{\vec{s}}(\boldsymbol{\xi}, \boldsymbol{\zeta})\\
    &= \sum_{k=0}^{N-1}(-1)^k\sum_{s = 1}^{N} \sum_{\substack{I_1, I_2 \subset [N]\\|I_1|=|I_2| = s}} (1 - F_{1, s}(\boldsymbol{\xi}, \boldsymbol{\zeta}; I_1, I_2))\Gamma(\boldsymbol{\xi}, \boldsymbol{\zeta}; I_1, I_2)\\
    &\times G(\boldsymbol{\xi}; I_1^c, I_1)G(\boldsymbol{\zeta}; I_2^c, I_2)F_{1, N-s}(\boldsymbol{\xi}, \boldsymbol{\zeta}; I_1^c, I_2^c)^{N-s-2} \Gamma(\boldsymbol{\xi}^{-1}, \boldsymbol{\zeta}^{-1}; I_1^c, I_2^c)
\end{split}
\end{equation}

Therefore, we have
\begin{equation}
\begin{split}
    &\rho(0;t)\\
    &= \sum_{[\boldsymbol{\xi}], [\boldsymbol{\zeta}] \in \Xi}\ell(y, \boldsymbol{\xi}) \ell(y, \boldsymbol{\zeta})\sum_{s = 1}^{N} \sum_{\substack{I_1, I_2 \subset [N]\\|I_1|=|I_2| = s}}  ( F_{1, s}(\boldsymbol{\xi}, \boldsymbol{\zeta}; I_1, I_2)^{-1}-1)F_{1, N-s}(\boldsymbol{\xi}, \boldsymbol{\zeta}; I_1^c, I_2^c)^{N-s-2}\\
    &\times \frac{\Gamma(\boldsymbol{\xi}, \boldsymbol{\zeta}; I_1, I_2)G(\boldsymbol{\xi}; I_1^c, I_1)G(\boldsymbol{\zeta}; I_2^c, I_2)\Gamma(\boldsymbol{\xi}^{-1}, \boldsymbol{\zeta}^{-1}; I_1^c, I_2^c)}{\prod_{1 \leq i <j \leq N}(1 + \xi_i \xi_j - 2 \Delta \xi_i)(1 + \zeta_i \zeta_j - 2 \Delta \zeta_i)} e^{-it(E(\boldsymbol{\xi})-E(\boldsymbol{\zeta}))}\\
     &= \sum_{[\boldsymbol{\xi}],[\boldsymbol{\zeta}]\in\Xi}\ell(y,\boldsymbol{\xi})\mathcal{F}(0;\boldsymbol{\xi},\boldsymbol{\zeta})\ell(y,\boldsymbol{\zeta}).
\end{split}
\end{equation}

\subsection{Miscellaneous}
\subsubsection{Proofs for \texorpdfstring{\Cref{l:BDecomp}}{Lemma 6.4} and \texorpdfstring{\Cref{c:LSW20L6.3}}{Lemma 6.6}}
\label{s:L6.2 and 6.5 proofs}
\begin{proof}[Proof of \Cref{l:BDecomp}]
    We will proceed by splitting the three terms in $B_\sigma(\boldsymbol{\xi})$ into components depending on indices from $I=\{i_1,\ldots,i_{s_1-1}\}$ and $I^c=\{j_1,\ldots,j_{N-s_1+1}\}$ separately, which will be the terms in $B_{\lambda_1}(\boldsymbol{\xi};I)$ and $B_{\lambda_2}(\boldsymbol{\xi};I^c)$, respectively, and mixed terms involving indices from both $I$ and $I^c$ which will be the terms in $G(\boldsymbol{\xi};I,I^c)$. First, we have that \begin{equation}
        \mathrm{sgn}(\sigma)=\mathrm{sgn}(\lambda_1)\mathrm{sgn}(\lambda_2)\prod_{\substack{\alpha>\beta\\\alpha\in I ,\space \beta\in I^c}}(-1).
    \end{equation}
    Moreover, 
    \begin{equation}
        \prod_{j=1}^N\xi_{\sigma(j)}^j = \prod_{j\in I}\xi_{\sigma(j)}^j\prod_{k\in I^c}\xi_{\sigma(k)}^k = \prod_{\alpha=1}^{|I|}\xi_{i_{\lambda_1(\alpha)}}^\alpha\left(\prod_{\beta=1}^{|I^c|}\xi_{j_{\lambda_2(\beta)}}^\beta \prod_{k=|I|+1}^N\xi_{\sigma(k)}^{|I|}\right).
    \end{equation}
    For the product over the set $\{1\leq i < j \leq N\}$ in $B_\sigma(\boldsymbol{\xi})$, we partition this into products over $\{i<j\mid \sigma(i),\sigma(j)\in I\}$, $\{i<j\mid \sigma(i),\sigma(j)\in I^c\}$, and $\{i<j \mid \sigma(i)\in I, \sigma(j)\in I^c\}$. These three products will be parts of $B_{\lambda_1}(\boldsymbol{\xi};I)$, $B_{\lambda_2}(\boldsymbol{\xi};I^c)$ and $G(\boldsymbol{\xi};I,I^c)$ respectively. Note that $\{i<j\mid \sigma(i)\in I^c,\sigma(j)\in I\}=\emptyset$ since $\sigma\left(\{1,\ldots,s_1-1\}\right)=I$ and $\sigma\left(\{s_1,\ldots,N\}\right)=I^c$. More precisely, 
    \begin{equation}
        \begin{split}
            &\prod_{1\leq i < j \leq N}1+\xi_{\sigma(i)}\xi_{\sigma(j)}-2\Delta\xi_{\sigma(i)} = \prod_{\substack{i<j\\\sigma(i),\sigma(j)\in I}}\left(1+\xi_{\sigma(i)}\xi_{\sigma(j)}-2\Delta\xi_{\sigma(i)}\right)\\ &\times\prod_{\substack{i<j\\
            \sigma(i),\sigma(j)\in I^c}}\left(1+\xi_{\sigma(i)}\xi_{\sigma(j)}-2\Delta\xi_{\sigma(i)}\right) 
             \prod_{\substack{i<j\\
            \sigma(i)\in I \\ \sigma(j)\in I^c}}\left(1+\xi_{\sigma(i)}\xi_{\sigma(j)}-2\Delta\xi_{\sigma(i)}\right) \\
            &= \prod_{1\leq \alpha<\beta\leq |I|} \left(1+\xi_{i_{\lambda_1(\alpha)}}\xi_{i_{\lambda_1(\beta)}}-2\Delta\xi_{i_{\lambda_1(\alpha)}}\right) \prod_{1\leq \alpha<\beta\leq |I^c|} \left(1+\xi_{i_{\lambda_2(\alpha)}}\xi_{i_{\lambda_2(\beta)}}-2\Delta\xi_{i_{\lambda_2(\alpha)}}\right)\\&\times\prod_{\substack{\alpha\in I\\ \beta \in I^c}}\left(1+\xi_{\alpha}\xi_{\beta}-2\Delta\xi_{\alpha} \right).
        \end{split}
    \end{equation}
    Then all together we have
    \begin{equation}
        \begin{split}
        B_\sigma(\boldsymbol{\xi}) &= \mathrm{sgn}(\lambda_1) \prod_{1\leq \alpha<\beta\leq |I|} \left(1+\xi_{i_{\lambda_1(\alpha)}}\xi_{i_{\lambda_1(\beta)}}-2\Delta\xi_{i_{\lambda_1(\alpha)}}\right) \prod_{\alpha=1}^{|I|}\xi_{i_{\lambda_1(\alpha)}}^\alpha \\
        &\times \prod_{\substack{\alpha\in I\\ \beta \in I^c}}\left(1+\xi_{\alpha}\xi_{\beta}-2\Delta\xi_{\alpha} \right) \prod_{\substack{\alpha>\beta\\\alpha\in I ,\space \beta\in I^c}}(-1) \\
        &\times \mathrm{sgn}(\lambda_2) \prod_{1\leq \alpha<\beta\leq |I^c|} \left(1+\xi_{i_{\lambda_2(\alpha)}}\xi_{i_{\lambda_2(\beta)}}-2\Delta\xi_{i_{\lambda_2(\alpha)}}\right) \prod_{\beta=1}^{|I^c|}\xi_{j_{\lambda_2(\beta)}}^\beta \left(\prod_{k=|I|+1}^N\xi_k^{|I|}\right) \\
        &=B_{\lambda_1}(\boldsymbol{\xi};I)G(\boldsymbol{\xi};I,I^c)B_{\lambda_2}(\boldsymbol{\xi};I^c)\left(F^\sigma_{s_1,N}(\boldsymbol{\xi}) \right)^{|I|}.
        \end{split}
    \end{equation}
\end{proof}

\begin{proof}[Proof of \Cref{c:LSW20L6.3}]
    Note: this proof closely follows the proof for Lemma 6.3 in \cite{liu_integral_2020}. We will proceed by induction. The base case $n=1$ holds as the left hand side of \eqref{c:LSW20L6.3} evaluates to $-\Gamma(\xi_1,\zeta_1)=\frac{1}{\xi_1\zeta_1-1}$ and the right hand side evaluates to $\frac{1}{\xi_1\zeta_1}\cdot \frac{1}{1-\xi_1^{-1}\zeta_1^{-1}} = \frac{1}{\xi_1\zeta_1-1}$. Now let $n>1$ and suppose \eqref{c:LSW20L6.3} holds for $1, \ldots,n-1$. Then, rewriting the left hand side in terms of sums over the sets $I_{1,1}$ and $I_{2,1}$, we have 
    \begin{equation}\label{e:inductionproofeq}
    \begin{split}
    &\sum_{k=1}^n(-1)^k\sum_{\substack{I_{1,1}\cup\cdots\cup I_{1,k}=\{1,\ldots,n\}\\I_{2,1}\cup\cdots\cup I_{2,k}=\{1,\ldots,n\}}} G(\boldsymbol{\xi};I_{1,1},\ldots,I_{1,k})G(\boldsymbol{\zeta};I_{2,1},\ldots,I_{2,k})\\&\times\prod_{\alpha=1}^kF_{1,e_{\alpha}}(\boldsymbol{\xi},\boldsymbol{\zeta};I_{1,\alpha}, I_{2,\alpha})^{\left(\sum_{i=1}^{\alpha-1}e_i \right)}\Gamma(\boldsymbol{\xi},\boldsymbol{\zeta}; I_{1, \alpha}, I_{2, \alpha})\\
    &=(-1)\sum_{s=1}^n\sum_{\substack{I_{1,1},I_{2,1}\subset[n]\\|I_{1,1}|=|I_{2,1}|=s}}G(\boldsymbol{\xi};I_{1,1},I_{1,1}^c)G(\boldsymbol{\zeta};I_{2,1},I_{2,1}^c)\Gamma(\boldsymbol{\xi},\boldsymbol{\zeta};I_{1,1},I_{2,1})H_{n-s}(\boldsymbol{\xi},\boldsymbol{\zeta};I_{1,1}^c,I_{2,1}^c)
    \end{split}
    \end{equation}
    with 
    \begin{equation}
    \begin{split}
        &H_{n-s}(\boldsymbol{\xi},\boldsymbol{\zeta};I_{1,1}^c,I_{2,1}^c)\\
        &=\sum_{k=2}^{n-s+1}(-1)^{k-1}\sum_{\substack{I_{1,2}\cup\cdots\cup I_{1,k}=[n-s]\\I_{2,2}\cup\cdots\cup I_{2,k}=[n-s]}}G(\boldsymbol{\xi};I_{1,2},\ldots,I_{1,k})G(\boldsymbol{\zeta};I_{2,2},\ldots,I_{2,k}) \\
        &\hspace{40mm}\times\prod_{\alpha=2}^k F_{1,e_\alpha}(\boldsymbol{\xi},\boldsymbol{\zeta};I_{1,\alpha},I_{2,\alpha})^{\left(\sum_{i=1}^{\alpha-1}e_i\right)}\Gamma(\boldsymbol{\xi},\boldsymbol{\zeta};I_{1,\alpha},I_{2,\alpha})
    \end{split}
    \end{equation}
    then by the induction hypothesis 
    \begin{equation}
        H_{n-s}(\boldsymbol{\xi},\boldsymbol{\zeta};I_{1,1}^c,I_{2,1}^c)=\left(F_{1,n-s}(\boldsymbol{\xi},\boldsymbol{\zeta};I_{1,1}^c,I_{2,1}^c)\right)^{2n-s-3} \Gamma(\boldsymbol{\xi}^{-1},\boldsymbol{\zeta}^{-1};I_{1,1}^c,I_{2,1}^c).
    \end{equation}

Note that the right hand side of \eqref{e:inductionproofeq} is a sum from $s=1$ to $n$. For clarity of the argument, let's call what is inside the sum over $s$, $g(s)$. One can check that the right hand side of \eqref{c:LSW20L6.3} is equal to $g(0)$. In other words, our proof amounts to showing that $(-1)\sum_{s=1}^ng(s)=g(0)$. Or equivalently, we must show that $\sum_{s=0}^ng(s)=0$.

Set
\begin{equation}\label{e:inductionproofeqV2}
    \begin{split}
    &f(\boldsymbol{\xi}, \boldsymbol{\zeta})\\
    &:=\sum_{s=0}^n\sum_{\substack{I_{1,1},I_{2,1}\subset[n]\\|I_{1,1}|=|I_{2,1}|=s}}G(\boldsymbol{\xi};I_{1,1},I_{1,1}^c)G(\boldsymbol{\zeta};I_{2,1},I_{2,1}^c)\Gamma(\boldsymbol{\xi},\boldsymbol{\zeta};I_{1,1},I_{2,1})H_{n-s}(\boldsymbol{\xi},\boldsymbol{\zeta};I_{1,1}^c,I_{2,1}^c)\\
    &= (-1)\sum_{s=1}^n\sum_{\substack{I_{1,1},I_{2,1}\subset[n]\\|I_{1,1}|=|I_{2,1}|=s}}\Bigg[G(\boldsymbol{\xi};I_{1,1},I_{1,1}^c)G(\boldsymbol{\zeta};I_{2,1},I_{2,1}^c)\\
    &\hspace{10mm}\times\Gamma(\boldsymbol{\xi},\boldsymbol{\zeta};I_{1,1},I_{2,1})\Gamma(\boldsymbol{\xi}^{-1}, \boldsymbol{\zeta}^{-1}, I_{1,1}^c, I_{2,1}^c) \left(F_{1, n-s}(\boldsymbol{\xi}, \boldsymbol{\zeta}; I_{1,1}^c, I_{2,1}^c)\right)^{2n-s-3}\Bigg]
    \end{split}
    \end{equation}
We want to show that $f(\boldsymbol{\xi}, \boldsymbol{\zeta})=0$. We show this by the following properties:
\begin{itemize}
    \item Polynomial factorization
    \begin{equation}
        f(\boldsymbol{\xi}, \boldsymbol{\zeta}) = \frac{\Delta(\boldsymbol{\xi})\Delta(\boldsymbol{\zeta})}{\prod_{i, j=1}^n (1 - \xi_i \zeta_j)} Q(\boldsymbol{\xi}, \boldsymbol{\zeta})
    \end{equation}
    where $\Delta(\cdot)$ is the Vandremond function and $Q$ is a polynomial on each of the sets of variables, $\boldsymbol{\xi}$ and $\boldsymbol{\zeta}$.
    \item Inverse symmetry 
    \begin{equation}
        f(\boldsymbol{\xi}^{-1}, \boldsymbol{\zeta}^{-1}) = \left(\prod_{i=1}^n\xi_i \zeta_i\right)^{3-2n}f(\boldsymbol{\xi}, \boldsymbol{\zeta})
    \end{equation}
    where $\boldsymbol{\xi}^{-1}= (\xi_1^{-1}, \dots, \xi_n^{-1})$ and similarly for $\boldsymbol{\zeta}^{-1}$.
    \item Residue computation
    \begin{equation}
        \underset{\xi_n = \zeta_i^{-1}}{\mathrm{Res}} f(\boldsymbol{\xi}, \boldsymbol{\zeta}) = 0
    \end{equation}
    for $i =1, \dots, n$.
\end{itemize}

The polynomial factorization follows from the fact that
\begin{equation}
    \Gamma(\boldsymbol{\xi}, \boldsymbol{\zeta}) = \frac{\Delta(\boldsymbol{\xi})\Delta(\boldsymbol{\zeta})}{\prod_{i, j=1}^n (1 - \xi_i \zeta_j)} Q(\boldsymbol{\xi}, \boldsymbol{\zeta})
\end{equation}
where $Q$ is a degree $n-1$ polynomial on each variable and symmetric on each variable set, $\boldsymbol{\xi}$ and $\boldsymbol{\zeta}$. One also needs to check that the expression for $f(\boldsymbol{\xi}, \boldsymbol{\zeta})$ is anti-symmetric for each of the set of variables to obtain the Vandermonde factors. 

The inverse symmetry follows from direct computation by replacing the variables with their inverses, i.e.~$\boldsymbol{\xi} \rightarrow \boldsymbol{\xi}^{-1}$ and $\boldsymbol{\zeta} \rightarrow \boldsymbol{\zeta}^{-1}$, in the expression for $f(\boldsymbol{\xi}, \boldsymbol{\zeta})$ and re-indexing the sum, $s \rightarrow n-s$. 

The inverse symmetry and the polynomial factorization imply the following identity
\begin{equation}
    \left( \prod_{i=1}^n \xi_i \zeta_i\right)^{2(n-1)} Q(\boldsymbol{\xi}^{-1}, \boldsymbol{\zeta}^{-1}) = Q(\boldsymbol{\xi}, \boldsymbol{\zeta}).
\end{equation}
This means that $Q$ is a polynomial of degree $2(n-1)$ with symmetric coefficients so that the coefficient of the term with degree $k$ is equal to the coefficient of the term with degree $2(n-1) - k$ for $k=0, 1, \dots, 2(n-1)$. Moreover, this means that $Q$ is completely determined by $n$ distinct points.  

The residue computation implies that $Q(\boldsymbol{\xi}, \boldsymbol{\zeta}) = 0$ for $n$ points, $\xi_n= \zeta_1^{-1}, \dots, \zeta_n^{-1}$. Thus it follows that $Q(\boldsymbol{\xi}, \boldsymbol{\zeta}) = 0$ and hence $f(\boldsymbol{\xi}, \boldsymbol{\zeta})=0$. The result follows. 
\end{proof}

\subsubsection{On the decomposition of \texorpdfstring{$S_N$}{S	extsubscript{N}}} \label{section: S_N decomp}
Since a permutation in $S_N$ is uniquely determined by its maps restricted to partitions of $\{1,\ldots,N\}$, we can rewrite the symmetric group as \begin{equation}
\begin{split}
    S_N&=\bigcup_{\substack{ I\subset\{1,\ldots,N\}\\
    |I|=s_1-1}
}\{\sigma\in S_N \mid \sigma(\{1,\ldots,s_1-1\})=I\}\\
&=\bigcup_{\substack{
    I\subset\{1,\ldots,N\}\\
    |I|=s_1-1}}S(I),
\end{split}
\end{equation} then we can express a sum over all permutations in $S_N$ as a double sum over subsets $I\subset\{1,\dots,N\}$ with fixed cardinality $|I|=s_1-1$  and permutations in $S(I)$. For instance, 
\begin{equation*}
\begin{split}
    \sum_{\sigma\in S_N}F_{1,s_1-1}^\sigma(\boldsymbol{\xi}) &= \sum_{\substack{
    I\subset\{1,\ldots,N\}\\
    |I|=s_1-1}}\sum_{\sigma\in S(I)} F_{1,s_1-1}^\sigma(\boldsymbol{\xi}) \\
&= \sum_{\substack{
    I\subset\{1,\ldots,N\}\\
    |I|=s_1-1}} F_{i_1,i_{s_1-1}}(\boldsymbol{\xi})\sum_{\sigma\in S(I)}1
\end{split}
\end{equation*}
where $I=\{i_1,\ldots,i_{s_1-1}\}$. The last equality holds since the permutations $\sigma\in S(I)$ map $\{1,\ldots,s_1-1\}$ to $I$. As a result of this decomposition, we simplify the sum in this example from one over the $N!$ permutations of $S_N$ to a sum over the $\binom{N}{s_1-1}$ subsets of $\{1,\ldots,N\}$ of size $s_1-1$.

\section{Proof of \texorpdfstring{\Cref{c:main_v3}}{Conjecture 3.11} for \texorpdfstring{$N=2$}{N=2}}\label{s: N=2 proof}

\begin{assumption}\label{a:N=2 proof assumptions}
    Assume $N=2$, $L>2$, $|\Delta|<\frac{L-1}{2L}$, $\Delta\neq\cos(2\pi k/(L-2))$ with $k\in [L-2]$, and that $\Delta$ is not in a finite set of values as determined in the proof of \Cref{l: xi_1 poles simple}.
\end{assumption}

\begin{theorem}\label{t:N=2 theorem}
     Take \Cref{a:N=2 proof assumptions} to be true. 
     Then \Cref{c:main_v3} holds. That is,
    \begin{equation}
         \mathds{1}(\mathbf{x} = \mathbf{y}) =  \sum_{[\boldsymbol{\xi}] \in \Xi}  \ell(\mathbf{y}, \boldsymbol{\xi}) u(\boldsymbol{\xi}, \mathbf{x}).
    \end{equation}
\end{theorem}
\begin{proof}
The theorem holds by \Cref{l:Indicator function expansion} and $\Cref{l: Bethe function expansion}$.     
\end{proof}

We prove \Cref{c:main_v3} for $N=2$ to elucidate the origin of the general conjecture. The formulas for the $\ell$-function arise from residue computations. Moreover, the residue computations depend on the location of certain poles, which are related to the Bethe roots \eqref{e:bethe_sols}. For $N=2$, we carry out the analysis explicitly in this section culminating in \Cref{l:Indicator function expansion} in \Cref{sub:Indicator function expansion} and \Cref{l: Bethe function expansion} in \Cref{sub:Bethe function expansion}. The general case is more complicated due to technical reasons arising from lack of control of the Bethe roots, and we leave the general case for future work.

The basic strategy is to show that \eqref{e:identity} holds via a nested contour integral:
\begin{equation}
    \mathds{1}(\mathbf{x} = \mathbf{y}) =\frac{1}{(2\pi i)^2}\oint_{C_1}d\xi_1 \oint_{C_2} d\xi_2 \ I(\boldsymbol{\xi};\mathbf{x},\mathbf{y}) =  \sum_{[\boldsymbol{\xi}] \in \Xi}  \ell(\mathbf{y}, \boldsymbol{\xi}) u(\boldsymbol{\xi}, \mathbf{x})
\end{equation}
for contours $C_1$ and $C_2$ and some integrand $I(\boldsymbol{\xi}; \boldsymbol{x}, \boldsymbol{y})$. We define the contours
\begin{equation}\label{e:contour}
\begin{split}
    C_j &= C_{r_j} \sqcup C_{R_j} \\
\end{split}
\end{equation}
where $C_{r_j}$ and $C_{R_j}$ are circles centered at the origin oriented counterclockwise and clockwise, respectively, with the radii $r(C_{r_j})=r_j \ll1$ and $r(C_{R_j})=R_j \gg1$. In particular, let $0<\epsilon \ll \frac12$. We set $r_1=\epsilon$, $R_1=\frac1\epsilon$, $r_2=2\epsilon$, $R_2=\frac{1}{2\epsilon}$. Moreover, for the rest of \Cref{s: N=2 proof}, take \Cref{a:N=2 proof assumptions} to be true. We define
\begin{equation}\label{e:integrand}
\begin{split}
    I(\boldsymbol{\xi};\mathbf{x},\mathbf{y})&= \frac{u(\xi_1,\xi_2;x_1,x_2)\xi_1^{-y_1-1}\xi_2^{-y_2-1}}{\left(\xi_1^L - S(\xi_1,\xi_2) \right)\left(\xi_2^L - S(\xi_2,\xi_1) \right)} \\
    &=  \frac{\xi_1^{x_1-y_1-1}\xi_2^{x_2-y_2-1}+S(\xi_2,\xi_1)\xi_1^{x_2-y_1-1}\xi_2^{x_1-y_2-1}}{\left(\xi_1^L - S(\xi_1,\xi_2) \right)\left(\xi_2^L - S(\xi_2,\xi_1) \right)}
\end{split}
\end{equation}

where $S(\xi_i,\xi_j) = -\frac{1+\xi_i\xi_j-2\Delta\xi_i}{1+\xi_i\xi_j-2\Delta\xi_j}$. To prove the equality \eqref{e:identity} we will show that integrating $I(\boldsymbol{\xi};\mathbf{x},\mathbf{y})$ over the $C_j$ contours with respect to $\xi_1$ and $\xi_2$ (i.e. considering poles 'inside' the contours $C_1$ \& $C_2$) will yield $\mathds{1}(x=_L y)$ (equality up to periodicity), and deforming the $C_{r_j}$ contours to $C_{R_j}$ in $I$ and computing the resulting residues (i.e. considering the poles 'outside' of the contours $C$) will yield $\sum_{\boldsymbol{\xi}\in\Xi}\ell(\mathbf{y},\boldsymbol\xi)u(\boldsymbol{\xi},\mathbf{x})$. 

\subsection{Indicator function expansion}\label{sub:Indicator function expansion}

We start by writing the indicator function on configurations of the periodic XXZ spin chain as a nested contour integral. The result follows by computing the residues of the integral at the origin and at infinity.

We extend the configuration space for the XXZ spin-1/2 chain on a ring. We set
\begin{equation}
    \mathcal{X}' = \{(x_1, x_2) \in \mathbb{Z}^2 \mid x_1< x_2 < x_1+L\}.
\end{equation}
The configuration space $\mathcal{X}$, given by \eqref{e:coordinate_config}, restricts $x_1$ and $x_2$ to be in the set $[L]= \{0, 1, \dots, L-1\}$. Given two configurations $\boldsymbol{x} = (x_1, x_2), \boldsymbol{y}=(y_1, y_2) \in \mathcal{X}'$, we denote
\begin{equation}
    \boldsymbol{x}=_{L} \boldsymbol{y} \Leftrightarrow \begin{cases}
        x_1-y_1 = x_2-y_2=k\, L\\
        x_1-y_2 = x_2-(y_1+ L)= k\, L
    \end{cases}
\end{equation}
for some $k \in \mathbb{Z}$ and we say that two configurations $\boldsymbol{x} = (x_1, x_2), \boldsymbol{y}=(y_1, y_2) \in \mathcal{X}'$ are \emph{equivalent}\footnote{The $d\xi_i$ are always assumed to have a factor of $\frac{1}{2\pi i}$}.

\begin{lemma}\label{l:Indicator function expansion}
Let $\boldsymbol{x},\boldsymbol{y} \in \mathcal{X}'$ and take \Cref{a:N=2 proof assumptions} to be true. Then, we have the following contour integral representation on the indicator function of equivalent configurations,
\begin{equation}\label{e:N=2ProofIndicator}
    \mathds{1}(\boldsymbol{x}=_{L} \boldsymbol{y}) = \oint_{C_1} d\xi_1 \oint_{C_2} d\xi_2 \, I(\boldsymbol{\xi};\mathbf{x},\mathbf{y}),
\end{equation}
where the contours $C_j$ are given by \eqref{e:contour} and the integrand $I$ is given by \eqref{e:integrand}.
\end{lemma}

\begin{proof}
We denote

\begin{equation}
\begin{split}
    &\mathds{1}(x=_L y) \coloneq \sum_{k\in\mathbb{Z}}\mathds{1}\left(x_1+kL=y_1,x_2+kL=y_2 \right) + \mathds{1}\left(x_2+(k-1)L=y_1,x_1+kL=y_2 \right).
\end{split}
\end{equation}
We claim that the non-permuted terms in the integrand will yield the indicator functions where both $x_1$ and $x_2$ are shifted by $kL$, and we claim that the permuted terms will yield the indicator functions where just $x_1$ and just $x_2$ are shifted by a factor of $L$ apart. That is,   
\begin{equation}\label{e:N=2ProofIdentityPart}
    \oint_{C_1}d\xi_1\oint_{C_2}d\xi_2 \frac{\xi_1^{x_1-y_1-1}\xi_2^{x_2-y_2-1}}{\left(\xi_1^L - S(\xi_1,\xi_2) \right)\left(\xi_2^L - S(\xi_2,\xi_1) \right)} = \sum_{k\in\mathbb{Z}}\mathds{1}(\mathbf{x}+kL=\mathbf{y})
\end{equation}
and 
\begin{equation}\begin{split}
    \oint_{C_1}d\xi_1\oint_{C_2}d\xi_2 \frac{S(\xi_2,\xi_1)\xi_1^{x_2-y_1-1}\xi_2^{x_1-y_2-1}}{\left(\xi_1^L - S(\xi_1,\xi_2) \right)\left(\xi_2^L - S(\xi_2,\xi_1) \right)} = \sum_{k\in\mathbb{Z}}\mathds{1}\left(x_2+(k-1)L=y_1,x_1+kL=y_2 \right).
    \end{split}
\end{equation}
Moreover, integrating with respect to the $C_{r_j}$ and $C_{R_j}$ contours separately for $\xi_1$ and $\xi_2$ will yield the terms of the sum over non-negative and negative $k$ terms respectively. 

Let's begin with the nested integrals over the $C_{r_j}$ contours and non-permuted terms. Because $r(C_{r_1})=r_j<1$, we have that $\left|\xi_i^L S(\xi_j,\xi_i) \right| \approx (r_j)^L<1$, so we can rewrite the denominator in the left hand side of \eqref{e:N=2ProofIdentityPart} as geometric sums\footnote{Note: the property $S(\xi_1,\xi_2)=(S(\xi_2,\xi_1))^{-1}$ will be frequently used without specific mention in this proof sketch.}. That is, 
\begin{equation}\label{e:N=2proofGeometricSum}
\begin{split}
    &\oint_{C_{r_1}}d\xi_1\oint_{C_{r_2}}d\xi_2 \frac{\xi_1^{x_1-y_1-1}\xi_2^{x_2-y_2-1}}{\left(\xi_1^L - S(\xi_1,\xi_2) \right)\left(\xi_2^L - S(\xi_2,\xi_1) \right)}\\ = &\sum_{k,m=0}^\infty \oint d\xi_1\oint d\xi_2\hspace{0.1cm}\xi_1^{x_1-y_1-1+kL}\xi_2^{x_2-y_2-1+mL}S(\xi_2,\xi_1)^{k-m}.
\end{split}
\end{equation}

To handle the sum, we will consider three cases: when $k>m$, when $k<m$, and $k=m$. For the first two cases, we will use the two possible change of variables for $\eta=\xi_1\xi_2$. Given $k<m$, we let $\xi_1=\eta/\xi_2$. Then we have that the corresponding terms from $\eqref{e:N=2proofGeometricSum}$ are equal to 
\begin{equation}
    \begin{split}
        \sum_{m>k}^\infty\oint_{C_{r_1\cdot r_2}}d\eta\oint_{C_{r_2}}d\xi_2 \ \eta^{x_1-y_1-1+kL}\xi_2^{x_2-x_1+y_1-y_2-1+(m-k)(L-1)}\left( \frac{\xi_2+\eta\xi_2-2\Delta\eta}{1+\eta-2\Delta\xi_2} \right)^{m-k}.
    \end{split}
\end{equation}
Focusing on $\xi_2$, our only possible pole inside the $C_{r_2}$ contour is at $\xi_2=0$ if its exponent is negative. Inspecting the exponent, we find that 
\begin{equation}
    x_2-x_1+y_2-y_1-1+(m-k)(L-1)\geq -(L-1)+(m-k)(L-1)\geq 0
\end{equation}
since $L>1$ and $m>k$. Thus there are no poles with respect to the the $\xi_2$ variable and the integrals in the $k<m$ case are zero. For $k>m$, we let $\xi_2=\eta/\xi_1$ and have a similar story. The corresponding terms in this case from \eqref{e:N=2proofGeometricSum} are equal to 
\begin{equation}
    \begin{split}
        \sum_{k>m}^\infty\oint_{C_{r_1\cdot r_2}}d\eta\oint_{C_{r_2}}d\xi_2 \ \xi_1^{x_1-x_2+y_2-y_1-1+(k-m)(L-1)}\eta^{x_2-y_2-1+mL}\left( \frac{\xi_1+\eta\xi_1-2\Delta\eta}{1+\eta-2\Delta\xi_1} \right)^{k-m}.
    \end{split}
\end{equation}

With a possible pole at $\xi_1=0$, we inspect the exponent of $\xi_1$ and find that 
\begin{equation}
    x_1-x_2+y_2-y_1-1+(k-m)(L-1)\geq -(L-1)+(k-m)(L-1)\geq0.
\end{equation}
Thus there are no poles with respect to the $\xi_1$ variable and the integrals in the $k>m$ case are zero. Now from \eqref{e:N=2proofGeometricSum} we are left with 
\begin{equation}
    \sum_{k=0}^{\infty}\oint_{C_{r_1}} d\xi_1\oint_{C_{r_2}} d\xi_2 \ \xi_1^{x_1-y_1-1+kL}\xi_2^{x_2-y_2-1+kL}.
\end{equation}
For a fixed $k\in\mathbb{Z}_{\geq 0}$, if $x_j+kL>y_j$ for $j=1$ or $j=2$, then there are no poles in the above integrand in $\xi_1$ or $\xi_2$ respectively, and the integral is $0$. If $x_j+kL = y_j$ for $j=1$ and $j=2$, have simple poles in $\xi_1$ and $\xi_2$ and a residue computation shows that
\begin{equation}
    \oint_{C_{r_1}} d\xi_1\oint_{C_{r_2}} d\xi_2 \ \xi_1^{x_1-y_1-1+kL}\xi_2^{x_2-y_2-1+kL} = \oint d\xi_1\oint d\xi_2 \ \xi_1^{-1}\xi_2^{-1} =1.
\end{equation}
If $x_j+kL<y_j$ for $j=1$ or $j=2$, there is a pole of order $-(x_j-y_j-1+kL)$ in $\xi_j$ and a residue computation shows that
\begin{equation}
     \oint_{C_{r_1}} d\xi_1\oint_{C_{r_2}} d\xi_2 \ \xi_1^{x_1-y_1-1+kL}\xi_2^{x_2-y_2-1+kL} = 0.
\end{equation}
Thus, we have that 
\begin{equation}
    \oint_{C_{r_1}}d\xi_1\oint_{C_{r_2}}d\xi_2 \frac{\xi_1^{x_1-y_1-1}\xi_2^{x_2-y_2-1}}{\left(\xi_1^L - S(\xi_1,\xi_2) \right)\left(\xi_2^L - S(\xi_2,\xi_1) \right)} = \sum_{k=0}^\infty \mathds{1}(x+kL = y)
\end{equation}
where $x+kL = (x_1+kL, x_2+kL)$.

Turning to the permuted terms, since $|\xi_j|< 1$ in the $C_{r_j}$ contours, we have
\begin{equation}\begin{split}
&\oint_{C_{r_1}}d\xi_1\oint_{C_{r_2}}d\xi_2 \frac{S(\xi_2,\xi_1)\xi_1^{x_2-y_1-1}\xi_2^{x_1-y_2-1}}{\left(\xi_1^L - S(\xi_1,\xi_2) \right)\left(\xi_2^L - S(\xi_2,\xi_1) \right)}\\
&=\sum_{k,m=0}^\infty \oint_{C_{r_1}}d\xi_1\oint_{C_{r_2}}d\xi_2 \xi_1^{x_2-y_1-1}\xi_2^{x_1-y_2-1}S(\xi_2,\xi_1)^{k-m+1}.
\end{split}
\end{equation}
Now we will split the sum into three possible cases: $k-m+1<0$, $k-m+1>0$, and $k-m+1=0$. For $k-m+1<0$, we use the change of variables $\xi_1=\frac{\eta}{\xi_2}$ which gives us
\begin{equation}
    \begin{split}
        &\sum_{k-m+1<0}\oint_{C_{r_1\cdot r_2}}d\eta\oint_{C_{r_2}}d\xi_2 \ \eta^{x_2-y_1-1+kL}\xi_2^{x_1-x_2+y_1-y_2-1+(m-k)L}\left(\frac{1+\eta-2\Delta\eta/\xi_2}{1+\eta-2\Delta\xi_2} \right)^{m-k-1} \\
        &= \sum_{k-m+1<0}\oint_{C_{r_1\cdot r_2}}d\eta\oint_{C_{r_2}}d\xi_2 \ \eta^{x_2-y_1-1+kL}\xi_2^{x_1-x_2+y_1-y_2+(m-k)L-(m-k)}\left(\frac{\xi_2+\eta\xi_2-2\Delta\eta}{1+\eta-2\Delta\xi_2} \right)^{m-k-1}.
    \end{split}
\end{equation}
Since $m-k>1$ in this case, we have the following lower bound on the $\xi_2$ exponent $x_1-x_2+y_1-y_2+(m-k)(L-1)\geq -2(L-1)+(m-k)(L-1)\geq 0$. Thus, there are no poles in the $\xi_2$ integral and the terms in this case are all zero. For $k-m+1>0$, we use the change of variables $\xi_2=\frac{\eta}{\xi_1}$. Then we have 
\begin{equation}
    \begin{split}
        &\sum_{k-m+1>0}\oint_{C_{r_1}}d\xi_1\oint_{C_2}d\eta \ \xi_1^{x_2-x_1+y_2-y_1-1+(k-m)L}\eta^{x_1--y_2-1+mL}\left(\frac{1+\eta-2\Delta\eta/\xi_1}{1+\eta-2\Delta\xi_1} \right)^{k-m+1} \\
        &=\sum_{k-m+1>0}\oint_{C_{r_1}}d\xi_1\oint_{C_{r_2}}d\eta \ \xi_1^{x_2-x_1+y_2-y_1+(k-m)L-(k-m)-2}\eta^{x_1--y_2-1+mL}\left(\frac{\xi_1+\eta\xi_1-2\Delta\eta}{1+\eta-2\Delta\xi_1} \right)^{k-m+1}.
        \end{split}
\end{equation}
Since $k-m\geq0$ in this case, we have the following lower bound on the $\xi_1$ exponent $x_2-x_1+y_2-y_1-2+(k-m)L-(k-m)\geq (k-m)(L-1) \geq 0$. Hence there are no poles in the $\xi_1$ integral, and the terms in this case are zero. Finally for $k-m+1 = 0$ we have the remaining terms
\begin{equation}
    \sum_{k=0}^\infty\oint_{C_{r_1}}d\xi_1\oint_{C_{r_2}}d\xi_2\ \xi_1^{x_2-y_1-1+kL}\xi_2^{x_1-y_2-1+(k+1)L}.
\end{equation}
If $x_1 + (k+1)L=y_2$ and $x_2+kL=y_1$, then we have 
\begin{equation}
    \oint_{C_{r_1}}d\xi_1\oint_{C_{r_2}}d\xi_2 \ \xi_1^{-1}\xi_2^{-1} = 1.
\end{equation}
If $x_1 + (k+1)L<y_2$ or $x_2+kL<y_1$, then a residue computation yields zero. And if $x_1 + (k+1)L>y_2$ or $x_2+kL>y_1$, then $\xi_1$ or $\xi_2$ do not have a pole and the integral is zero. Thus, 
\begin{equation}
\begin{split}
    \oint_{C_{r_1}}d\xi_1\oint_{C_{r_2}}d\xi_2 \frac{S(\xi_2,\xi_1)\xi_1^{x_2-y_1-1}\xi_2^{x_1-y_2-1}}{\left(\xi_1^L - S(\xi_1,\xi_2) \right)\left(\xi_2^L - S(\xi_2,\xi_1) \right)}&=\sum_{k=0}^\infty\mathds{1}\left(x_2+kL=y_1,\ x_1+(k+1)L=y_2 \right)\\
    &=\sum_{k=1}^\infty\mathds{1}\left(x_2+(k-1)L=y_1,\ x_1+kL=y_2 \right)
    \end{split}
\end{equation}

Our integration over the $C_{R_j}$ contours of $I(\xi;x,y)$ will be nearly identical to that of the $C_{r_j}$ contours. The key difference is that we obtain geometric sums using the fact that $|\xi_j^{-L} S(\xi_j,\xi_i)|\approx (r_j)^L <1$ on $C_{R_j}$ and then we can swap the sum over non-negative integers to one over negative integers, and then the integrands are essentially the same as in the case of integrating over $C_{r_j}$ contours. Beginning with non permuted terms, we have
\begin{equation}
    \begin{split}
        &\oint_{C_{R_1}}d\xi_1\oint_{C_{R_2}}d\xi_2 \frac{\xi_1^{x_1-y_1-1}\xi_2^{x_2-y_2-1}}{\left(\xi_1^L - S(\xi_1,\xi_2) \right)\left(\xi_2^L - S(\xi_2,\xi_1) \right)} \\ 
        &=\sum_{k,m=0}^\infty \oint_{C_{R_1}}d\xi_1\oint_{C_{R_2}}d\xi_2 \ \xi_1^{x_1-y_1-1-(k+1)L}\xi_2^{x_2-y_2-1-(m+1)L}S(\xi_2,\xi_1)^{m-k} \\ 
        &= \sum_{k,m\leq0} \oint_{C_{R_1}}d\xi_1\oint_{C_{R_2}}d\xi_2 \ \xi_1^{x_1-y_1-1+(k-1)L}\xi_2^{x_2-y_2-1+(m-1)L}S(\xi_2,\xi_1)^{k-m}.
    \end{split}
\end{equation}

As we did previously, we consider three cases: $k<m$, $k>m$, and $k=m$. For $k<m$, we use the change of variables $\xi_2=\frac{\eta}{\xi_1}$. Then we have 
\begin{equation}
    \begin{split}
        &\sum_{k<m\leq 0} \oint_{C_{R_1}}d\xi_1\oint_{C_{R_1\cdot R_2}}d\eta \ \xi_1^{x_1-x_2+y_2-y_1-1+(k-m)L} \eta^{x_2-y_2-1+(m-1)L}\left(\frac{1+\eta-2\Delta\xi_1}{1+\eta-2\Delta\eta/\xi_1} \right)^{m-k} \\
        &=\sum_{k<m\leq0} \oint_{C_{R_1}}d\xi_1\oint_{C_{R_1\cdot R_2}}d\eta \ \xi_1^{x_1-x_2+y_2-y_1-1+(k-m)(L-1)} \eta^{x_2-y_2-1+(m-1)L}\left(\frac{1/\xi_1+\eta/\xi_1-2\Delta}{1+\eta-2\Delta\eta/\xi_1} \right)^{m-k}.
    \end{split}
\end{equation}
 Since $r(C_{R_1})>1$, potential poles for $\xi_1$ exist at $+\infty$ when its exponent is positive. Since $k-m<0$ in this case, we have the following upper bound on the exponent of $\xi_1$: $x_1-x_2+y_2-y_1-1+(k-m)(L-1)\leq -2 + L-1 +(k-m)(L-1) \leq -2$. Hence there are no poles with respect to $\xi_1$ and the integrals in this case are zero. For $k>m$ we let $\xi_1=\frac{\eta}{\xi_2}$. Then we have 
\begin{equation}
    \begin{split}
        &\sum_{0\geq k>m} \oint_{C_{R_1 \cdot R_2}}d\eta\oint_{C_{R_2}}d\xi_2 \ \eta^{x_1-y_1-1+kL} \xi_2^{x_2-x_1+y_1-y_2-1+(m-k)L}\left(\frac{1+\eta-2\Delta\xi_2}{1+\eta-2\Delta\eta/\xi_2} \right)^{k-m} \\
        &= \sum_{0\geq k>m} \oint_{C_{R_1\cdot R_1}}d\eta\oint_{C_{R_2}}d\xi_2 \ \eta^{x_1-y_1-1+kL} \xi_2^{x_2-x_1+y_1-y_2-1+(m-k)(L-1)}\left(\frac{1/\xi_2+\eta/\xi_2-2\Delta}{1+\eta-2\Delta\eta/\xi_2} \right)^{k-m}.
    \end{split}
\end{equation}
Since $m-k<0$, we have the following upper bound on the $\xi_2$ exponent: $x_2-x_1+y_1-y_2-1+(m-k)(L-1)\leq L-1 - 2 + (m-k)(L-1) \leq -2$. Thus there are no poles with respect to $\xi_2$ and the integrals in this case are zero. Lastly we have the case where $k=m$. Then we have the remaining terms 
\begin{equation}
    \begin{split}
        \sum_{k\leq 0}\oint_{C_{R_1}}d\xi_1 \oint_{C_{R_2}}d\xi_2 \ \xi_1^{x_1-y_1 - 1 +(k-1)L}\xi_2^{x_2-y_2-1+(k-1)L}.
    \end{split}
\end{equation}

If $x_j + (k-1)L = y_j$ for $j=1,2$, we have 
\begin{equation}
    \oint_{C_{R_1}}d\xi_1\oint_{C_{R_2}}d\xi_2  \ \xi_1^{-1}\xi_2^{-1} = 1.
\end{equation}
If $x_j - y_j + (k-1)L>0$ for $j=1$ or $j=2$ then a quick residue computation yields zero for the integrals in this case. And if $x_j-y_j+(k-1)L<0$ for $j=1$ or $j=2$, then there are no poles for $\xi_1$ or $\xi_2$, and the integrals in this case are also zero. Thus, 
\begin{equation}
    \oint_{C_{R_1}}d\xi_1\oint_{C_{R_2}}d\xi_2 \frac{\xi_1^{x_1-y_1-1}\xi_2^{x_2-y_2-1}}{\left(\xi_1^L - S(\xi_1,\xi_2) \right)\left(\xi_2^L - S(\xi_2,\xi_1) \right)}=\sum_{k\leq -1}\mathds{1}\left(x_1+kL=y_1, \, x_2+kL=y_2\right).
\end{equation}

Following the same process for the permuted terms, we begin with
\begin{equation}
    \begin{split}
        &\oint_{C_{R_1}}d\xi_1\oint_{C_{R_2}}d\xi_2 \frac{S(\xi_2,\xi_1)\xi_1^{x_2-y_1-1}\xi_2^{x_1-y_2-1}}{\left(\xi_1^L - S(\xi_1,\xi_2) \right)\left(\xi_2^L - S(\xi_2,\xi_1) \right)} \\
        &= \sum_{k,m\leq 0} \oint_{C_{R_1}}d\xi_1\oint_{C_{R_2}}d\xi_2 \xi_1^{x_2-y_1-1+(k-1)L}\xi_2^{x_1-y_2-1+(m-1)L}S(\xi_2,\xi_1)^{k-m+1}.
    \end{split}
\end{equation}

We separate the sum into three cases: $k-m+1>0$, $k-m+1<0$, and $k-m+1 = 0$. For $k-m+1>0$ we use the change of variables $\xi_1=\frac{\eta}{\xi_2}$ and have 
\begin{equation}
    \begin{split}
        &\sum_{0\geq k>m-1} \oint_{C_{R_1 \cdot R_2}}d\eta\oint_{C_{R_2}}d\xi_2 \ \eta^{x_2-y_1-1+(k-1)L}\xi_2^{x_1-x_2+y_1-y_2-1+(m-k)L}\left(\frac{1+\eta-2\Delta\xi_2}{1+\eta-2\Delta\eta/\xi_2}\right)^{k-m+1} \\
        &=\sum_{0\geq k>m-1}\oint_{C_{R_1\cdot R_2}}d\eta\oint_{C_{R_2}}d\xi_2 \ \eta^{x_2-y_1-1+(k-1)L}\xi_2^{x_1-x_2+y_1-y_2+(m-k)(L-1)}\left(\frac{1/\xi_2+\eta/\xi_2-2\Delta}{1+\eta-2\Delta\eta/\xi_2}\right)^{k-m+1}.
    \end{split}
\end{equation}

Since $m-k<1$, we have the following upper bound on the $\xi_2$ exponent: $x_1-x_2 +y_1-y_2+(m-k)(L-1)\leq -2 + (m-k)(L-1) \leq -2$. Thus there are no poles with respect to $\xi_2$ and the integrals in this case are zero. For $k-m+1<0$ we use the change of variables $\xi_2=\frac{\eta}{\xi_1}$ and have 
\begin{equation}
    \begin{split}
        &\sum_{0\geq m>k+1}\oint_{C_{R_1}}d\xi_1\oint_{C_{R_1\cdot R_2}}d\eta \ \xi_1^{x_2-x_1+y_2-y_1-1+(k-m)L}\eta^{x_1-y_2-1+(m-1)L}\left(\frac{1+\eta-2\Delta\xi_1}{1+\eta-2\Delta\eta/\xi_1}\right)^{m-k-1} \\
        &=\sum_{0\geq m>k+1}\oint_{C_{R_1}}d\xi_1\oint_{C_{R_1\cdot R_2}}d\eta \ \xi_1^{x_2-x_1+y_2-y_1-2+(k-m)(L-1)}\eta^{x_1-y_2-1+(m-1)L}\left(\frac{1/\xi_1+\eta/\xi_1-2\Delta}{1+\eta-2\Delta\eta/\xi_1}\right)^{m-k-1}.
    \end{split}
\end{equation}
Since $k-m<-1$, we have the following upper bound on the $\xi_1$ exponent: $x_2-x_1+y_2-y_1-2 + (k-m)(L-1) \leq 2(L-1) + (k-m)(L-1)-2 \leq -2$. Hence there are no poles with respect to $\xi_1$ and the integrals in this case are zero. For $k-m+1=0$, we have
\begin{equation}
    \sum_{k\leq -1}\oint_{C_{R_1}}d\xi_1\oint_{C_{R_2}}d\xi_2 \ \xi_1^{x_2-y_1-1+(k-1)L}\xi_2^{x_1-y_2-1+kL}.
\end{equation}

If $x_2+(k-1)L=y_1$ and $x_1+kL=y_2$, then we have 
\begin{equation}
    \oint_{C_{R_1}}d\xi_1\oint_{C_{R_2}}d\xi_2 \ \xi_1^{-1}\xi_2^{-1} = 1.
\end{equation}

Otherwise, we have negative exponents that yield no poles with respect to $\xi_1$ or $\xi_2$, or we have positive exponents on both $\xi_1$ and $\xi_2$ and straightforward residue computations yield zero. Thus, 
\begin{equation}
    \oint_{C_{R_1}}d\xi_1\oint_{C_{R_2}}d\xi_2 \frac{S(\xi_2,\xi_1)\xi_1^{x_2-y_1-1}\xi_2^{x_1-y_2-1}}{\left(\xi_1^L - S(\xi_1,\xi_2) \right)\left(\xi_2^L - S(\xi_2,\xi_1) \right)} = \sum_{k\leq -1}\mathds{1}\left(x_2+(k-1)L=y_1, \ x_1+kL=y_2 \right).
\end{equation}

Up to this point, we have shown 
\begin{equation}
    \begin{split}
        &\oint_{C_{r_1}}d\xi_1\oint_{C_{r_2}}d\xi_2 \ \frac{\xi_1^{x_1-y_1-1}\xi_2^{x_2-y_2-1}+ S(\xi_2,\xi_1)\xi_1^{x_2-y_1-1}\xi_2^{x_1-y_2-1}}{\left( \xi_1^L-S(\xi_1,\xi_2) \right)\left( \xi_2^L-S(\xi_2,\xi_1) \right)} \\
        &= \sum_{k=0}^\infty \mathds{1}\left( x_1+kL=y_1,  \ x_2+kL=y_2 \right) + \sum_{k=1}^\infty\mathds{1}\left( x_2+(k-1)L=y_1,  \ x_1+kL=y_2 \right) \\ \text{and}\\
        &\oint_{C_{R_1}}d\xi_1\oint_{C_{R_2}}d\xi_2 \ \frac{\xi_1^{x_1-y_1-1}\xi_2^{x_2-y_2-1}+ S(\xi_2,\xi_1)\xi_1^{x_2-y_1-1}\xi_2^{x_1-y_2-1}}{\left( \xi_1^L-S(\xi_1,\xi_2) \right)\left( \xi_2^L-S(\xi_2,\xi_1) \right)}\\
        &= \sum_{k\leq -1} \mathds{1}\left( x_1+kL=y_1,  \ x_2+kL=y_2 \right) + \mathds{1}\left( x_2+(k-1)L=y_1,  \ x_1+kL=y_2 \right).
    \end{split}
\end{equation}

Integrating over the $C_{r_1}$ contour for $\xi_1$ and $C_{R_2}$ contour for $\xi_2$, we begin with
\begin{equation}
    \begin{split}
        &\oint_{C_{r_1}}d\xi_1\oint_{C_{R_2}}d\xi_2 \ \frac{\xi_1^{x_1-y_1-1}\xi_2^{x_2-y_2-1}}{\left( \xi_1^L-S(\xi_1,\xi_2) \right)\left( \xi_2^L-S(\xi_2,\xi_1) \right)} \\
        &=-\sum_{k,m=0}^{\infty}\oint_{C_{r_1}}d\xi_1\oint_{C_{R_2}}d\xi_2 \ \xi_1^{x_1-y_1-1+kL}\xi_2^{x_2-y_2-1-(m+1)L}S(\xi_2,\xi_1)^{k+m+1} 
    \end{split}
\end{equation}
since $|\xi_1^LS(\xi_2,\xi_1)|\approx \frac23|\Delta|\epsilon^{L-1}< 1$ and $|\xi_2^{-L}S(\xi_2,\xi_1)|\approx \frac23|\Delta|(2\epsilon)^{L-1}< 1$. For $\xi_2$, we have potential poles at $+\infty$. Using the change of variables $\xi_1=\frac{\eta}{\xi_2}$, we get 
\begin{equation}
    \sum_{k,m=0}^{\infty}\oint_{C_{r=1/2}}d\eta\oint_{C_{R_2}}d\xi_2 \ \eta^{x_1-y_1-1+kL}\xi_2^{x_2-x_1+y_1-y_2-(k+m+1)(L-1)-1}\left(\frac{1/\xi_2+\eta/\xi_2-2\Delta}{1+\eta-2\Delta\eta/\xi_2}\right)^{k+m+1}.
\end{equation}
Then we have the following upper bound on the $\xi_2$ exponent: $x_2-x_1+y_1-y_2-(k+m+1)(L-1)-1\leq L-1-2-(k+m+1)(L-1) \leq -2$. Hence there are no poles with respect to $\xi_2$ and the above integrals are zero. Moving along, we compute the permuted terms
\begin{equation}
    \begin{split}
        &\oint_{C_{r_1}}d\xi_1\oint_{C_{R_2}}d\xi_2 \ \frac{S(\xi_2,\xi_1)\xi_1^{x_2-y_1-1}\xi_2^{x_1-y_2-1}}{\left( \xi_1^L-S(\xi_1,\xi_2) \right)\left( \xi_2^L-S(\xi_2,\xi_1) \right)} \\
        &=-\sum_{k,m=0}^\infty\oint_{C_{r_1}}d\xi_1\oint_{C_{R_2}}d\xi_2 \ \xi_1^{x_2-y_1-1+kL}\xi_2^{x_1-y_2-1-(m+1)L}S(\xi_2,\xi_1)^{k+m+2}.
    \end{split}
\end{equation}
 Using the change of variables $\xi_1=\eta/\xi_2$ we obtain
 \begin{equation}
     \begin{split}
         -\sum_{k,m=0}^\infty\oint_{C_{r=1/2}}d\eta\oint_{C_{R_2}}d\xi_2 \ \eta^{x_2-y_1-1+kL}\xi_2^{x_1-x_2+y_1-y_2-(k+m+1)(L-1)}\left(\frac{1/\xi_2+\eta/\xi_2-2\Delta}{1+\eta-2\Delta\eta/\xi_2}\right)^{k+m+2}.
     \end{split}
 \end{equation}
 Then we have the following upper bound on the $\xi_2$ exponent: $x_1-x_2+y_1-y_2-(k+m+1)(L-1)\leq -2-(k+m+1)(L-1)$. So there are no poles with respect to the $\xi_2$ variable and the above integrals are zero. Lastly we must compute the nested integral with $\xi_1$ over the $C_{R_1}$ contour and $\xi_2$ over the $C_{r_2}$ contour. We have 
 \begin{equation}
     \begin{split}
         &\oint_{C_{R_1}}d\xi_1\oint_{C_{r_2}}d\xi_2 \ \frac{\xi_1^{x_1-y_1-1}\xi_2^{x_2-y_2-1}}{\left( \xi_1^L-S(\xi_1,\xi_2) \right)\left( \xi_2^L-S(\xi_2,\xi_1) \right)} \\
         &= -\sum_{k,m=0}^{\infty}\oint_{C_{R_1}}d\xi_1\oint_{C_{r_2}}d\xi_2 \ \xi_1^{x_1-y_1-1-(k+1)L}\xi_2^{x_2-y_2-1+mL}S(\xi_1,\xi_2)^{k+m+1}
     \end{split}
 \end{equation}
 since $|\xi_1^{-L}S(\xi_1,\xi_2)|\approx \frac23|\Delta|\epsilon^{L-1}<1$ and $|\xi_2^LS(\xi_1,\xi_2)|\approx \frac43|\Delta| (2\epsilon)^{L-1}<1$. Using the change of variables $\xi_2=\eta/\xi_1$ we obtain 
 \begin{equation}
     \begin{split}
         -\sum_{k,m=0}^{\infty}\oint_{C_{R_1}}d\xi_1\oint_{C_{r=2}}d\eta \ \xi_1^{x_1-x_2+y_2-y_1-1-(k+m+1)(L-1)}\eta^{x_2-y_2-1+mL}\left(\frac{1/\xi_1+\eta/\xi_1-2\Delta}{1+\eta-2\Delta\eta/\xi_1}\right)^{k+m+1}.
     \end{split}
 \end{equation}
 On the exponent of $\xi_1$ we have the following upper bound: $x_1-x_2+y_2-y_1-1-(k+m+1)(L-1) \leq -2 +(L-1) -(k+m+1)(L-1)\leq -2$. Thus there are no poles with respect to $\xi_1$ so the above integrals are zero. On to the permuted terms, we have
 \begin{equation}
     \begin{split}
         &\oint_{C_{R_1}}d\xi_1\oint_{C_{r_2}}d\xi_2 \ \frac{S(\xi_2,\xi_1)\xi_1^{x_2-y_1-1}\xi_2^{x_1-y_2-1}}{\left( \xi_1^L-S(\xi_1,\xi_2) \right)\left( \xi_2^L-S(\xi_2,\xi_1) \right)} \\
         &= -\sum_{k,m=0}^{\infty}\oint_{C_{R_1}}d\xi_1\oint_{C_{r_2}}d\xi_2 \ \xi_1^{x_2-y_1-1-(k+1)L}\xi_2^{x_1-y_2-1+mL}S(\xi_1,\xi_2)^{k+m}.
     \end{split}
 \end{equation}
Using the change of variables $\xi_2=\eta/\xi_1$, we obtain
\begin{equation}
    \begin{split}
    \sum_{k,m=0}^{\infty}\oint_{C_{R_1}}d\xi_1\oint_{C_{2}}d\eta \ \xi_1^{x_2-x_1+y_2-y_1-1-(k+m+1)L - 1 + k + m}\eta^{x_1-y_2-1+mL}\left(\frac{1/\xi_1+\eta/\xi_1-2\Delta}{1+\eta-2\Delta \eta/\xi_1} \right)^{k+m}.
    \end{split}
\end{equation}
 We have the following upper bound on the $\xi_1$ exponent: 
 $x_2-x_1+y_2-y_1-(k+m)(L-1)-L-1
 \leq (2-k-m)(L-1)-L-1$.
 This upper bound is only positive for $k=m=0$. Otherwise, the exponent of $\xi_1$ is negative and the above integrals with respect to $\xi_1$ have no poles and are zero. For $k=m=0$, we are left with the nested integral 
 \begin{equation}
     -\oint_{C_{R_1}}d\xi_1\oint_{C_{r_2}}d\xi_2 \ \xi_1^{x_2-y_1-1-L}\xi_2^{x_1-y_2-1}.
 \end{equation}
 Then we deform the $C_{R_1}$ contour to $C_{r_1}$ and obtain 
\begin{equation}
    \oint_{C_{r_1}}d\xi_1\oint_{C_{r_2}}d\xi_2 \ \xi_1^{x_2-y_1-1-L}\xi_2^{x_1-y_2-1} = \mathds{1}(x_2-L=y_1, \ x_1=y_2).
\end{equation}

To piece everything together now, from evaluating our nested integrals with respect to poles inside the $C_1$ and $C_2$ contours, we have the following equalities that we've shown 
\begin{equation}
    \begin{split}
        \oint_{C_{r_1}}d\xi_1\oint_{C_{r_2}}d\xi_2 I(\boldsymbol{\xi};\mathbf{x},\mathbf{y}) &= \sum_{k=0}^\infty \mathds{1}(x+kL=y) + \sum_{k=1}^\infty\mathds{1}(x_2+(k-1)L=y_1,\ x_1+kL=y_2) \\
        \oint_{C_{R_1}}d\xi_1\oint_{C_{R_2}}d\xi_2 I(\boldsymbol{\xi};\mathbf{x},\mathbf{y}) &= \sum_{k\leq -1} \mathds{1}(x+kL=y) + \mathds{1}(x_2+(k-1)L=y_1,\ x_1+kL=y_2) \\
        \oint_{C_{r_1}}d\xi_1\oint_{C_{R_2}}d\xi_2 I(\boldsymbol{\xi};\mathbf{x},\mathbf{y}) &= 0 \\
        \oint_{C_{R_1}}d\xi_1\oint_{C_{r_2}}d\xi_2 I(\boldsymbol{\xi};\mathbf{x},\mathbf{y}) &= \mathds{1}(x_2-L=y_1,\ x_1=y_2).
    \end{split}
\end{equation}
Hence we have shown that 
\begin{equation}
\begin{split}
    \oint_{C_1}d\xi_1\oint_{C_2}d\xi_2 \ I(\boldsymbol{\xi};\mathbf{x},\mathbf{y}) &= \sum_{k\in\mathbb{Z}}\mathds{1}(x+kL=y) + \mathds{1}(x_2+(k-1)L=y_1,\ x_1+kL=y_2)\\
    &= \mathds{1}(x=_L y).
\end{split}
\end{equation}
\end{proof}

\subsection{Bethe function expansion}\label{sub:Bethe function expansion}
Now that we have evaluated the potential poles inside of the $C$ contours and obtained the above result, we shall evaluate the potential poles outside of the $C$ contours and aim to show the following lemma. 

\begin{lemma}\label{l: Bethe function expansion}
Take \Cref{a:N=2 proof assumptions} to be true. Then
    \begin{equation}
      \oint_{C_1}d\xi_1\oint_{C_2}d\xi_2 \ I(\boldsymbol{\xi};\mathbf{x},\mathbf{y})=\sum_{(\xi_1,\xi_2)\in\Xi}
      \ell(\mathbf{y}, \boldsymbol{\xi})u(\boldsymbol{\xi}, \mathbf{x})
\end{equation}
where the nested contour integral on the left hand side is the same as the one in \Cref{l:Indicator function expansion}.
\end{lemma}

\begin{proof}
    The result holds by \Cref{l: contour to residue sums}, \Cref{l: multipart sums}, and \Cref{l:last Bethe expansion equality}.
\end{proof}

To obtain the required lemmas in the above proof, we analyze residues "outside" the contours $C_1$ and $C_2$. That is, residues not near $0$ or $+\infty$. We first note where our possible poles are outside of the $C_1$ and $C_2$ contours. Inspecting $I(\boldsymbol{\xi};\mathbf{x},\mathbf{y})$, we find possible poles when $S(\xi_2,\xi_1)=\infty$, $\xi_1^L-S(\xi_1,\xi_2)=0$, and $\xi_2^L-S(\xi_2,\xi_1)=0$. We rule out the first possible pole $S(\xi_2,\xi_1)=\infty$ since the same term with the same order is in the denominator of $I(\boldsymbol{\xi};\mathbf{x},\mathbf{y})$. For the other two possibilities, we will use the following Lemmas 

\begin{lemma}\label{l: AllPoles}
Take \Cref{a:N=2 proof assumptions} to be true. Then there are $L$ simple poles from $\xi_2^L-S(\xi_2,\xi_1)=0$ with respect to $\xi_2$ which lie outside the contour $C_{r_2}$. I.e. the poles are inside of the annulus with inner radius $2 \epsilon$ and outer radius $\frac{1}{2\epsilon}$, denoted $A_2$.
\end{lemma}

\begin{proof}
    We first rewrite $\xi_2^L-S(\xi_2,\xi_1) =0 $ as 
    \begin{equation}\label{e: xi2 pole}
        \frac{\left(\xi_2^L+1 \right)\left(\xi_1\xi_2 +1 \right)}{1+\xi_1\xi_2-2\Delta\xi_1} + \frac{-2 \Delta\xi_2\left( \xi_1 \xi_2^{L-1} + 1\right)}{1+\xi_1\xi_2-2\Delta\xi_1} = 0.
    \end{equation}
    We call the first and second terms of \cref{e: xi2 pole} $f(\xi_2)$ and $g(\xi_2;\Delta)$, respectively; where $
    \xi_1$ and $\Delta$ are fixed parameters. In this form, we are effectively treating $g$ as a $\Delta$-perturbation of $f$. Now the core idea is using Rouch\'e's theorem to show that $f+g$ has exactly $L$ zeros in $A_2$ from the perturbed roots of unity from $f$ and the $L+1$'th zero lies outside of $A_2$. Since we will eventually integrate over the $C_1$ contour with respect to $\xi_1$, we must consider the cases where $\xi_1\in C_{r_1}$ and $\xi_1\in C_{R_1}$. To bound the locations of the zeros of $f+g$ via Rouch\'e's theorem, we will consider the cases when $\xi_2\in C_{r_2}$ and $\xi_2\in C_{R_2}$. So we have four cases in total to consider. By Rouch\'e's theorem, if $|f|>|g|$ for $\xi_2$ on a circle on which $f$ and $g$ are holomorphic, then $f+g$ has the same difference of poles from zeros in the interior of the circle as $f$. Since $f$ and $g$ have the same lone pole, $1+\xi_1\xi_2-2\Delta\xi_1=0$, then they have the same number of zeros. We note that $f$ has $L$ zeros at the roots of unity and one more zero at $\xi_2 = -\xi_1^{-1} \in C_{R_1}$. For $\xi_1\in C_{r_1}$ and $\xi_2\in C_{r_2}$, we have 
    \begin{equation}
        |f|(1+\xi_1\xi_2-2\Delta\xi_1) \geq \left(1-(2\epsilon)^L \right)\left(1-2\epsilon^2 \right)
    \end{equation}
    and 
    \begin{equation}
        |g|(1+\xi_1\xi_2-2\Delta\xi_1) \leq 4|\Delta|\epsilon\left(1+2^{L-1}\epsilon^L\right).
    \end{equation}

Hence we have that $|f|>|g|$ when 
\begin{equation}
    |\Delta| < \frac{\left(1-(2\epsilon)^L \right)\left(1-2\epsilon^2 \right)}{4\epsilon\left(1+2^{L-1}\epsilon^L\right)} < \frac{1}{4\epsilon}.
\end{equation}
 So in this case $f$ has no zeros inside the $C_{r_2}$ contour and thus neither does $f+g$. For $\xi_1\in C_{r_1}$ and $\xi_2\in C_{R_2}$ we have that 
\begin{equation}
    |f|(1+\xi_1\xi_2-2\Delta\xi_1) \geq \frac12\left| 1-(2\epsilon)^{-L} \right|
\end{equation}
and  
\begin{equation}
    |g|(1+\xi_1\xi_2-2\Delta\xi_1) \leq |\Delta|\epsilon^{-1}\left(1+2^{-L+1}\epsilon^{-L+2} \right). 
\end{equation}
Then $|f|>|g|$ when 
\begin{equation}
    |\Delta| < \frac{\left| (2\epsilon)^L-1 \right|}{2\epsilon\left(1+2^{L-1}\epsilon^{L-2} \right)} < \frac{1}{2\epsilon}.
\end{equation}

Thus for $\xi_1\in C_{r_1}$, $f+g$ has the same number of zeros in the interior of $C_{R_2}$ contour as $f$, which is $L$. So we have that $f+g$ has $L$ zeros in $A_2$ since there are none in the interior of the $C_{r_2}$ contour. For the case when $\xi_1\in C_{R_1}$ and $\xi_2\in C_{r_2}$, we have that 
\begin{equation}
    |f|(1+\xi_1\xi_2-2\Delta\xi_1)\geq \left| 1 - (2\epsilon)^L\right|
\end{equation}
and 
\begin{equation}
    |g|(1+\xi_1\xi_2-2\Delta\xi_1) \leq 4|\Delta|\epsilon\left(2^{L-1}\epsilon^{L-2}+1\right). 
\end{equation}

Then $|f| > |g|$ when 
\begin{equation}
    |\Delta| < \frac{\left|1-(2\epsilon)^L\right|}{4\epsilon\left(1+2^{L-1}\epsilon^{L-2}\right)}< \frac{1}{4\epsilon}
\end{equation}

Lastly when $\xi_1\in C_{R_1}$ and $\xi_2\in C_{R_2}$ we have that 
\begin{equation}
    |f|(1+\xi_1\xi_2-2\Delta\xi_1) \geq \left|1-(2\epsilon)^{-L}\right|\left|1-\frac12\epsilon^{-2}\right|
\end{equation}
and 
\begin{equation}
    |g|(1+\xi_1\xi_2-2\Delta\xi_1)\leq |\Delta|\epsilon^{-1}\left(1+2^{-L+1}\epsilon^{-L}\right).
\end{equation}
Then $|f|>|g|$ when 
\begin{equation}
    |\Delta|<\frac{\left|2\epsilon^2-1\right|\left|(2\epsilon)^L-1\right|}{4\epsilon\left|1+2^{L-1}\epsilon^L\right|}<\frac{1}{4\epsilon},
\end{equation}

which holds since $|\Delta|<\frac{L-1}{L}\epsilon$ and $|\epsilon|<\frac12$. Thus, we ensure that there are exactly $L$ poles from $\xi_2^L-S(\xi_2,\xi_1)=0$ which are located in the annulus $A_2$. 
\end{proof}

\begin{lemma} \label{l: noncontributing pole}
    Take \Cref{a:N=2 proof assumptions} to be true. Then the pole $\xi_1^L-S(\xi_1,\xi_2)=0$ with respect to $\xi_2$ lies inside the contour $C_2$. I.e. the pole is outside of the annulus $A_2$.
\end{lemma}

\begin{proof}[Proof of \cref{l: noncontributing pole}]
    The potential pole $\xi_1^L-S(\xi_1,\xi_2)=0$ with respect to $\xi_2$ can be solved explicitly. In particular, this pole is given by 
\begin{equation}\label{e: xi2 pole not in annulus}
    \xi_2 =-\frac{1+\xi_1^L-2\Delta\xi_1}{\xi_1\left(1+\xi_1^L-2\Delta \xi_1^{L-1}\right)}.
\end{equation}
For $\xi_1\in C_{r_1}$, we have the following approximate bound on \eqref{e: xi2 pole not in annulus}
\begin{equation}
    |\xi_2| \gtrapprox \frac1\epsilon-2|\Delta|>\frac{1}{2\epsilon}
\end{equation}
since $|\Delta|<\frac{L-1}{L}\epsilon$. For $\xi_1\in C_{R_1}$, \begin{equation}
    |\xi_2| \approx \frac{\epsilon}{1-2|\Delta|\epsilon}<2\epsilon.
\end{equation}
Hence the pole \eqref{e: xi2 pole not in annulus} is inside the contours $C_{r_2}$ and $C_{R_2}$, and their residues do not contribute in these computations. 
\end{proof}

\begin{lemma}\label{l: simple poles}
    Take \Cref{a:N=2 proof assumptions} to be true. Then the $L$ poles from \Cref{l: AllPoles} are simple (i.e. have a multiplicity of one). 
\end{lemma}

\begin{proof}
    Assume $z_2$ satisfies  $z_2^L-S(z_2,\xi_1)=0$. To determine if $z_2$ has a multiplicity greater than one, we inspect whether $Lz_2^{L-1}-\frac{\partial S(z_2,\xi_1)}{\partial z_2}=0$. Substituting $z_2^L=S(z_2,\xi_1)$ we obtain 
    \begin{equation}
        Lz_2^{-1}S(z_2,\xi_1)-\frac{\partial S(z_2,\xi_1)}{\partial z_2} = 0. 
    \end{equation}
     After clearing denominators, this equation is quadratic in $z_2$. 
    To bound the location of the solutions to this quadratic equation, we inspect the leading order behavior of a series expansion in the $\xi_1$ variable for $\xi_1\in C_{r_1}$ and $\xi_1\in C_{R_1}$. For $\xi_1 \in C_{r_1}$ we obtain 
    \begin{equation}
       |z_2| \approx \frac{L-1}{L}\frac{1}{\epsilon} 
    \end{equation}
    and
    \begin{equation}
      |z_2| \approx \frac{L}{2|\Delta|(L-1)} 
    \end{equation}

    For $\xi_1\in C_{R_1}$ we obtain 
    \begin{equation}
        |z_2| \approx \frac{2|\Delta|(L-1)}{L} 
    \end{equation}
    and 
    \begin{equation}
        |z_2| \approx \frac{L}{L-1}\epsilon.
    \end{equation}
    
To ensure the poles are simple, we want any double roots to be outside of the annulus $A_2$. That is, we want $|z_2|< 2\epsilon$ or $|z_2|> \frac{1}{2\epsilon}$. Since $|\Delta|< \frac{L-1}{L}\epsilon$, this condition is satisfied and thus the $L$ poles in $A_2$ must be simple. 

\end{proof}

We note that the approximations in the prior two proofs can easily be made exact but present them in this way for clarity of exposition as the exact forms do not elucidate the results any further. 

By \Cref{l: AllPoles} and \Cref{l: noncontributing pole}, integrating $I(\boldsymbol{\xi};\textbf{x},\textbf{y})$ with respect to $\xi_2$ over $C_2$ by deforming the $C_{r_2}$ contour to $C_{R_2}$ yields the following residue sum
\begin{equation}\label{e: xi2 residue sum}
    \sum_{z_2\in Q(\xi_1)} \frac{u(\xi_1,z_2; x_1,x_2)\xi_1^{-y_1-1}z_2^{-y_2-1}}{\left(\xi_1^L-S(\xi_1,z_2)\right)\partial_{\xi}\left(\xi^L-S(\xi,\xi_1)\right)_{\xi=z_2}}
\end{equation}
where $Q(\xi)=\left\{ z  \mid z^L-S(z,\xi) = 0 \right\}$. Now when we integrate with respect to $\xi_1$ deforming the $C_{r_1}$ contour to $C_{R_1}$, we have to consider poles from two sources: $\xi_1^L-S(\xi_1,z_2)=0$ and $\partial_{\xi}\left(\xi^L-S(\xi,\xi_1)\right)_{\xi=z_2}=0$. We need to consider the latter source now because the roots $z_2\in Q(\xi_1)$ depend on $\xi_1$, so we introduce the possibility of higher multiplicity roots when deforming the $C_{r_1}$ contour. To take care of this we will show the following lemma, adapted from \cite{li_contour_2023} (Lemma 4.15). 

\begin{lemma}\label{l: double roots don't contribute}
    Take \Cref{a:N=2 proof assumptions} to be true. Let $q_\xi(w)=w^L-S(w,\xi)$ and suppose $\xi_0\in \mathbb{C}$ such that $\partial_{w}^j q_{\xi_0}(w)=0$ for $j=1,\ldots,k-1$ and with $w\in Q(\xi_0)$. 
    Let 
    \begin{equation}
        f(\xi) \coloneq \sum_{\lambda\in Q(\xi)} \frac{u(\xi,\lambda;\textbf{x})}{q_\xi'(\lambda)}g(\xi,\lambda;\textbf{y})
    \end{equation}
    where $g$ is analytic in a neighborhood of $\xi_0$. Then $f(\xi)$ is holomorphic on a neighborhood of $\xi_0$. 
\end{lemma}

\begin{proof}
    To show $f$ is holomorphic on a neighborhood of $\xi_0$, we must first show it is well-defined as a function on $\xi\in\mathbb{C}$. A priori the roots of $q_\xi(w)$ are defined on a branched cover of $\mathbb{C}$ since the coefficients of $q_\xi(w)$ depend on $\xi$. However, we do not need to choose branch cuts for the roots of $q_\xi(w)$, because it is a symmetric function on the roots. More specifically, choose a base point $\xi\neq\xi_0$ and let $\gamma:[0,1]\mapsto\mathbb{C}$ be a closed curve around $\xi_0$ such that $\gamma(0)=\xi$ and $\gamma(1)=\xi$. Denote the roots of $q_\xi(w)$ by $\lambda_i$ and fix a labeling of the roots. That is, $Q(\xi)=\{\lambda_1,\ldots,\lambda_L\}$. Let $\lambda_i(t)$ be the analytic continuation of $\lambda_i$ along $\gamma$. We have that $\lambda_i(0)=\lambda_i$ and $\lambda_i(1)=\lambda_{\sigma(i)}$ for some $\sigma\in S_L$. But since $f(\xi)$ is defined as a sum over the roots $\lambda_i$ with each summand independent of the other roots, we find that $f(\gamma(0))=f(\gamma(1))$. Moreover, the denominator of $f$ only vanishes for a small neighborhood of $\xi_0$. Thus $f$ is independent of the analytic continuation on $\gamma$ and well-defined on a punctured disc $B_0=\left\{\xi\mid 0<|\xi-\xi_0|<\delta_0 \right\}$ for some $\delta_0>0$ small enough. 

    Moreover, we have that $f(\xi)$ is locally holomorphic on $B_0$. So then for any closed curve $\gamma\subset B_0$ containing $\xi_0$, we can deform $\gamma$ to a small circle centered at $\xi_0$,
    \begin{equation} \label{e: double root integral}
        \oint_{\gamma}f(\xi)d\xi = \oint_{|\xi-\xi_0|=\delta}f(\xi)d\xi
    \end{equation}
     for $0<\delta<\delta_0$. Now to show that $f$ is holomorphic, we will uses Morera's theorem \cite{Gamelin_2001}, where it suffices to show the limit as $\delta\to0$ of the right hand side of \eqref{e: double root integral} is zero. Denote the $k$ segments of the circle $|\xi-\xi_0|=\delta$ in between the $k$-roots of unity by $\gamma_j(\xi_0,\delta)$ for $j=1,\ldots,k$. Moving forward we use the decomposition 
     \begin{equation}
         \oint_{|\xi-\xi_0|=\delta}f(\xi)d\xi=\sum_{j=1}^k\oint_{\gamma_j(\xi_0,\delta)} f(\xi)d\xi 
     \end{equation}
    in order to express the higher multiplicity roots of $q_\xi(w)$ as a single valued series expansion. We want to show 
    \begin{equation} \label{e: deltalimit integral}
        \lim_{\delta\to0}\oint_{\gamma_j(\xi_0,\delta)} f(\xi)d\xi=0.
    \end{equation} for all $j=1,\ldots,k$.
    Consider $\xi\in\gamma_j(\xi_0,\delta)$ for some $j\in\{1,\ldots,k\}$ and $\delta$ arbitrarily small. In this region we may fix a labeling of the roots $Q(\xi)=\{\lambda_1,\ldots,\lambda_L\}$ such that each $\lambda_i=\lambda_i(\xi)$ is a single valued holomorphic function. The function $f$ is independent of the labeling since it is symmetric with respect to the roots $\lambda_i$. The roots are pairwise distinct and may only have multiplicity higher than one when $\xi=\xi_0$. For roots of multiplicity $k\geq2$, by \Cref{l: Bethe Puiseux} we can distinguish $k$ roots by \begin{equation} \label{e: rootExpansion}
        \lambda_i=\lambda_0+\alpha_i(\xi-\xi_0)^{\frac1k} +\mathcal{O}\left((\xi-\xi_0)^{\frac2k}\right)
    \end{equation}
    for $i=1,\ldots,k$, some constant $\alpha_i\in \mathbb{C}$ and $\xi\in\gamma_j(\xi_0,
    \delta)$. Now we will show that \cref{e: deltalimit integral} holds. Note that we are only looking in a neighborhood of higher multiplicity poles. So the terms of $f$ containing $\lambda_j$ for $j>k$ are $\mathcal{O}(1)$. By \cref{e: rootExpansion} we have 
    \begin{equation}
        q_\xi'(\lambda_j) = C_j(\xi-\xi_0)^{1-\frac1k} +\mathcal{O}(\xi-\xi_0) 
    \end{equation}
    for some $C_j\in \mathbb{C}$ independent of $\xi$. Then, \begin{equation}
        |f(\xi)|\leq C|\xi-\xi_0|^{\frac1k-1}
    \end{equation}
    for some constant $C$ since $u$ and $g$ are holomorphic in a small enough neighborhood of $\xi_0$. Moreover, we have   
    \begin{equation}
        \begin{split}
            \left| \oint_{\gamma_j(\xi_0,\delta)}f(\xi)d\xi\right| \leq \oint_{\gamma_j(\xi_0,\delta)} \left| f(\xi) \right|d\xi \leq C\,\text{length}(\gamma_j)|\xi-\xi_0|^{\frac1k-1}.
        \end{split}
    \end{equation}
    And since the length of $\gamma_j(\xi_0,\delta)$ is $\mathcal{O}(|\xi-\xi_0|)$, we have that \begin{equation}
        \lim_{\delta\to0}\, \text{length}(\gamma_j)|\xi-\xi_0|^{\frac1k-1} = 0.
    \end{equation}
    This establishes \eqref{e: deltalimit integral} and together with \eqref{e: double root integral} means $\oint_\gamma f(\xi)d\xi=0$. Then by Morera's theorem the lemma is true.  
\end{proof}

\begin{lemma}\label{l: xi_1 poles simple}
    The poles from $\xi_1^L-S(\xi_1,z_2)=0$ are simple except for finitely many values of $\Delta$. 
\end{lemma}
\begin{proof}
     If the poles were not simple, we would have the following system of equations
    \begin{equation}\label{e: system of eqs nonsimple xi1}
        \begin{split}
             z_2^L-S(z_2,\xi_1)=0 \\
            \xi_1^L-S(\xi_1,z_2(\xi_1))=0 \\
            \partial_{\xi_1}\left(\xi_1^L-S(\xi_1,z_2(\xi_1)\right)=0 
        \end{split}
    \end{equation}
    The first two equations imply that $(\xi_1 z_2)^L = 1$ which means $\xi_1 z_2 = \eta_k$ where $\eta_k$ is an $L$'th root of unity. That is, $\eta_k = e^{2\pi i k/L}$ for $0\leq k\leq L-1$. We claim that for any such $k$, there are finitely many solutions to the above system of equations. Fix such a $k$.
    Using $\xi_1 z_2  = \eta_k$, we will write $\xi_1^L-S(\xi_1,z_2)=0$ and its derivative with respect to $\xi_1$ into two equations that does not involve $z_2$. Note that since $L$ is odd, $\eta_k\neq -1$.
    We first have that \begin{equation}\label{e: xi1 BE derivative}
       \partial_{\xi_1}\left(\xi_1^L-S(\xi_1,z_2(\xi_1)\right)= L\xi_1^{L-1}-\frac{\partial S(\xi_1,z_2)}{\partial \xi_1} - \frac{\partial S(\xi_1,z_2)}{\partial z_2} \frac{\partial z_2}{\partial \xi_1}=0.
    \end{equation} 
    Since $\xi_1 z_2 = \eta_k$, we have that $\frac{\partial z_2}{\partial \xi_1} = -\eta_k \xi_1^{-2}$. Substituting this into \ref{e: xi1 BE derivative} along with $z_2=\eta_k \,\xi_1^{-1}$ and some algebra, we obtain 
    \begin{equation}
        g_1(\xi_1,\Delta) \coloneq  a(\Delta)\xi_1^2 + b(\Delta)\xi_1 + c(\Delta) = 0
    \end{equation}
where 
\begin{equation}
    \begin{split}
        a(\Delta) &=-2\Delta(L(1+\eta_k) + 1) \\ 
        b(\Delta)&= (L(1+\eta_k)^2 + (2\Delta)^2(L+1)\eta_k) \\
        c(\Delta) &= \eta_k(1-2\Delta(L+1)\eta_k). 
    \end{split}
\end{equation}
Substituting $z_2 = \eta_k \,\xi_1^{-1}$ into $\xi_1^L-S(\xi_1,z_2)=0$ and simplifying, we obtain 
\begin{equation}
    g_2(\xi_1,\Delta) \coloneq (1+\eta_k)\xi_1^L -2\Delta\eta_k \,\xi_1^{L-1}-2\Delta \xi_1 +1+\eta_k=0 
\end{equation}
Now the common roots of $g_1$ and $g_2$ are precisely solutions to our system of equations \eqref{e: system of eqs nonsimple xi1} along with $\xi_1 z_2=\eta_k$. So it suffices to show that $g_1$ and $g_2$ have a finite number of common roots $(\xi_1,\Delta)$. To do this we consider $g_1,g_2$ as polynomials in $\xi_1$ with coefficients in $\mathbb{C}[\Delta]$, and we will use facts about resultants to bound the number of roots. Specifically, the resultant $\text{Res}_\Delta(g_1,g_2)$ gives a polynomial in $\Delta$ whose roots are precisely the $\Delta$ values where $g_1$ and $g_2$ have common roots \cite{Cox_Little_OShea_1998}. This means $g_1$ and $g_2$ can only have common roots for infinitely many values of $\Delta$ if the resultant is identically zero. Then checking that $\text{Res}_\Delta(g_1,g_2)\neq 0$ for a specific value of $\Delta$ is sufficient to show $g_1$ and $g_2$ have shared roots for finitely many values of $\Delta$. Pick $\Delta=0$. Then 
\begin{equation}
    g_1(\xi_1) = L(1+\eta_k)^2\xi_1 + \eta_k
\end{equation}
and 
\begin{equation}
    g_2(\xi_1) = (1+\eta_k)\xi_1^L+1+\eta_k.
\end{equation}

From \cite{Cox_Little_OShea_1998}, we can compute the resultant explicitly as 
\begin{equation}
    \text{Res}_{\Delta=0}(g_1,g_2)=\left(L(1+\eta_k)^2\right)^L g_2\left(\frac{-\eta_k}{L(1+\eta_k)^2}\right)
\end{equation}
which simplifies to 
\begin{equation}
    \text{Res}(g_1,g_2) = (1+\eta_k)\left(L^L(1+\eta_k)^{2(L-1)}-1\right).
\end{equation}
Now we show that $\text{Res}(g_1,g_2)\neq0$. We begin by writing 
\begin{equation}\label{e: Resultant eqn zero}
    L^L(1+\eta_k)^{2(L-1)}-1=L^L\left(1+e^{2\pi i k /L}\right)^{2(L-1)}-e^{2\pi i}.
\end{equation}
Since
\begin{equation}
    \begin{split}
        1+e^{i\theta} &= e^{i\theta/2}\left(e^{-i\theta/2} + e^{i\theta/2} \right) \\
        &= e^{i\theta/2}(2\cos(\theta/2)),
    \end{split}
\end{equation}
we may write 
\begin{equation}
    L^L\left(1+e^{2\pi i k /L}\right)^{2(L-1)}-e^{2\pi i}=L^L\left(2\cos(k\, \pi/L) \right)^{2(L-1)}e^{\frac{2\pi(L-1)k}{L} i}-e^{2\pi i}. 
\end{equation}
For the above to be zero, $L$ would have to divide $(L-1)k$. Since $L$ and $L-1$ are coprime, $L$ would have to divide $k$. We know that $0\leq k\leq L-1$, so $k=0$ is the only value which $L$ divides. But for $k=0$, $\eta_0=1$ and our left hand side of \Cref{e: Resultant eqn zero} becomes $L^L\,2^{2(L-1)}$ which is strictly greater than $1$. This means \Cref{e: Resultant eqn zero} cannot be zero and thus $\text{Res}(g_1,g_2)\neq0$ for $\Delta=0$. Therefore the resultant of $g_1$ and $g_2$ is not identically zero and $g_1$ and $g_2$ share roots only for a finite number of $\Delta$ values.
\end{proof}

\begin{lemma} \label{l: contour to residue sums}
Take \Cref{a:N=2 proof assumptions} to be true. Then
    \begin{equation}
        \begin{split}
            &\oint_C d\xi_1 \oint_C d\xi_2 \ \frac{\xi_1^{-y_1-1}\xi_2^{-y_2-1}u(\xi_1,\xi_2; x_1,x_2)}{\left( \xi_1^L-S(\xi_1,\xi_2) \right)\left( \xi_2^L-S(\xi_2,\xi_1) \right)} \\
            &= \sum_{\substack{(z_1,z_2):\\ z_2^L-S(z_2,z_1) =0
            \\ z_1^L-S(z_1,z_2)=0}} \frac{z_1^{-y_1-1}z_2^{-y_2-1}u(z_1,z_2; x_1,x_2)}{\partial_{\xi}\left[\xi^L-S(\xi,z_2(\xi))\right]_{\xi=z_1}\partial_{\xi}\left[\xi^L-S(\xi,z_1)\right]_{\xi=z_2(z_1)}} 
        \end{split}
    \end{equation}
\end{lemma}
\begin{proof}
    By \Cref{l: AllPoles} and \Cref{l: noncontributing pole}, there are $L$ simple poles in $A_2$ from $\xi_2^L-S(\xi_2,\xi_1)$. By \Cref{l: double roots don't contribute}, $I(\boldsymbol{\xi};\textbf{x},\textbf{y})$ is holomorphic in a neighborhood of any higher multiplicity roots of $\xi_2^L-S(\xi_2,\xi_1)$, and thus the only contributing poles with respect to $\xi_1$ are from $\xi_1^L-S(\xi_1,z_2)=0$ which are simple by \Cref{l: xi_1 poles simple}. Thus the lemma holds via residues of simple poles.  
\end{proof}

\begin{lemma} \label{l: multipart sums}

    \begin{equation}
    \begin{split}
        &\sum_{\substack{(z_1,z_2):\\ z_2^L-S(z_2,z_1) =0
            \\ z_1^L-S(z_1,z_2)=0}} \frac{z_1^{-y_1-1}z_2^{-y_2-1}u(z_1,z_2; x_1,x_2)}{\partial_{\xi}\left[\xi^L-S(\xi,z_2(\xi))\right]_{\xi=z_1}\partial_{\xi}\left[\xi^L-S(\xi,z_1)\right]_{\xi=z_2(z_1)}} \\
            &= \sum_{\substack{(z_1,z_2):\\ z_2^L-S(z_2,z_1) =0
            \\ z_1^L-S(z_1,z_2)=0}} \frac{z_1^{-y_1-1}z_2^{-y_2-1}u(z_1,z_2; x_1,x_2)}{\det \Lambda(z_1,z_2)}
    \end{split}
    \end{equation}
   
\end{lemma}

To show the above lemma we will need the following.
\begin{lemma} \label{l: denominator to lambda determinant}
    Denote $B(x,y) = Lx^{-1}S(x,y) - \frac{\partial S(x,y)}{\partial x}$. Then,
    \begin{equation}
        B(z_1,z_2)B(z_2,z_1) - \frac{\partial S(z_1,z_2)}{\partial z_2}\frac{\partial S(z_2,z_1)}{\partial z_1} = \det \Lambda(z_1,z_2)
    \end{equation}
\end{lemma}

\begin{proof}
    \begin{equation}
        \begin{split}
            &B(z_1,z_2)B(z_2,z_1) - \frac{\partial S(z_1,z_2)}{\partial z_2}\frac{\partial S(z_2,z_1)}{\partial z_1} \\
            &= \left( Lz_1^{-1}S(z_1,z_2) - \frac{\partial S(z_1,z_2)}{\partial z_1}\right)\left(Lz_2^{-1}S(z_2,z_1) - \frac{\partial S(z_2,z_1)}{\partial z_2} \right) - \frac{\partial S(z_1,z_2)}{\partial z_2}\frac{\partial S(z_2,z_1)}{\partial z_1} \\
            &= S(z_1,z_2)S(z_2,z_1)\left( Lz_1^{-1} - \frac{\partial S(z_1,z_2)}{\partial z_1}S(z_2,z_1)\right)\left(Lz_2^{-1} - \frac{\partial S(z_2,z_1)}{\partial z_2}S(z_1,z_2) \right) - \frac{\partial (S(z_2,z_1)^{-1})}{\partial z_2}\frac{\partial (S(z_1,z_2)^{-1})}{\partial z_1} \\ 
            &=\left( Lz_1^{-1} - \frac{\partial S(z_1,z_2)}{\partial z_1}S(z_2,z_1)\right)\left(Lz_2^{-1} - \frac{\partial S(z_2,z_1)}{\partial z_2}S(z_1,z_2) \right)\\
            &\hspace{20mm} - \frac{1}{S(z_2,z_1)^2S(z_1,z_2)^2}\frac{\partial S(z_2,z_1)}{\partial z_2}S(z_1,z_2)S(z_2,z_1)\frac{\partial S(z_1,z_2)}{\partial z_1} \\
            &=\left( Lz_1^{-1} - \frac{\partial S(z_1,z_2)}{\partial z_1}S(z_2,z_1)\right)\left(Lz_2^{-1} - \frac{\partial S(z_2,z_1)}{\partial z_2}S(z_1,z_2) \right) - \frac{\partial S(z_1,z_2)}{\partial z_1}S(z_2,z_1)\frac{\partial S(z_2,z_1)}{\partial z_2}S(z_1,z_2) \\
            &= \det \Lambda(z_1,z_2). 
        \end{split}
    \end{equation}
\end{proof}

\begin{proof}[Proof of \Cref{l: multipart sums}]
    We begin with the terms in the denominators. 
    \begin{equation}
        \partial_\xi(\xi^L-S(\xi,z_2(\xi))_{\xi=z_1} = Lz_1^{L-1} - \frac{\partial S(z_1,z_2)}{\partial z_2}\frac{\partial z_2}{\partial z_1} - \frac{\partial S(z_1,z_2)}{\partial z_1}
    \end{equation}
    To obtain $\frac{\partial z_2}{\partial z_1}$ we differentiate $z_2^L-S(z_2,z_1)=0$ with respect to $z_1$. We have
    \begin{equation}
        Lz_2^{-1}S(z_2,z_1)\frac{\partial z_2}{\partial z_1}-\frac{\partial S(z_2,z_1)}{\partial z_2} \frac{\partial z_2}{\partial z_1}-\frac{\partial S(z_2,z_1)}{\partial z_1} = 0
    \end{equation}
    and so 
    \begin{equation}
        \frac{\partial z_2}{\partial z_1} = \left(\frac{\partial S(z_2,z_1)}{\partial z_1}\right)/\left(Lz_2^{-1}S(z_2,z_1)-\frac{\partial S(z_2,z_1)}{\partial z_2} \right) = \left(\frac{\partial S(z_2,z_1)}{\partial z_1}\right)/B(z_2,z_1).
    \end{equation}
    Now we have 
    \begin{equation} \label{e: z1 BE first deriv}
        \begin{split}
            \partial_\xi(\xi^L-S(\xi,z_2(\xi))_{\xi=z_1} &= Lz_1^{-1}S(z_1,z_2) - \frac{\frac{\partial S(z_1,z_2)}{\partial z_2}\frac{\partial S(z_2,z_1)}{\partial z_1}}{B(z_2,z_1)} - \frac{\partial S(z_1,z_2)}{\partial z_1} \\
            &= B(z_1,z_2) - \frac{\frac{\partial S(z_1,z_2)}{\partial z_2}\frac{\partial S(z_2,z_1)}{\partial z_1}}{B(z_2,z_1)}.
        \end{split}
    \end{equation}
    Our other term in the denominator is
    \begin{equation}
        \partial_\xi (\xi^L-S(\xi,z_1))_{\xi=z_2(z_1)} = Lz_2^{-1}S(z_2,z_1) - \frac{\partial S(z_2,z_1)}{\partial z_2} = B(z_2,z_1). 
    \end{equation}
    So then 
    \begin{equation}
       (\partial_\xi(\xi^L-S(\xi,z_2(\xi))_{\xi=z_1}) (\partial_\xi (\xi^L-S(\xi,z_1))_{\xi=z_2(z_1)}) = B(z_1,z_2)B(z_2,z_1) - \frac{\partial S(z_1,z_2)}{\partial z_2}\frac{\partial S(z_2,z_1)}{\partial z_1}.
    \end{equation}
    Thus, by \eqref{l: denominator to lambda determinant}, 
    \begin{equation}
        (\partial_\xi(\xi^L-S(\xi,z_2(\xi))_{\xi=z_1}) (\partial_\xi (\xi^L-S(\xi,z_1))_{\xi=z_2(z_1)})  = \det\Lambda(z_1,z_2). 
    \end{equation}
\end{proof}

To finish showing \Cref{l: Bethe function expansion}, we must establish the last equality.
\begin{lemma}\label{l:last Bethe expansion equality}
    \begin{equation}
    \sum_{\substack{(z_1,z_2):\\ z_2^L-S(z_2,z_1)=0\\ z_1^L-S(z_1,z_2)=0}} \frac{z_1^{-y_1-1}z_2^{-y_2-1}u(z_1,z_2;x_1,x_2)}{\det \Lambda(z_1,z_2)} = \sum_{(z_1,z_2)\in\Xi} \ell(\textbf{y},\textbf{z})u(\textbf{z},\textbf{x}).
\end{equation}
\end{lemma}

\begin{proof}
Symmetrizing the summand on the left hand side we have

\begin{equation}
    \sum_{\substack{(z_1,z_2):\\ z_2^L-S(z_2,z_1)=0\\ z_1^L-S(z_1,z_2)=0}} \frac{z_1^{-y_1-1}z_2^{-y_2-1}u(z_1,z_2;x_1,x_2)}{\det \Lambda(z_1,z_2)} = \sum_{(z_1,z_2)\in\Xi} \sum_{\sigma\in S_2} \frac{u(\sigma\cdot\textbf{z};\textbf{x})}{z_{\sigma(1)}^{y_1+1}z_{\sigma(2)}^{y_2+1} \det(\Lambda(z_1,z_2))}.
\end{equation}

For the only nontrivial $\sigma\in S_2$, we have \begin{equation}
\begin{split}
    u(\sigma\cdot\textbf{z};\textbf{x}) &= z_2^{x_1}z_1^{x_2}-\frac{1+z_1z_2-2\Delta z_1}{1+z_1z_2-2\Delta z_2} z_1^{x_1}z_2^{x_2} \\
    &=-\frac{1+z_1z_2-2\Delta z_1}{1+z_1z_2-2\Delta z_2} u(\textbf{z};\textbf{x}).
\end{split}
\end{equation}

So then we have 
\begin{equation}
\begin{split}
    \sum_{(z_1,z_2)\in\Xi} \sum_{\sigma\in S_2} \frac{u(\sigma\cdot\textbf{z};\textbf{x})}{z_{\sigma(1)}^{y_1+1}z_{\sigma(2)}^{y_2+1} \det(\Lambda(z_1,z_2))} &= \sum_{(z_1,z_2)\in\Xi}\left[\sum_{\sigma\in S_2}\frac{u(z_1,z_2;x_1,x_2)}{A_\sigma (z_1,z_2)z_{\sigma(1)}^{y_1+1}z_{\sigma(2)}^{y_2+1} \det(\Lambda(z_1,z_2))}\right] \\
    &= \sum_{(z_1,z_2)\in \Xi} \ell(\textbf{y},\textbf{z})u(\textbf{z};\textbf{x}).
    \end{split}
\end{equation}
\end{proof}
Thus by \Cref{l: contour to residue sums}, \Cref{l: multipart sums} and \Cref{l:last Bethe expansion equality}, we have shown \Cref{l: Bethe function expansion}.

\subsubsection{Puiseux Expansion of Bethe polynomials}
\begin{lemma}\label{l: puiseux expansion}
    Let $K\in\mathbb{N}_{\geq 2}$. For a polynomial $F\in\mathbb{C}[x,y]$ assume that $(x_0,y_0)$ is a root such that $\partial_{x}^{k-1}F(x_0,y_0)=0$ for $k=1,\ldots,K$ and $\partial_{x}^{K}F(x_0,y_0)\neq0$. If $\partial_{y}F(x_0,y_0)\neq0$, then $F$ has the Puiseux expansion \begin{equation}
        x-x_0=\alpha(y-y_0)^{1/K}+O\left((y-y_0)^{2/K}\right).
    \end{equation}
\end{lemma}
\begin{proof}
Let $(\overline{x},\overline{y})$ be local coordinates given by $\overline{x}=x-x_0$ and $\overline{y}=y-y_0$. In these coordinates, we can write $F$ as 
\begin{equation}
    F(\overline{x},\overline{y})=\sum_{j=0}^{d_{\overline{x}}}h_j(\overline{y})\overline{x}^j
\end{equation}
where $d_{\overline{x}}$ is the degree of $\overline{x}$ in $F$ and 
\begin{equation}            h_j(\overline{y})=\sum_{m=0}^{d_{\overline{y}}^j}C_m^j\overline{y}^m
\end{equation}
where $d_{\overline{y}}^j$ is the degree of $\overline{y}$ as a coefficient of $\overline{x}^j$. Then \begin{equation}\partial_{\overline{x}}^{k-1}F(0,0)=h_{k-1}(0)=C_{0}^{k-1}=0\end{equation} and \begin{equation}\partial_{\overline{x}}^{K}F(0,0)=h_{K}(0)=C_{0}^{K}\neq0.\end{equation} Moreover, we have \begin{equation}
    \partial_{\overline{y}}F(\overline{x},\overline{y})=\sum_{j=0}^{d_{\overline{x}}}h'_j(\overline{y})\overline{x}^j 
\end{equation}
and 
\begin{equation}
      \partial_{\overline{y}}F(0,\overline{y})=h'_0(\overline{y}).
\end{equation}
Now suppose that $\partial_{\overline{y}}F(0,0)=C_1^0\neq0$. Then the Newton Polygon of $F(\overline{x},\overline{y})$ near $(\overline{x},\overline{y})=(0,0)$ consists of a single line segment from $(0,1)$ to $(K-1,0)$. So then the Puiseux expansion \cite{Casas-Alvero_2000} of $F(\overline{x},\overline{y})$ at $(0,0)$ is \begin{equation}
    \overline{x}=\alpha  \overline{y}^{1/K} +O\left(\overline{y}^{2/K}\right). 
\end{equation}
for some nonzero $\alpha\in\mathbb{C}$.
Then in terms of $x$ and $y$ near $(x_0,y_0)$, we have \begin{equation}
    x-x_0=\alpha (y-y_0)^{1/K}+O\left((y-y_0)^{2/K}\right).
\end{equation}
\end{proof}

\begin{lemma}\label{l: Bethe Puiseux}
    For $\Delta\neq \cos(2\pi k/(L-2))$, $k\in[L-2]$, the polynomial \begin{equation}
        q_\xi(w) = 1-2\Delta w+w^L+(w-2\Delta w^L + w^{L+1})\xi
    \end{equation}
    has a Puiseux expansion of $\lambda=\lambda_0+\alpha(\xi-\xi_0)^{1/k}$ near roots $\lambda_0\in Q(\xi_0)$ of multiplicity $k$, where $\alpha\in\mathbb{C}\setminus\{0\}$.  
\end{lemma}
\begin{proof}
By \Cref{l: puiseux expansion}, checking that there are no solutions to the system of equations $q_\xi(w)=0$ and $\partial_\xi q_\xi(w)=0$ is sufficient. Writing this system explicitly we have
\begin{equation}
    \begin{split}
        1-2\Delta w + w^L+\left(w-2\Delta w^L+ w^{L+1}\right)\xi =0  \\
        w - 2\Delta w^L + w^{L+1}=0
    \end{split}
\end{equation}
which implies the following two equations 
\begin{equation}
    \begin{split}
        1-2\Delta w + w^L=0 \\ 
        1-2\Delta w^{L-1}+ w^L=0.
    \end{split}
\end{equation}
We get equality between these two equations when $w^{L-1}=w$. However substituting $w$ for $w^{L-1}$ in the above equations gives us 
\begin{equation}
    w=\Delta\pm\sqrt{\Delta^2-1}.
\end{equation}
If $w^{L-1}=w$, then $w=0$ or $w=e^{2\pi i k/(L-2)}$ for $k\in[L-2]$. It is clear that $w=0$ is not possible. For $0<|\Delta|<\frac{L-1}{L}\frac12$, we have $w = \Delta \pm i\sqrt{1-\Delta^2}$. Then for $w=e^{2\pi i k/(L-2)}$, we would need $\Delta = \cos(2\pi k/(L-2))$\footnote{Interestingly, these are zeroes of the Chebyshev polynomials of the first kind. Similar values, zeroes of the Cehbyshev polynomials of the second kind, are presented in \cite{Gutkin2000-gr} as critical values of $\Delta$ in the case of the $XXZ$ spin-1/2 chain on $\mathbb{Z}$.}, which is false by assumption. So a $w$ such that $w^{L-1}=w$ is not possible in this setting, and thus $\partial_{\xi}q_\xi(w)\neq 0$. Hence the conditions of \Cref{l: puiseux expansion} are satisfied, and we are done. 
\end{proof}

\section{Conclusion}

We have developed a constructive coordinate--energy correspondence for the periodic Heisenberg--Ising XXZ spin-$\tfrac12$ chain.  While the coordinate Bethe Ansatz gives the transformation from particle configurations to Bethe eigenvectors, our inverse coordinate Bethe Ansatz formula provides a candidate transformation in the opposite direction.  We proved this inverse formula at the free-fermion point $\Delta=0$ for arbitrary particle number and, under the stated nondegeneracy and small-anisotropy assumptions, for the two-particle sector.  The numerical verifications for additional values of $N$, $L$, and $\Delta$ provide further evidence for the general conjecture.  Whenever the inverse formula holds, the Bethe vectors form a complete basis, yielding an explicit finite-volume solution of the Schr\"odinger equation and exact formulas for configuration probabilities.

Using this spectral representation, we also derived a determinant-based formula for the one-point occupation probability.  The simplification, obtained through identities related to the Izergin--Korepin determinant, replaces the original sum over configurations and permutations by a more structured expression that is better suited to exact computation and asymptotic analysis.  The main remaining problem is to prove the inverse transformation for general particle number and a broader range of anisotropy parameters, including a precise characterization of the exceptional values and possible degeneracies of the Gaudin-type determinant.  Such a result would place completeness, finite-volume dynamics, and observable formulas within a single explicit framework and could provide a starting point for studying large-scale spin transport and possible KPZ-type fluctuations on the ring.

\section*{Acknowledgments}
The authors thank and acknowledge the students from OSU's URSA Engage program: Mohammad Faks, Christie Chang, Johannes Huurman, Crystal Lee, and Mason Spears for their contributions to the code implementation. The authors also acknowledge the OSU's College of Science \textit{SciRIS Stage 2 and 3} grant, OSU's \textit{URSA Engage} program, and the Simons Foundation \textit{Pivot Fellowship}. 

\printbibliography
\end{document}